\pdfoutput=1
\documentclass[reqno,oneside,letterpaper, 12pt]{amsart}
\usepackage{mathrsfs}
\usepackage[foot]{amsaddr}
\usepackage{amssymb}
\usepackage{wasysym}

\usepackage[utf8]{inputenc}
\usepackage{amsmath, amssymb,bm, cases, mathtools, thmtools}
\usepackage{verbatim}
\usepackage{graphicx}\graphicspath{{figures/}}
\usepackage{multicol}
\usepackage{tabularx}
\usepackage[usenames,dvipsnames]{xcolor}
\usepackage{mathrsfs}
\usepackage{url}
\usepackage[normalem]{ulem}
\usepackage{xstring}
\usepackage[shortlabels]{enumitem}

\usepackage[%
    minnames=1,maxnames=99,maxcitenames=3,
    style=authoryear,
    doi=false,url=false,
    firstinits=true,hyperref,natbib,backend=bibtex]{biblatex}
\renewbibmacro{in:}{%
  \ifentrytype{article}{}{\printtext{\bibstring{in}\intitlepunct}}}
\bibliography{biblio}

\usepackage[colorlinks,citecolor=blue,urlcolor=blue,linkcolor=RawSienna]{hyperref}
\usepackage{hypernat}
\usepackage{datetime}


\DeclareMathAlphabet\EuRoman{U}{eur}{m}{n}
\SetMathAlphabet\EuRoman{bold}{U}{eur}{b}{n}


\usepackage{euscript,microtype}

\usepackage[capitalize]{cleveref}

\crefname{assumption}{Assumption}{Assumptions}
\crefname{claim}{Claim}{Claims}

\makeatletter
\let\reftagform@=\tagform@
\def\tagform@#1{\maketag@@@{\ignorespaces\textcolor{gray}{(#1)}\unskip\@@italiccorr}}
\renewcommand{\eqref}[1]{\textup{\reftagform@{\ref{#1}}}}
\makeatother

\setlength{\marginparwidth}{1.25in}







\definecolor{WowColor}{rgb}{.75,0,.75}
\definecolor{SubtleColor}{rgb}{0,0,.50}



\newcounter{margincounter}

\declaretheorem[style=plain,numberwithin=section,name=Theorem]{theorem}
\declaretheorem[style=plain,sibling=theorem,name=Lemma]{lemma}

\declaretheorem[style=plain,sibling=theorem,name=Claim]{claim}

\declaretheorem[style=definition,sibling=theorem,name=Definition]{definition}

\declaretheorem[style=definition,sibling=theorem,name=Example]{example}
\declaretheorem[style=remark,sibling=theorem,name=Remark]{remark}

\crefformat{conditionINNER}{#2(#1)#3}
\crefmultiformat{conditionINNER}
  {(#2#1#3)}
  { and~(#2#1#3)}
  {, (#2#1#3)}
  { and~(#2#1#3)}
\Crefformat{conditionINNER}{Condition~#2(#1)#3}
\Crefmultiformat{conditionINNER}
  {Conditions~(#2#1#3)}
  { and~(#2#1#3)}
  {, (#2#1#3)}
  { and~(#2#1#3)}

%
%

\declaretheoremstyle[
    spaceabove=-6pt,
    spacebelow=6pt,
    headfont=\normalfont\bfseries,
    bodyfont = \normalfont,
    postheadspace=1em,
    qed=$\square$,
    headpunct={{}}]{myproofstyle}

\numberwithin{equation}{section}
\numberwithin{theorem}{section}



\usepackage{amssymb}
\usepackage{mathrsfs}
\usepackage{leftidx}

\usepackage{setspace}
\usepackage{thmtools}
\usepackage{geometry}
\usepackage{enumitem}
\usepackage{caption}
\usepackage[colorinlistoftodos]{todonotes}

\def\[#1\]{\begin{align}#1\end{align}}
\def\*[#1\]{\begin{align*}#1\end{align*}}




\newcommand{\Reals}{\mathbb{R}}

\newcommand{\NNReals}{\Reals_{\ge 0}}

\newcommand{\conv}{\textrm{conv}}

\DeclareMathOperator*{\newlim}{\mathrm{lim}\vphantom{\mathrm{infsup}}}

\DeclareMathOperator*{\newsup}{\mathrm{sup}\vphantom{\mathrm{infsup}}}
\renewcommand{\lim}{\newlim}

\renewcommand{\sup}{\newsup}









\newcommand{\PowerSet}{\mathscr{P}}

\newcommand{\cA}{\mathcal{A}}

\newtheorem{open problem}{Open Problem}

\newcommand{\interior}[1]{%
  {\kern0pt#1}^{\mathrm{o}}%
}

\newcommand{\refproof}[1]{See \cref{#1} for \IfSubStr{#1}{,}{proofs}{a proof}. }

\newif\iflongform
\longformtrue

\iflongform

\else

\fi


\renewcommand\thmcontinues[1]{Continued}

\makeatletter
\providecommand*{\toclevel@definition}{0}
\providecommand*{\toclevel@theorem}{0}
\providecommand*{\toclevel@lemma}{0}
\makeatother
\usepackage{leftidx}

\makeatletter
\patchcmd{\@maketitle}
  {\ifx\@empty\@dedicatory}
  {\ifx\@empty\@date \else {\vskip3ex \centering\footnotesize\@date\par\vskip1ex}\fi
   \ifx\@empty\@dedicatory}
  {}{}
\patchcmd{\@maketitle}   
  {\ifx\@empty\@date\else \@footnotetext{\@setdate}\fi}
  {}{}{}
\makeatother

\newenvironment{customthm}[1]
{\innercustomthm}
{\endinnercustomthm}

\newenvironment{customexp}[1]
{\innercustomexp}
{\endinnercustomexp}

\newenvironment{customcor}[1]
{\innercustomcor}
{\endinnercustomcor}


\title[]{General Equilibrium Theory for Climate Change}

\author[R.~M.~Anderson]{Robert M.~Anderson $^{1}$}
\address{$^{1}$ University of California, Berkeley, Department of Economics}
\email{robert.anderson@berkeley.edu}
 
\author[H.~Duanmu]{Haosui Duanmu $^{2}$}
\address{$^{2}$ Institute for Advanced Study in Mathematics, Harbin Institute of Technology}
\email{duanmuhaosui@hotmail.com}

\date{September 8, 2023}


%



\newcommand{\cto}{\twoheadrightarrow}

\newcommand{\cO}{\mathcal{O}}

\newcommand{\NegReals}{\mathbb{R}_{\leq 0}}

\newcommand*{\argmax}{\operatornamewithlimits{arg max}\limits}

\geometry{left=1in,right=1in,top=1in,bottom=1in}

\setstretch{1.4}

\begin{document}

\thanks{This paper grew out of work, joint with M. Ali Khan and Metin Uyanik, concerning externalities in production economies with a continuum of consumers. We are grateful to John Geanakoplos, Aniruddha Ghosh, Felix Kubler, Michael Mandler, Herakles Polemarchakis, Max Stinchcombe, and to participants in a seminar at Zhejiang University, the NSF/CEME Decentralization Conference at the University of Michigan, the 17th Annual Cowles Conference on General Equilibrium and its Applications at Yale, a seminar at Macau University of Science and Technology, the William R. Zame Conference at UCLA, 22nd annual SAET Conference in Paris, and the 2023 Asian Meeting of the Econometric Society at Nanyang University of Technology; for comments that greatly improved our results. We are also grateful to Tian Luo for carefully reading the paper, and to Yahui Zhang for producing the figures in the manuscript.  Anderson gratefully acknowledges financial support from Swiss Re through the Consortium for Data Analytics in Risk.}




\maketitle

\begin{abstract}

We propose two general equilibrium models, quota equilibrium and emission tax equilibrium. The government specifies quotas or taxes on emissions, then refrains from further action. Quota equilibrium exists; the allocation of emission property rights strongly impacts the distribution of welfare. If the only externality arises from total net emissions, quota equilibrium is constrained Pareto Optimal.  Every quota equilibrium can be realized as an emission tax equilibrium and vice versa. However, for certain tax rates, emission tax equilibrium may not exist, or may exhibit high multiplicity. Full Pareto Optimality of quota equilibrium can often be achieved by setting the right quota.

\end{abstract}

\section{Introduction}

The mitigation of climate change requires the reduction of greenhouse gas emissions.  In the absence of governmental regulation, market mechanisms have proven insufficient to achieve the necessary reduction.  Various regulatory schemes---notably ``carbon taxes'' and cap and trade---have been proposed. However, these regulatory schemes have yet to be incorporated into a general equilibrium model of the generality and rigor of the Arrow-Debreu Model \citep{ad54}.  In this paper, we do so.

\subsection{Regulatory Schemes Used in Practice}
Three principal schemes are used in practice to regulate greenhouse gas emissions.  In each case, the government first selects certain commodities whose emissions will be regulated.\footnote{For example, carbon dioxide $\mathrm{CO}_{2}$, methane $\mathrm{CH}_4$, and chlorofluorocarbons (CFCs).}  ``Cap and trade'' is a scheme in which each private firm is assigned an emissions quota; this quota is a property right, namely the right to emit regulated commodities up to the quota.  A firm may exceed its emissions quota, but only provided that it buys the right to do so from another firm whose emissions fall short of its quota. Our notion of quota equilibrium generalizes cap and trade by allowing the government to assign itself a quota. Quota equilibrium thus includes two important polar cases: cap and trade, in which all the quota is assigned to private firms, and global quota equilibrium, in which all of the quota is assigned to the government and none is assigned to private firms. 

An alternative to cap-and-trade is a {\em fixed quota} model, in which the government prohibits each private firm from exceeding its emissions quota.  Fixed quotas are used in the regulation of local forms of pollution, which primarily affect a small geographic region. For example, fixed quotas limit the discharge of toxic chemicals from a given plant into ground water immediately adjacent to that plant.

An {\em emissions tax} (commonly called a ``carbon tax'') is a tax per unit of emissions of a regulated commodity.\footnote{The term ``carbon tax'' is somewhat imprecise, and we use the term emissions tax throughout.}  A {\em fuel tax} is a tax $t$ per unit of a fuel, such as gasoline. Note that an emissions tax is {\em the} price of emitting the commodity into the atmosphere; the Walrasian auctioneer has no flexiblity to alter that price in order to equilibrate supply and demand. By contrast, a fuel tax is an add-on tax; if a consumer $\omega$ buys a gallon of gasoline from firm $j$, $\omega$ pays a price $p+t$; the firm remits $t$ to the government, and retains revenue $p$, so there is a wedge of $t$ between the price the consumer pays and the price the firm receives. Even though $t$ is fixed, market forces can cause $p$ to vary.

\subsection{Intuition from Partial Equilibrium Analysis}

Cap and trade, emissions taxes, and fuel taxes, have been extensively studied in {\em partial equilibrium}.\footnote{See, e.g. \citet{Barnett80}, \citet{fischer03}, \citet{Ciaian06} and \citet{Fla09}.
In addition, \citet{sandmo75} points out that ``most of the literature on optimal taxation is partial equilibrium in nature and does not contain satisfactory treatment of the benefits resulting from control of externalities."}  
It is important to note that, unlike local pollution such as particulates, greenhouse gases dissipate throughout the atmosphere, and are quite persistent. Hence, the amount of warming depends only on the integral of past emissions (discounted only slightly for older emissions, because of the persistence), {\em regardless of where they were emitted.} 
Given this, it is intuitive in partial equilibrium that cap and trade and emissions taxes are superior to fixed quotas for the regulation of greenhouse gases, for two reasons:
\begin{itemize}
    \item Cap and trade and emission taxes provide incentives for emission reductions to occur with the least reduction in economic output. 
    \item Emissions taxes and cap and trade can suffice to achieve the same emission reductions, but with different distributional consequences.\footnote{Because the property rights allocated to firms in cap and trade increase firm profits, and thereby benefit shareholders (typically wealthier than non-shareholders), while emission taxes can be rebated equally to all consumers, or disproportionately to low income consumers, it is argued that emissions taxes should result in less inequality than cap and trade.}
\end{itemize}

However, these partial equilibrium arguments do not have a rigorous foundation, for three fundamental reasons:
\begin{enumerate}
    \item Once the government sets the quotas or taxes, and the rebate scheme, partial equilibrium arguments cannot guarantee that an equilibrium price exists. This concern is underscored by \cref{taxequipbexample} and \cref{addonnoexst} showing that emission tax and fuel tax equilibria in fact {\em need not exist.}  This is almost obvious in the case of emissions tax equilibrium. The emissions tax rate set by the government freezes the price of emissions, so there are more commodity markets for the Walrasian auctioneer to equilibrate than there are prices  that the Walrasian auctioneer can adjust.  In the case of fuel tax equilibrium, the fixed tax is added to the variable price of the commodity, so the number of commodity markets equals the number of prices that the Walrasian auctioneer can adjust, but {\em equilibrium may still fail to exist.}  
    \item Formulating the notion of Pareto Optimality requires the definition of the set of all feasible consumption-production pairs, but these cannot be defined in partial equilibrium.  Thus, it is impossible to rigorously establish welfare results through partial equilibrium arguments.
    \item Introducing a quota or tax, even on a limited set of commodities, can result in substantial changes throughout the economy, including in sectors that appear to have little to do with the regulated commodities.  These changes are not predictable through partial equilibrium arguments. This concern is underscored by \cref{taxequipbexample} showing that, due to multiplicity of equilibria, it {\em may be impossible to set an emissions tax rate so that all emissions equilibria result in emissions falling below a given target.}
\end{enumerate}
One of the major contributions of this paper is to make the intuitions from partial equilibrium theory rigorous and precise by embedding cap and trade and emissions taxes in a general equilibrium framework, at the rigor and generality of the Arrow-Debreu Model.

\subsection{Pigouvian Taxation}
In the presence of externalities, the social cost of a production or consumption activity does not equal the private cost faced by a firm or agent, and hence the equilibrium need not satisfy the first order conditions for Pareto Optimality.  \citet{pigou20} proposed levying an add-on tax equal to the difference between the social and private costs, in order to restore the first order conditions. There is a considerable literature on Pigouvian taxation through partial equilibrium analysis. We simply note that the act of levying a tax results in a new equilibrium with different prices, productions and consumptions, and these changes occur throughout the economy, not just in the sector(s) in which the tax is levied.  At the new equilibrium, the marginal costs and benefits are different, and while the partial equilibrium analysis suggests that the tax ought to improve outcomes, there is no reason to think that the same tax will restore the first order conditions for Pareto optimality at the new equilibrium.\footnote{\citet{Goudler03} point out that there is a substantial bias from ignoring general equilibrium effects in estimating excess burden.}

\citet{sandmo75} initiates a line of research on optimal taxation with externalities through general equilibrium analysis. Subsequent literature applies \citet{sandmo75}'s framework to environmental policy, see e.g. \citet{bovenberg94jpe}, \citet{bovenberg94aer}, \citet{bovenberg96}, \citet{Goudler03}, \citet{metcalf03}, \citet{Golosov14}, \citet{Goudler16}, and \citet{Goudler19}. \citet{sandmo75} and the subsequent authors\footnote{\citet{Goudler19} depart from Sandmo's framework by combining two models. The first is a representative agent model that is used to determine the prices of commodities, both final consumption goods and the inputs to production. The second model has heterogeneous agents (5 in the numerical simulations). Meshing these two models allows them to estimate the effects of a carbon tax policy on different income quintiles.} consider a representative agent model with a single producer with linear technology; 
\begin{enumerate}

\item The first-order conditions for Pareto optimal allocations and competitive equilibria are computed separately through solving the Lagrangian.\footnote{Equating the first-order conditions tacitly assumes that the representative agent's equilibrium consumption is strictly positive.} \citet{sandmo75} then equates these first-order conditions to determine when a competitive equilibrium allocation is Pareto optimal. The resulting tax rate is called the \emph{optimal Pigouvian tax rate}.
However, due to multiplicity of equilibria, setting the tax rate equal to the optimal Pigouvian tax rate does not necessarily result in Pareto optimal equilibrium,
as shown in \cref{taxequipbexample}.

\item \citet{sandmo75} and subsequent authors do not prove the existence of equilibrium at the optimal Pigouvian tax rate. They do not study the welfare properties of equilibria with varying tax rates. 
\end{enumerate}

\subsection{Comparison of our GE Analysis to Partial Equilibrium and to Pigouvian Taxation}
We embed quota, emissions and fuel taxes into a GE model of the generality and rigor of the Arrow-Debreu model.
Our GE model allows for multiple agents with possibly different consumption sets, preferences, endowments and shareholdings of firms. Moreover, it allows for multiple firms with possibly nonlinear production technology.  
Thus, \citet{sandmo75}'s representative agent model is a special case of our model.

In our model, the government may generate revenue by selling the quota it assigns to itself in the quota model, while the government generates tax revenue in the emissions and fuel tax models. 
In all three models, the government sets a rebate scheme that specifies how its revenues will be distributed to consumers. The government first specifies a quota (along with its allocation between the private firms and the government firm) or a tax rate, and a rebate scheme. At that point, it abstains from further action, allowing market forces (i.e., the Walrasian auctioneer) to determine a price that equilibrates supply and demand. Here is a comparison of our model to the partial equilibrium formulation, and to Pigouvian taxation:
\begin{enumerate}
    \item We show in \cref{standardqtmain} that, under mild assumptions, given any quota that is consistent with a consumption-production pair, there is an associated quota equilibrium at which the total net pollution emission equals the given quota. The existence is proven by shifting firms' production sets by their prespecified quotas, invoking Proposition 3.2.3 in \citet{fl03bk}, and showing that the equilibrium in the derived Florenzano's economy is a quota equilibrium in our model. Quota equilibrium allows considerable flexibility in the allocation of emission property rights, ranging from cap and trade equilibrium, in which all emission property rights are vested in private firms, to global quota equilibrium, in which all emission property rights are vested in the government. \cref{exampleproperty} demonstrates that the quota property rights have a major impact on the distribution of welfare among agents;
    
    \item The desire to regulate pollution emission is driven by an important externality, that a given amount of total emissions results in a corresponding increase in the average global temperature.  If there are no other externalities in the economy, then a version of the First Welfare Theorem holds: taking the specified emissions as given, then for any distribution scheme, the resulting equilibrium consumption-production pair is constrained Pareto optimal, i.e., Pareto Optimal among the set of all quota-compliant consumption-production pairs with the same total net emissions of regulated commodities, see \cref{fstwelfarequota} and \cref{welfareprod}. For a fixed distribution scheme, changing the quota alters the consumers' welfare, and it is possible that the equilibrium consumption-production pair for one quota may Pareto dominate the equilibrium consumption-production plan for a different quota, see \cref{quotawelfareexample}. But once the government has established the quota, no further government intervention is required to achieve constrained Pareto optimality.  In particular, changing the rebate scheme may benefit some consumers and disadvantage others, but it cannot result in a Pareto improvement.

    Although our terminology, discussion and examples in this paper are heavily focused on $\mathrm{CO}_{2}$ emissions and climate change, our model allows for very general externalities in agents' preferences. In particular, our main existence result (\cref{standardqtmain}) applies to a wide variety of externalities, including the regulation of other forms of pollution. By essentially the same proof, a generalization of \cref{fstwelfarequota} shows that every quota equilibrium consumption-production pair is Pareto optimal among all quota-compliant consumption-productions pairs that generate the same externality;\footnote{In the case of local pollution, such as pollution discharge into a river, each location would have its own quota. ``Generating the same externality" requires having the same emissions at every local site.}

    \item In the quota equilibrium model, the government refrains from further intervention after setting the quota (along with its allocation) and the shareholdings of the government firm. Thus, the equilibrium price for the quota is determined through market forces. Similarly, the equilibrium total net emissions of regulated commodities in the emission tax equilibrium model is determined endogenously through market forces, after the government has set the tax rate and its rebate share to agents;
    
    \item We show under mild assumptions that 
    every quota equilibrium is an emission tax equilibrium for a carefully chosen government rebate scheme that captures the property rights assignment embedded in the quota.     
    Moreover, every emissions tax equilibrium is a global quota equilibrium, i.e. the equilibrium of an economy in which all of the quotas are assigned to the government, and none to the private firms.  This makes rigorous an assertion of \citet{Nordhaus79} on the interchangeability of carbon taxes and emissions quotas.\footnote{Chapter 8 of \citet{Nordhaus79} is devoted to a pioneering study of strategies for controlling $\mathrm{CO}_{2}$ emissions. Nordhaus begins with ``an efficient path'' of emissions and seeks a policy to provide agents the correct incentives to achieve it. Nordhaus writes, ``In the real world, the policy can take the form either of taxing carbon emissions, or of physical controls (such as rationing).  In an efficient solution, the two are interchangeable in principle''[page 136].  It is not clear whether Nordhaus, in this statement, contemplated trading of the emission rations.}  In other words, the distributional consequences arise not from the choice of quota (quantity) versus tax (price), but rather from the way property rights are distributed between private firms and the government in the quota setting, and the way government revenues are rebated to consumers;
    
    \item As noted above, \cref{taxequipbexample} show that emissions tax equilibrium may fail to exist for certain tax rates. \cref{taxequipbexample} further shows that a given emissions tax rate may be associated with multiple emissions tax equilibria with different total net emissions of regulated commodities, hence {\em it may not be possible to set an emissions tax rate so that all associated emissions equilibria result in total net emissions of regulated commodities falling below a given target.}\footnote{The multiplicity of emissions tax equilibrium resolves what at first might appear to be a conflict between this bullet and the first two bullets.}

    \item Full Pareto optimality of quota equilibrium can often be achieved if the government sets the right quota. In addition, \cref{nonlinearexample} suggests that, when there is a one-to-one correspondence between the emission tax rate and total net pollution emission, the government can often limit the total net emissions of regulated commodities to a given level and achieve full Pareto optimality, through setting the tax rate;

    \item We focus on emissions tax equilibrium rather than fuel tax equilibrium. In \cref{cbtaxequil}, we find that fuel tax equilibrium may be inefficient compared with emissions tax equilibrium, i.e., for a given total net $\mathrm{CO}_{2}$ emission, the corresponding fuel tax equilibrium is Pareto dominated by the corresponding emissions tax equilibrium. \cref{addonnoexst} shows that fuel tax need not exist for certain tax rates;

    \item  Computable General Equilibrium (CGE) methods\footnote{\citet{scarf72} demonstrated the feasibility of computing Walrasian equillibrium by providing an algorithm guaranteed to converge to an equilibrium.  Since then, CGE methods have become widely used.  In practice, CGE methods typically use some variation of Newton's Method, which may fail to converge, but converges rapidly when it does.  When convergence fails, the algorithm can be rerun from a different starting point until convergence is hopefully achieved.} allow for the effective computation of our equilibrium notions in country macroeconomic models. This allows the government to predict the equilibria that will arise from a variety of possible regulatory policies, facilitating the government's choice process.  A particular change that might be beneficial for a vast majority of agents is nonetheless likely to be detrimental for some agents.  Moreover, these agents may be engaged in economic activities that, at first, seem only remotely connected to the regulated sector. If the government can identify those losers in advance of implementing the policy change, it may be able to mitigate those losses by other actions; for example, agents that will lose jobs may be offered subsidized job retraining, allowing them to switch to expanding sectors of the economy.  The ability to mitigate harmful side effects of a generally beneficial regulation is of value in its own right, and may be of particular value in assembling a political consensus in favor of implementing the change. 
    CGE methods have been widely used in trade negotiations, and should be of value in negotiations to set the emissions quotas or taxes in individual countries.\footnote{See, for example, \citet{shoven72}, \citet{felix04}, and \citet{felix06}.}  
\end{enumerate}

\subsection{Comparison with Other Approaches}
We briefly discuss three other possible approaches to climate regulation, and how our model compares to these approaches.

\subsubsection{Disposal Cone}
Free-disposal equilibrium and non-free-disposal equilibrium are two classical equilibrium notions, which differ in the resource feasibility constraint. The demand is allowed to be less than or equal to the supply for free-disposal equilibrium, while non-free-disposal equilibrium notion requires demand to be exactly equal to the supply for each commodity.\footnote{While \citet{ad54} used free-disposal equilibrium as the equilibrium concept, \citet{McKenzie81}
used non-free-disposal equilibrium as the equilibrium concept. See also \citet{Berg71}, \citet{Hart75} and \citet{pole93}.} As pointed out in \citet{ha05}, the free-disposal equilibrium notion allows bads to be freely disposed and ``hence trivializes the problem of the efficient allocation of bads." \citet{fl03bk} formulated the notion of a disposal cone, which allows the exogenous specification of certain commodities that can be freely disposed, and other goods that may not be disposed. 

Florenzano's work is a substantial conceptual improvement on the pre-existing literature, but it has two significant limitations: 
\begin{enumerate}
\item While one can prohibit the emissions of a given pollutant, the cone formulation does not allow for setting a positive cap on those emissions. In \citet{fl03bk}, a ``complementary slackness'' condition holds: the value of disposal at equilibrium of any good must be zero. So it is impossible to charge a positive tax on pollution. The complementary slackness condition is a consequence of budget balance in the Arrow-Debreu model. If a tax on pollution generates positive revenues, these revenues would evaporate from the model, leaving consumers without sufficient income to buy the goods produced by the firms;
\item \citet{fl03bk} establishes the existence of equilibrium such that the equilibrium excess demand is in her disposal cone. However, every non-free-disposal equilibrium is also a Florenzano equilibrium, since her equilibrium conditions are less stringent. It would be desirable to establish the existence of equilibria that are not non-free-disposal, but Florenzano does not do so. Since it is not practical to eliminate all pollution in the near term, we need to demonstrate the existence of equilibrium with non-zero disposal of bads.  
\end{enumerate}
Our quota equilibrium model, on the other hand, incorporates a quota regulatory scheme by defining the quota-compliance region $\mathcal{Z}(m)$ that reflects the society's choice on which commodities to dispose, and at what quantity. Our quota equilibrium notion requires that the total net emissions of the regulated commodities equals  some given quota. Any revenue generated from quota selling or tax is rebated to agents according to either the agents shareholdings of private firms or government's rebate scheme to ensure that all revenues remain within the model. 

\subsubsection{Lindahl Equilibrium}
In an economy with a public good, a Lindahl equilibrium (see \citet{foley70})
is a system of individualized prices charged per unit of public good, such that all agents agree on a common level of the public good.  Lindahl equilibrium is Pareto efficient.  However, computing the individualized prices at Lindahl equilibrium requires knowing the demand functions of all agents. Moreover, individuals have an incentive to misrepresent their demand functions in order to lower their share of the cost of providing the public good.  Thus, there are serious difficulties in implementing Lindahl equilibrium in practice.

In contrast to Lindahl equilibrium, cap and trade and emission taxes can be---indeed, have been---implemented in practice.  Our goal is to understand the consequences of implementing {\em those} regulatory schemes, in particular the existence or nonexistence, constrained Pareto optimality, and potential full Pareto optimality of the equilibria that arise.  In both quota and emissions tax equilibrium, the prices of emissions are uniform across payers, rather than individualized as in Lindahl equilibrium. Thus, Lindahl equilibrium provides little guidance on the behavior of the regulatory schemes---cap and trade and emissions tax---that have been implemented in practice.

\subsubsection{Fixed Price Equilibrium}

\citet{dreze75} considers exchange economies in which the prices (or relative prices) of certain commodities are subject to inequality constraints, so the Walrasian auctioneer may not have enough freedom in varying prices to equilibrate supply and demand. Dr\`eze defines a Fixed Price Equilibrium,\footnote{Dr\`eze does not use the term ``Fixed Price Equilibrium,'' but his equilibrium notion has come to be called that in the subsequent literature, for example see \citet{Guy78}.} in which rationing is used to equilibrate supply and demand, at the fixed prices. 
In contrast to our model, Dr\`eze does not allow for externalities, which are the essential motivation for regulating pollution emission. 
In addition, a Fixed Price Equilibrium sets constraints on both prices {\em and} quantities in a single equilibrium; we set either quantities (quotas) or prices (taxes), but we never set both simultaneously. Our notions capture the regulatory schemes (cap and trade and carbon tax) that are in actual use to control climate change. Note that both of these regulatory schemes set quantities or prices, but not both, so there is no apparent connection between Fixed Price equilibrium, on the one hand, and our equilibrium notions or the regulatory schemes used in practice, on the other hand.

\subsection{Structure of the Paper}

Here is an outline of the paper. \cref{secmodel} furnishes the rigorous framework of the quota equilibrium model. The existence of quota equilibrium is presented in \cref{secquotaeq} (see \cref{standardqtmain}), and the constrained Pareto optimality of quota equilibrium is presented in \cref{secwelfare} (see \cref{fstwelfarequota}). Furthermore, \cref{exampleproperty} illustrates that quota allocation may have a major impact on the distribution of welfare among agents. \cref{sectaxequil} presents the emissions tax equilibrium model, and compares the emissions tax equilibrium with quota equilibrium. In particular, every quota equilibrium can be viewed as an emissions tax equilibrium and vice versa (see \cref{equndertaxisquota} and \cref{equnderqtistax}). However, \cref{taxequipbexample} shows that emissions tax equilibrium may not exist for certain tax rates, and may be ineffective in limiting the total net emission of regulated commodities. On the other hand, \cref{nonlinearexample} suggests that, when there is a one-to-one correspondence between emission tax rate and total net pollution emission, the government can often limit the total net emissions of regulated commodities at a given level and achieve full Pareto optimality, through tax alone. \cref{secfuel} presents two examples on fuel tax equilibrium; \cref{cbtaxequil} suggests that fuel tax equilibrium may be Pareto dominated by the corresponding emissions tax equilibrium, and \cref{addonnoexst} shows that fuel tax equilibrium need not exist for certain tax rates. \cref{secconclude} concludes the paper and lays out a few promising future directions. 
Finally, \cref{secappa} completes the proof of our main existence result, \cref{standardqtmain}. \cref{finitemainexpd} transforms  Proposition 3.2.3 in \citet{fl03bk} into our setting. The derivation of \cref{standardqtmain} from \cref{finitemainexpd} is given in \cref{secquotaeq}.

\section{Production Economy with Quota}\label{secmodel}

In this section, we present a general equilibrium model that incorporates quota regulatory schemes on net pollution emission, and introduce an equilibrium notion known as quota equilibrium. In particular, we generalize the usual feasibility constraints so that, at equilibrium, the total net pollution emission equals the prespecified level. Our model incorporates two important polar cases: cap and trade equilibrium, in which the emission quota is allocated entirely to private firms; and global quota equilibrium, in which the emission quota is vested in a government firm. 

Most of our results apply regardless of how the emission quota is allocated. In particular, given any quota on pollution, there exists a quota equilibrium. 
Moreover, we establish a welfare theorem for our equilibrium concept: every quota equilibrium consumption-production pair is \emph{constrained Pareto optimal}, that is, it is Pareto optimal among all quota-compliant consumption-production pairs with the same total net pollution emission. As \cref{exampleproperty} indicates,
full Pareto optimality can often be achieved for quota equilibrium if the government sets the right quota. However, the allocation of emissions quota between the government firm and the private firms has an important impact on the distribution of consumption and welfare among the agents.  

We start this section by introducing the following characterization of agents' preferences as presented in \citet{hilden74}:
\begin{definition}\label{defpref}
The set $\mathcal{P}$ of preferences on an Euclidean space $\Reals^{\ell}$ consists of elements of the form $(X, \succ)$, where
\begin{itemize}
    \item The \emph{consumption set} $X\subset \NNReals^{\ell}$ is closed and convex;
    \item $\succ$ is a continuous and irreflexive preference relation defined on $X$. 
\end{itemize}
\end{definition}

For every $(X, \succ)\in \mathcal{P}$ and $a, b\in X$, $a\succ b$ means $a$ is strictly preferred to $b$. Note that we require neither completeness nor transitivity of $\succ$ in \cref{defpref}. 
A preference $\succ$ on $X$ is \emph{continuous} if, for every $x, y\in X$ with $x\succ y$, there exist relatively open sets $U\ni x$ and $V\ni y$ such that $a\succ b$ for all $a\in U$ and $b\in V$. The 
set $\mathcal{P}$ is equipped with the topology of closed convergence, which makes it a compact metric space as indicated in \citet{hilden74}.
For two elements $x_1, x_2\in \Reals^{\ell}$, we abuse notation and write $(x_1, x_2)\in (X, \succ)$ if $x_1,x_2\in X$ and $x_1\succ x_2$. A preference $P=(X, \succ)$ is \emph{convex} if $\{y\in X: y\succ x\}$ is convex for every $x\in X$, and we use $\mathcal{P}_{H}$ to denote the set of  convex preference from $\mathcal{P}$.
Let $\Delta=\{p\in \Reals^{\ell}: \|p\|=\sum_{k=1}^{l}|p_k|=1\}$ be the set of all prices.
\footnote{The set $\Delta$ is not convex. Thus, to establish the existence of equilibrium, the Kakutani fixed point theorem is not applicable. \citet{Hart75} established the existence of equilibrium with the price set $\Delta$ through a fixed and antipodal point theorem (see Section 2 of \citet{Hart75}).} 
Note that we allow for negative prices which can be interpreted as fees for emissions of bads.   
The definition of a finite production economy with quota is:

\begin{definition}\label{defmeasureeco}
A finite production economy with quota
\[
\mathcal E\equiv \{(X, \succ_{\omega}, P_{\omega}, e_{\omega}, \theta)_{\omega\in \Omega}, (Y_j)_{j\in J}, (m^{(j)})_{j\in J}, \mathcal{Z}(m)\} \nonumber
\] 
is a list  such that 
\begin{enumerate}[(i)]
\item \label{item-consumption_set} $\Omega$ is a finite set of agents. For every agent $\omega\in \Omega$, its consumption set $X(\omega)$ is a non-empty, closed and convex subset of $\NNReals^{\ell}$. We write $X_{\omega}$ for $X(\omega)$;

\item $J$ is a finite set of producers. Our economy has two types of firms: private firms and a single government firm. The government firm, denoted as firm $0$, has the production set $Y_{0}=\{0\}$. For each private firm $j\in J$, its production set $Y_{j}\subset \Reals^{\ell}$ is a non-empty subset. We write $Y=\prod_{j\in J}Y_{j}$; 

\item The set of allocations is 
$\mathcal{A}=\prod_{\omega\in \Omega}X_{\omega}$ is equipped with the product topology; 

\item\label{chaprefer} Let $M_{\omega}=\mathcal{A}\times Y\times \Delta\times X_{\omega}$ for every $\omega\in \Omega$. The \emph{global preference relation} of agent $\omega$ is $\succ_{\omega}\subset M_{\omega}\times M_{\omega}$. For $m,m'\in M_{\omega}$, we write $m\succ_{\omega} m'$ to mean that the agent $\omega$ strictly prefers $m$ over $m'$. 
The \emph{preference map} of agent $\omega$ is a map 
$P_{\omega}:\mathcal{A}\times Y\times \Delta\to \PowerSet(X_{\omega})\times \PowerSet(X_{\omega}\times X_{\omega})$\footnote{As the government firm's production set is the singleton $\{0\}$, the government firm's production has no impact on agents' preferences.} given by 
\[
P_{\omega}(x,y,p)=(X_{\omega},\{(a, b)\in X_{\omega}\times X_{\omega} | (x,y,p,a)\succ_{\omega}(x,y,p,b)\}). \nonumber
\]
For every $\omega\in \Omega$, $P_{\omega}$ satisfies:
\begin{itemize}
    \item The range of $P_{\omega}$ is a subset of $\mathcal{P}$. By \cref{defpref}, $P_{\omega}(x,y,p)$ can be written as $(X_{\omega}, \succ_{x,y,\omega,p})$;
    \item For $x, x'\in \mathcal{A}$ with $x(i)=x'(i)$ for all $i\neq \omega$, $P_{\omega}(x, y, p)=P_{\omega}(x', y, p)$ for all $(y, p)\in Y\times \Delta$;
    \item $P_{\omega}$ is continuous in the norm topology on $\mathcal{A}\times Y\times \Delta$;
\end{itemize}


\item $\theta(\omega)(j)$ is agent $\omega$'s share of firm $j$ such that $\sum_{\omega\in \Omega}\theta(\omega)(j)=1$ for all $\omega\in \Omega$. We sometimes write $\theta_{\omega j}$ for $\theta(\omega)(j)$. The agents' shares of private firms are exogenous to the model. The government chooses the agents' shares $\theta(\omega)(0)$ of the government firm; this choice allows the government to determine how any government revenues are rebated to agents;

\item $e\in (\NNReals^{\ell})^{\Omega}$ is the vector of initial endowments for the agents;

\item The government chooses to regulate the first $k\leq \ell$ commodities and assigns quotas on regulated commodities to the firms. For each $j\in J$, define $m^{(j)}\in \NegReals^{k}$ to be the negative of the quota for the firm $j$. Let $m=\sum_{j\in J}m^{(j)}$. 
The \emph{quota-compliance region}
$\mathcal{Z}(m)=\{m\}\times \prod_{k<n\leq \ell}\mathcal{Z}(m)_{n}$ is a convex subset of $\NegReals^{\ell}$, where $\mathcal{Z}(m)_{n}$ is either $\{0\}$ or $\NegReals$ for $k<n\leq \ell$. 
\end{enumerate}
\end{definition}

\begin{remark}\label{spremark}
The two main features of our model are:
\begin{enumerate}
   \item \cref{chaprefer} characterizes each agent's preference through the global preference relation $\succ_{\omega}$ and the preference map $P_{\omega}$. The preference relation $\succ_{\omega}$ represents the agent's preference on all agents' consumption, production, prices and her own consumption. The agent, however, has no control over other agent's consumption, production and prices. Hence, given all other agent's choices, production and prices, the agent chooses her bundle according to the preference map $P_{\omega}$. For the existence of equilibrium, one only needs to work with the preference map $P_{\omega}$. However, the preference relation $\succ_{\omega}$ is essential for studying welfare properties and potential Pareto improvement of quota-compliant consumption-production pairs.
   \item Our model incorporates a quota regulatory scheme by defining the quota-compliance region $\mathcal{Z}(m)$, which reflects the society's choice on which commodities to dispose, and at what quantity. $-m=-\sum_{j\in J}m^{(j)}$ is the quota on total net emissions of regulated commodities. $-m^{(j)}$ represents firm $j$'s property right to emit regulated commodities. The profits attributable to this property right flow through to the shareholders of firm $j$. Because the shareholdings of the government firm is chosen by the government, the government determines the distribution of those profits; it could, for example, choose to distribute those profits equally to all agents. The allocation of quotas between the government firm and the private firms thus plays a key role in the effect of the quota scheme on income distribution. In equilibrium, the emissions of a firm may be above or below that firm's quota; the emissions of the firms are determined endogenously. When the government firm's quota is $0$, our model captures the \emph{cap and trade} of pollution rights, in which the profits from reducing emissions accrue to the shareholders of the firms that reduce those emissions. When private firms' quotas are $0$, we refer to our model as \emph{global quota economy}. 
   In this case, the profits from reducing emissions accrue to agents, according to the shares the government firm assigned to them by the government. The allocation of quotas between the government firm and the private firms may be different for different regulated commodities; for example, the government might impose cap and trade on one regulated commodity, while assigning the quota for another regulated commodity entirely to the government firm. The government may impose a hybrid model, in which both the government firm and the private firms have non-zero quotas. 
   \item As $\mathcal{Z}(m)_{n}$ is either $\NegReals$ or $\{0\}$ for $k<n\leq \ell$, our model allows for the government to allow free disposal, or prohibit disposal, for non-regulated commodities. 
   The government may choose to tolerate some bads, and some goods (such as atmospheric oxygen) to be left unused at price zero.\footnote{In the existing literature, oxygen would be said to be ``disposed."} On the other hand, the government may choose to prohibit all emissions of certain commodities.\footnote{CFCs and HCFCs are bads which deplete ozone, and are now prohibited under the Montreal protocol.} In addition, some goods are fully consumed at a non-negative price. 
\end{enumerate}
\end{remark}

For every $\omega\in \Omega$, $p\in \Delta$ and $y\in Y$, the \emph{quota budget set} $B_{\omega}^{m}(y, p)$ is defined as
\[
\{z\in X_{\omega}: p\cdot z \leq p\cdot e(\omega) + \sum_{j\in J}\theta_{\omega j}\big(p\cdot y(j)+\pi_{k}(p)\cdot m^{(j)}\big)\}.\nonumber
\]
For each private firm $j\in J$, since the firm can emit the first $k$ commodities freely up to its quota $m^{(j)}$, the firm's profit at a given price $p$ is $p\cdot y(j)+\pi_{k}(p)\cdot m^{(j)}$.\footnote{If a private firm emits less than its quota, then the firm generates additional revenue by selling its remaining quota to other firms. If a private firm emits more than its quota, then the firm needs to purchase quota from other firms.} The government firm's profit comes solely from selling its quota. In particular, the government firm's profit at a given price $p$ is $p\cdot y(0)+\pi_{k}(p)\cdot m^{(0)}=\pi_{k}(p)\cdot m^{(0)}$ since the government firm's production set is the singleton $\{0\}$. Hence, the agent's budget consists of the value of her endowment and her dividend from firms. 
For each $\omega\in \Omega$ and $(x,y,p)\in \mathcal{A}\times Y\times \Delta$, the \emph{quota demand set} $D^{m}_{\omega}(x,y,p)$ consists of all elements in $B_{\omega}^{m}(y, p)$ that maximize the agent's preference given $(x,y,p)$. 
In particular, $D^{m}_{\omega}(x,y,p)$ is defined as:
\[
\{z \in B^{m}_{\omega}(y,p): w \succ_{x,y,\omega,p} z\implies w\not\in B^{m}_{\omega}(y,p)\}.\nonumber
\]
Given a price $p$, for each $j\in J$, the firm's supply set $S_{j}^{m}(p)$ is $\argmax_{z\in Y_{j}}\big(p\cdot z+\pi_{k}(p)\cdot m^{(j)}\big)$. As $\pi_{k}(p)\cdot m^{(j)}$ does not depend on the firm's production plan, we have $S_{j}^{m}(p)=\argmax_{z\in Y_{j}}p\cdot z$. 
All firms' profits depend only on prices and their own production.\footnote{We assume producers are profit maximizers. \citet{makarov81} established a general equilibrium existence theorem which allows for firm objectives other than profit maximization.} 

Free-disposal equilibrium and non-free-disposal equilibrium are two classical general equilibrium notions, and they differ only in the resource feasibility constraint. While free-disposal equilibrium only requires demand to be no more than the supply, non-free-disposal equilibrium requires demand to be exactly equal to the supply for each commodity. Free-disposal equilibrium allows for the bad to be freely disposed of and hence precludes any control on total net pollution emission.
While non-free-disposal equilibrium is widely used in GE models with bads,\footnote{\citet{Mck59} used the non-free-disposal equilibrium with possible negative prices, which is followed by \citet{Berg71}, \citet{Hart75}, \citet{pole93}, and others.} it does not allow one to limit net pollution emission to a positive amount. While it might be desirable to eliminate $\mathrm{CO}_{2}$ emissions at some future date, it is not practical to do so in the near term. 
Our equilibrium notion generalizes these two classical equilibrium notions by allowing for positive quotas on certain pollution emissions.

\begin{definition}\label{def_mameqprod}
Let $\mathcal{E}=\{(X, \succ_{\omega}, P_\omega, e_\omega, \theta)_{\omega\in \Omega}, (Y_j)_{j\in J}, (m^{(j)})_{j\in J}, \mathcal{Z}(m)\}$ be a finite production economy with quota. 
A $\mathcal{Z}(m)$-compliant quota equilibrium is $(\bar{x}, \bar{y}, \bar{p})\in \mathcal{A}\times Y\times \Delta$ such that the following conditions are satisfied:
\begin{enumerate}
\item $\bar{x}(\omega)\in D^{m}_{\omega}(\bar{x},\bar{y},\bar{p})$ for all $\omega\in \Omega$;
\item $\bar{y}(j)\in S^{m}_{j}(\bar{p})$ for all $j\in J$. Every firm is profit maximizing given the price $\bar{p}$;
\item $\sum_{\omega\in \Omega}\bar{x}(\omega)-\sum_{\omega\in \Omega}e(\omega)-\sum_{j\in J}\bar{y}(j)\in \mathcal{Z}(m)$.
\end{enumerate} 
\end{definition}

As we have emphasized in the introduction, the equilibrium price $\bar{p}$ emerges endogenously\footnote{In particular, the equilibrium price of the quota is determined through market forces.} after the government has set the quota (along with its allocation) and shareholdings of the government firm.
Our feasibility constraint $\sum_{\omega\in \Omega}\bar{x}(\omega)-\sum_{\omega\in \Omega}e(\omega)-\sum_{j\in J} \bar{y}(j)\in \mathcal{Z}(m)$ implies, at equilibrium, that total net emission of the regulated commodities equals the pre-specified total quota, which is the aggregation of the government firm's quota and private firms' quota. 
The existence and the welfare properties of quota equilibrium are studied in the rest of this section.  

\subsection{Existence of Quota Equilibrium}\label{secquotaeq}
In this section, for any given quota, we establish the existence of a quota equilibrium for production economies with quota as in \cref{defmeasureeco}. 
Our result generalizes Proposition 3.2.3 in \citet{fl03bk}, in which the quota-compliance region is a convex cone. As we have discussed in the introduction, while quota-compliance cone allows the society to prohibit the emissions of a given pollutant, it does not allow for setting a positive cap on net pollution emissions. However, Proposition 3.2.3 in \citet{fl03bk} plays a crucial role in establishing our main existence theorem, and a version of Proposition 3.2.3 tailored for our setting is proved in the \cref{appendixcone} (see \cref{finitemainexpd}).
The proof of the existence of quota equilibrium consists of the following major steps:
\begin{enumerate}
    \item We construct a production economy with quota $\mathcal{E}'$ by shifting each firm's production set by its prespecified quota. The quota-compliance region of $\mathcal{E}'$ is a cone;
    \item We apply \cref{finitemainexpd} to show that $\mathcal{E}'$ has an equilibrium;
    \item Finally, we show every equilibrium of $\mathcal{E}'$ is a quota equilibrium of $\mathcal{E}$. 
\end{enumerate}
To state and prove our main existence theorem, we start with the definition of quota-compliant consumption-production pairs, quota-compliant consumption set and quota-compliant production set for individual firms.

\begin{definition}\label{def_attainable}
The set of quota-compliant consumption-production pair of $\mathcal{E}$ is
\[
\cO=\left\{(x,y)\in \mathcal{A}\times Y: \sum_{\omega\in \Omega}x(\omega)-\sum_{\omega\in \Omega}e(\omega)-\sum_{j\in J}y(j)\in \mathcal{Z}(m)\right\}. \nonumber
\]
The set $\hat Y_j$ of quota-compliant production plans for the $j$-th producer is 
\[
\left\{y_j\in Y_j : \exists (x, y')\in \cA\times \prod_{i\neq j}Y_{i}, \sum_{\omega\in \Omega} x(\omega)-\sum_{\omega\in \Omega} e(\omega)- y_j-\sum_{i\neq j}{y'}(i)\in \mathcal{Z}(m)\right\}. \nonumber
\] 
The set $\hat{X}_{i}$ of quota-compliant consumption for the $i$-th agent is
\[
\left\{x_i\in X_{i} : \exists (x', y)\in \prod_{\omega\neq i}X_{\omega}\times \prod_{j\in J}Y_{j}, x_{i}+\sum_{\omega\neq i} x'(\omega)-\sum_{\omega\in \Omega} e(\omega) -\sum_{j\in J}y(j)\in \mathcal{Z}(m)\right\}. \nonumber
\]
\end{definition}
We now state and prove our main existence result.

\begin{customthm}{1}\label{standardqtmain}
Let $\mathcal E=\{(X, \succ_{\omega}, P_\omega, e_\omega, \theta)_{\omega\in \Omega}, (Y_j)_{j\in J}, (m^{(j)})_{j\in J}, \mathcal{Z}(m)\}$ be a finite production economy with quota as in \cref{defmeasureeco}. Suppose $\mathcal{E}$ satisfies the following conditions: 
\begin{enumerate}[(i)]
    \item for $\omega\in \Omega$,  $\hat{X}_{\omega}$ is compact and the preference map $P_{\omega}$ takes value in $\mathcal{P}_{H}$;\footnote{As noted in \citet{fl03bk}, this condition can be weakened to the following condition: for each $(x,y,p)\in \cO\times (\Delta\cap \mathcal{Z}^{0})$ and all $\omega\in \Omega$, $(x(\omega),x(\omega))\not\in \conv(P_{\omega}(x,y,p))$, where $\conv(P_{\omega}(x, y,p))$ denotes the convex hull of $P_{\omega}(x,y,p)$.}
    \item\label{esto0} there exists an agent $\omega_0\in \Omega$ such that the set $X_{\omega_0} - \sum_{j\in J}\theta_{\omega_{0} j}(Y_j+\{E(m^{(j)})\})$\footnote{For all $j\in J$, $E(m^{(j)})\in \NegReals^{\ell}$ is the vector such that its projection to the first $k$th coordinates is $m^{(j)}$ and its other coordinates are $0$.} has non-empty interior $U_{\omega_0}\subset \Reals^{\ell}$ and $e(\omega_0)\in U_{\omega_0}$;
\item \label{conditionexplain}there exists 
a commodity $s\in \{1,2,\dotsc,\ell\}$\footnote{The commodity $s$ will typically be a non-regulated commodity.} such that:
\begin{itemize}
    \item the projection $\pi_{s}(X_{\omega_0})$ is unbounded, and the agent $\omega_0$ has a strongly monotone preference on the commodity $s$,\footnote{Given any $(x,y,p)\in \mathcal{A}\times Y\times \Delta$, we have $(a,a')\in P_{\omega_0}(x,y,p)$ if $a_s>a'_s$ and $a_t=a'_t$ for all $t\neq s$.} where $\omega_0$ is the agent specified in \cref{esto0};
    \item for every $\omega\in \Omega$, there is an open $V_{\omega}\subset \Reals$ containing the $s$-th coordinate $e(\omega)_{s}$ of $e(\omega)$ such that $(e(\omega)_{-s},v)\in X_{\omega} - \sum_{j\in J}\theta_{\omega j}(Y_j+\{E(m^{(j)})\})$\footnote{$(e(\omega)_{-s},v)$ is the vector such that its $s$-th coordinate is $v$, and its $t$-th coordinate is the same as the the $t$-th coordinate of $e(\omega)$ for all $t\neq s$.} for all $v\in V_{\omega}$;\footnote{\cref{esto0} and \cref{conditionexplain} of \cref{standardqtmain} are needed to establish the existence of a quota equilibrium. However, we only need $e(\omega)\in X_{\omega}-\sum_{j\in J}\theta_{\omega j}(Y_j+\{E(m^{(j)})\})$ for every $\omega\in \Omega$, not  \cref{esto0} and \cref{conditionexplain}, to establish the existence of a quota quasi-equilibrium. See Proposition 3.2.3 of \citet{fl03bk} for a proof of the special case where the quota is $0$.}
\end{itemize}
    \item \label{satiate}for $\omega\in \Omega$ and $(x,y)\in \cO$, there is $u\in X_\omega$ with $(u, x_\omega)\in \bigcap_{p\in \Delta\cap (\mathcal{Z}')^{0}}P_{\omega}(x,y,p)$, where 
    $\mathcal{Z}'=\{v\in \Reals^{\ell}: (\forall n\leq k)(v_n=0)\wedge (\forall k<n\leq \ell)(v_{n}\in \mathcal{Z}(m)_{n})\}$
    and $(\mathcal{Z}')^{0}=\{p\in \Delta: (\forall z\in \mathcal{Z}')(p\cdot z\leq 0)\}$ is the polar cone of $\mathcal{Z}'$;
    \item the aggregate production set $\bar{Y}=\left\{\sum_{j\in J}{y}(j): y\in Y\right\}$ is closed and convex, and for each $j\in J$, the set $\hat{Y}_{j}$ is relatively compact;
    \item the set $\bar{Y}+\mathcal{Z}(m)$ is closed. 
\end{enumerate}
Then, $\cO$ is non-empty, i.e., it is feasible to achieve the quota, and there exists a quota equilibrium. Moreover, the equilibrium price is in $(\mathcal{Z}')^{0}$.
\end{customthm}
\begin{remark}
\cref{standardqtmain} is a generalization of Proposition 3.2.3 in \citet{fl03bk} since we do not require the quota-compliance region $\mathcal{Z}(m)$ to be a cone. The conditions of \cref{standardqtmain} are similar to classical assumptions in the existing GE literature: 
\begin{enumerate}
\item Closedness and convexity are classical assumptions on consumption sets and on production sets. These conditions, in particular, imply that commodities are perfectly divisible. Compactness of quota-compliant consumption sets and relative compactness of quota-compliant production sets play essential roles in establishing the existence of equilibrium, as in \citet{de59}. Proposition 2.2.4 in \citet{fl03bk} provides sufficient conditions to guarantee the compactness of quota-compliant consumption sets and relative compactness of quota-compliant production sets;\footnote{For $A\subset \Reals^{\ell}$, let $r(A)$ be the recession cone of $A$. Suppose we have $r(\sum_{\omega\in \Omega}X_{\omega})\cap r(\bar{Y}+\mathcal{Z}(m))=\{0\}$, $r(\sum_{\omega\in \Omega}X_{\omega})\cap \big(-r(\sum_{\omega\in \Omega}X_{\omega})\big)=\{0\}$ and $r(\bar{Y})\cap \big(-r(\bar{Y})\big)=\{0\}$, by Proposition 2.2.4 in \citet{fl03bk}, quota-compliant consumption sets and production sets are compact and relatively compact, respectively.}

\item \citet{fl03bk} does not require firms' production sets to contain zero. As a result, a firm may incur a loss at equilibrium, and this loss is passed on to the shareholders. In that sense, \citet{fl03bk}'s firms are not limited liability. 
In this paper, we follow \citet{fl03bk}'s formulation. In our setting, the relevant production set for firm $j$ is $Y_{j}+E(m^{(j)})$, and we cannot rule out the possibility that some regulated commodities have positive equilibrium price. Thus, even if $0$ is in $Y_j$, we cannot guarantee that firm $j$'s profit is non-negative.   
It is essential for the existence of equilibrium that firm profits, whether positive or negative, be passed on to the agents in some fashion. If one wished to make the firms limited liability, one would need to introduce a bankruptcy mechanism that specifies who bears firm losses. 
In all our examples, the equilibrium prices for regulated commodities are negative and $0$ is an element of all production sets, which jointly imply that firms' profit at equilibrium are non-negative, so the bankruptcy issue does not arise;

\item Each agent's preference map $P_{\omega}$ is assumed to be continuous in the closed convergence topology as in \cref{defmeasureeco}. As a result, every agent's preference correspondence (which defines the strictly preferred set) is lower hemicontinuous. If the preferences are price-independent, \cref{satiate} in the assumptions of \cref{standardqtmain} is equivalent to assuming non-satiation for the set of quota-compliant consumption-production pairs;

\item The survival assumption $e_{\omega}\in X_\omega-\sum_{j\in J}\theta_{\omega j}(Y_j+\{E(m^{(j)})\})$ in our setting is derived from the survival assumption that is commonly used in the GE literature to ensure the existence of a quasi-equilibrium.\footnote{The survival assumption implies that an agent can survive without participating in any exchanges using her initial endowment and shares in production. In particular, an agent who supplies labor in an equilibrium is able to survive, and hence supply that labor.} 
Following the previous literature, we could have strengthened our survival assumption to $e_{\omega}\in \mathrm{int}\big(X_\omega-\sum_{j\in J}\theta_{\omega j}(Y_j+\{E(m^{(j)})\})\big)$\footnote{As in the previous literature, the interior is taken with respect to the topology of $\Reals^{\ell}$, not with respect to the subspace topology. Hence, the strengthened survival assumption implies the set $X_\omega-\sum_{j\in J}\theta_{\omega j}(Y_j+\{E(m^{(j)})\})$ has non-empty interior in $\Reals^{\ell}$.} to guarantee that every quota quasi-equilibrium is a quota equilibrium.\footnote{There exist, however, a few papers relaxing the assumption $e_{\omega}\in \mathrm{int}\big(X_\omega-\sum_{j\in J}\theta_{\omega j}(Y_j+\{E(m^{(j)})\})\big)$. See e.g. \citet{McKenzie81}, \citet{de62}, \citet{arrow71}, \citet{Berg71} and \citet{Florig99}. Also see \citet{fl03bk} for a detailed discussion.} 
This strengthened survival assumption is not satisfied if there is an agent with no shareholding of any private firm and the projection of the agent's consumption set to some coordinate is a singleton, and hence is economically restrictive.\footnote{Poor people generally do not have any shareholdings of private firms. Moreover, as poor individuals are incapable of sequestering emissions of regulated commodities, it is reasonable to assume that the projections of their consumption sets to regulated commodities are singletons, see \cref{noconsumefoot} for further explanation.}

\cref{esto0} and \cref{conditionexplain} are less restrictive, as they allow for the existence of agents who are not endowed with certain commodities, have no shareholdings of private firms and are unable to consume regulated commodities.
\cref{esto0} requires there be a single agent $\omega_0$ whose endowment is in the interior of $X_{\omega_0}-\sum_{j\in J}\theta_{\omega_0 j}(Y_j+\{E(m^{(j)})\})$
\footnote{\cref{esto0} is generally satisfied if there is a rich individual $\omega_0$ who has positive shareholdings of a set of private firms that generate all types of regulated commodities.}, which implies that the agent $\omega_0$'s budget at any quota quasi-equilibrium is strictly positive. \cref{esto0} and the first bullet of \cref{conditionexplain} imply that the quota quasi-equilibrium price of the commodity $s$ is strictly positive. Hence, by the second bullet of \cref{conditionexplain}, every agent has a positive budget at any quota quasi-equilibrium, which leads to the conclusion that every quota quasi-equilibrium is a quota equilibrium;\footnote{We present a complete argument in \cref{secquasieq}.}

\item The assumption that $\bar{Y}+\mathcal{Z}(m)$ is closed follows from $\bar{Y}\cap (-\mathcal{Z}(m))=\{0\}$, as indicated in Proposition 2.2.4 in \citet{fl03bk}.
Note that $(-\mathcal{Z}(m))\subset \NNReals^{\ell}$. If there is no free production and each firm can choose not to produce at all, then we have $\bar{Y}\cap (-\mathcal{Z}(m))=\{0\}$.
\end{enumerate}
\end{remark}
\begin{proof}[Proof of ~\cref{standardqtmain}]
By \cref{esto0} and the second bullet of \cref{conditionexplain} in the assumptions of \cref{standardqtmain}, we have $e(\omega)\in X_{\omega}-\sum_{j\in J}\theta_{\omega j}(Y_j+\{E(m^{(j)})\})$ for all $\omega\in \Omega$. So the set $\cO$ of quota-compliant consumption-production pairs is non-empty, hence is feasible to achieve the quota. 
Let $\mathcal{E}'=\{(X, \succ'_{\omega}, P'_{\omega}, e_{\omega}, \theta)_{\omega\in \Omega}, (Y'_j)_{j\in J}, (\tilde{m}^{(j)}), \mathcal{Z'}\}$ be a finite production economy with quota: 
\begin{enumerate}
    \item $Y'_{j}=Y_{j}+\{E(m^{(j)})\}$ for all $j\in J$. Let $Y'=\prod_{j\in J'}Y'_{j}$; 
    \item Each firm's quota $\tilde{m}^{(j)}$ is $0$;
    \item Let $\mathcal{Z}'=\{v\in \Reals^{\ell}: (\forall n\leq k)(v_n=0)\wedge (\forall k<n\leq \ell)(v_{n}\in \mathcal{Z}(m)_{n})\}$;
    \item We only provide a rigorous definition of the induced preference map $P'_{\omega}$ \footnote{The agent's global preference relation $\succ'_{\omega}$ is defined similarly. To establish the existence of a quota equilibrium, one only needs to work with the preference map $P'_{\omega}$.}. For $y\in Y'$, let $y(\mathcal{E})\in Y$ be such that $y(\mathcal{E})_{j}=y_{j}-E(m^{(j)})$ for all $j\in J$. 
    For $\omega\in \Omega$, the preference map $P'_{\omega}: \mathcal{A}\times Y'\times \Delta\to \mathcal{P}$ is given by
    \[
    P'_{\omega}(x,y,p)=(X_{\omega},\{(a, b)\in X_{\omega}\times X_{\omega} |(x,y(\mathcal{E}),p,a)\succ_{\omega}(x,y(\mathcal{E}),p,b)\}).\nonumber
    \]  
\end{enumerate}
To show that the derived economy $\mathcal{E}'$ has a $\mathcal{Z}'$-compliant quota equilibrium, we must verify that $\mathcal{E}'$ satisfies the assumptions of \cref{finitemainexpd}. It is easy to see that: 
\begin{enumerate}
    \item By the construction of $P'_{\omega}$, $P'_{\omega}$ takes value in $\mathcal{P}_{H}$ for all $\omega\in \Omega$. By \cref{esto0} and \cref{conditionexplain}, we have $e_{\omega}\in X_\omega-\sum_{j\in J}\theta_{\omega j}(Y_j+\{E(m^{(j)})\})$. Hence, we have $e(\omega)\in X_{\omega}-\sum_{j\in J}\theta_{\omega j}Y'_{j}$ for all $\omega\in \Omega$;
    \item there exists an agent $\omega_0\in \Omega$ such that the set $X_{\omega_0} - \sum_{j\in J}\theta_{\omega_{0} j}Y'_j$ has non-empty interior $U_{\omega_0}\subset \Reals^{\ell}$ and $e(\omega_0)\in U_{\omega_0}$;
\item there exists 
a commodity $s\in \{1,2,\dotsc,\ell\}$ such that:
\begin{itemize}
    \item the projection $\pi_{s}(X_{\omega_0})$ is unbounded, and the agent $\omega_0$ has a strongly monotone preference on the commodity $s$;
    \item for every $\omega\in \Omega$, there is an open set $V_{\omega}$ containing the $s$-th coordinate $e(\omega)_{s}$ of $e(\omega)$ such that $(e(\omega)_{-s},v)\in X_{\omega} - \sum_{j\in J}\theta_{\omega j}Y'_j$ for all $v\in V_{\omega}$;
\end{itemize}  
    \item The projection of $\mathcal{Z}'$ to the first $k$-th coordinates is $\{0\}$.
\end{enumerate}
For every $i\in \Omega$, let $\hat{X}'_{i}$ be the set of quota-compliant consumption for the $i$-th agent for the economy $\mathcal{E}'$. In particular, $\hat{X}'_{i}$ takes the form of: 
\[
\left\{x_{i}\in X_{i} : \exists (\tilde{x}, y)\in \prod_{\omega\neq i}X_{\omega}\times \prod_{j\in J}Y'_{j}, x_{i}+\sum_{\omega\neq i} \tilde{x}(\omega)-\sum_{\omega\in \Omega} e(\omega) -\sum_{j\in J}y(j)\in \mathcal{Z}'\right\}. \nonumber
\]
It is straightforward to verify that $\hat{X}'_{i}$ is the same set as $\hat{X}_{i}$. Hence, $\hat{X}'_{\omega}$ is compact for all $\omega\in \Omega$ since $\hat{X}_{\omega}$ is compact for all $\omega\in \Omega$.

\begin{claim}\label{eattainmod}
For  all $\omega\in \Omega$ and all quota-compliant consumption-production pairs $(x, y)$ of $\mathcal{E}'$, there exists $u\in X_{\omega}$ so that $(u, x(\omega))\in \bigcap_{p\in \Delta\cap (\mathcal{Z}')^{0}}P'_{\omega}(x,y,p)$. 
\end{claim}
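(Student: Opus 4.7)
The plan is to reduce the claim to assumption~(iv) of Theorem~\ref{standardqtmain} by exploiting the fact that the derived economy $\mathcal{E}'$ is obtained from $\mathcal{E}$ by a pure translation of production sets. Concretely, the map $y \mapsto y(\mathcal{E})$, defined componentwise by $y(\mathcal{E})_j = y_j - E(m^{(j)})$, is a bijection $Y' \to Y$, and the preference map is set up so that $P'_\omega(x, y, p) = P_\omega(x, y(\mathcal{E}), p)$ for every $(x,y,p) \in \mathcal{A}\times Y'\times \Delta$. Thus it suffices to transfer quota-compliance from $\mathcal{E}'$ to $\mathcal{E}$ and then quote assumption~(iv).

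First I would show that if $(x, y) \in \mathcal{A}\times Y'$ is a $\mathcal{Z}'$-compliant consumption-production pair of $\mathcal{E}'$, then $(x, y(\mathcal{E})) \in \mathcal{O}$. By the feasibility of $(x,y)$ in $\mathcal{E}'$, the vector $v := \sum_{\omega\in\Omega} x(\omega) - \sum_{\omega\in\Omega} e(\omega) - \sum_{j\in J} y(j)$ lies in $\mathcal{Z}'$, so its first $k$ coordinates are $0$ and its $n$-th coordinate lies in $\mathcal{Z}(m)_n$ for $k < n \le \ell$. Computing in $\mathcal{E}$,
\[
\sum_{\omega\in\Omega} x(\omega) - \sum_{\omega\in\Omega} e(\omega) - \sum_{j\in J} y(\mathcal{E})(j) \;=\; v + \sum_{j\in J} E(m^{(j)}) \;=\; v + E(m),
\]
whose first $k$ coordinates are $m$ and whose remaining coordinates coincide with those of $v$; hence this element lies in $\{m\}\times \prod_{k<n\le\ell}\mathcal{Z}(m)_n = \mathcal{Z}(m)$, so $(x, y(\mathcal{E})) \in \mathcal{O}$.

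Next, I would invoke assumption~(iv) of Theorem~\ref{standardqtmain} applied to the pair $(x, y(\mathcal{E}))\in \mathcal{O}$ and the agent $\omega$, obtaining $u \in X_\omega$ with $(u, x(\omega)) \in \bigcap_{p \in \Delta\cap (\mathcal{Z}')^0} P_\omega(x, y(\mathcal{E}), p)$. Finally, using the definition of $P'_\omega$ in item~(4) of the construction of $\mathcal{E}'$, the identity $P'_\omega(x, y, p) = P_\omega(x, y(\mathcal{E}), p)$ holds for all $p$, so $(u, x(\omega)) \in \bigcap_{p \in \Delta\cap (\mathcal{Z}')^0} P'_\omega(x, y, p)$, which is the desired conclusion.

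There is no real obstacle here; the claim is essentially a bookkeeping statement that assumption~(iv) is inherited by the shifted economy. The only point worth being careful about is the sign/shift convention when translating the feasibility residual from $\mathcal{Z}'$ to $\mathcal{Z}(m)$, and the fact that the relevant polar cone $(\mathcal{Z}')^0$ is the same in both economies because $\mathcal{Z}'$ depends only on the quota-compliance directions for the unregulated commodities and not on the specific allocation of $m$ across firms.
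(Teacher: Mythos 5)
Your proof is correct and follows essentially the same route as the paper: transfer $\mathcal{Z}'$-compliance of $(x,y)$ in $\mathcal{E}'$ to $\mathcal{Z}(m)$-compliance of $(x,y(\mathcal{E}))$ in $\mathcal{E}$, apply assumption (iv) of \cref{standardqtmain}, and conclude via the identity $P'_{\omega}(x,y,p)=P_{\omega}(x,y(\mathcal{E}),p)$. The only difference is that you spell out the translation computation $v\mapsto v+E(m)$, which the paper leaves implicit.
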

\begin{proof}
Fix $\omega\in \Omega$.
Let $(x, y)$ be a quota-compliant consumption-production pair of $\mathcal{E}'$. Then $(x, y(\mathcal{E}))$ is a quota-compliant consumption-production pair of $\mathcal{E}$. So there exists $u\in X_{\omega}$ such that $(u, x(\omega))\in \bigcap_{p\in \Delta\cap (\mathcal{Z}')^{0}}P_{\omega}(x,y(\mathcal{E}),p)$. As $P'_{\omega}(x,y,p)=P_{\omega}(x,y(\mathcal{E}),p)$ for all $p\in \Delta$, we have $(u, x(\omega))\in \bigcap_{p\in \Delta\cap (\mathcal{Z}')^{0}}P'_{\omega}(x,y,p)$.
\end{proof}

\begin{claim}\label{barYclaim}
The aggregate production set $\bar{Y}'=\sum_{j\in J}Y'_{j}$ is closed and convex. For each $j\in J$, the set $\hat{Y}'_{j}$ of quota-compliant production plans for the $j$-th producer for the economy $\mathcal{E}'$ is relatively compact. Finally, the set $\bar{Y}'+\mathcal{Z}'$ is closed. 
\end{claim}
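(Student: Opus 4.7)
The plan is to reduce all three statements to the corresponding properties of $\bar Y$, $\hat Y_j$, and $\bar Y + \mathcal{Z}(m)$, which are given as hypotheses of \cref{standardqtmain}. The key observation is that passing from $\mathcal{E}$ to $\mathcal{E}'$ amounts to translating each $Y_j$ by the fixed vector $E(m^{(j)})$ and simultaneously translating the quota-compliance region by $-E(m)$. Concretely, I would verify at the outset the identities $\bar Y' = \bar Y + \{E(m)\}$ (since $\sum_{j\in J}E(m^{(j)}) = E(m)$) and $\mathcal{Z}' + \{E(m)\} = \mathcal{Z}(m)$, the second identity being immediate from the definitions because $E(m)$ has $m$ in its first $k$ coordinates and zero elsewhere, while $\mathcal{Z}'$ has zero in its first $k$ coordinates and $\mathcal{Z}(m)_n$ in the remaining ones.

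Granted these two identities, the first assertion is immediate: $\bar Y' = \bar Y + \{E(m)\}$ is a translate of a closed convex set and hence closed and convex. For the third assertion, I would compute $\bar Y' + \mathcal{Z}' = \bar Y + \{E(m)\} + \mathcal{Z}' = \bar Y + \mathcal{Z}(m)$, which is closed by hypothesis~(vi) of \cref{standardqtmain}.

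For the second assertion, I would show that $\hat Y'_j = \hat Y_j + \{E(m^{(j)})\}$ by a direct unpacking of \cref{def_attainable} applied to $\mathcal{E}'$. Given $y_j \in \hat Y'_j$, write $y_j = z_j + E(m^{(j)})$ with $z_j \in Y_j$ and similarly $y'_i = z'_i + E(m^{(i)})$ for $i\neq j$; then the feasibility condition $\sum_{\omega}x(\omega)-\sum_{\omega}e(\omega)-y_j-\sum_{i\neq j}y'_i \in \mathcal{Z}'$ rearranges, using $\sum_{i}E(m^{(i)}) = E(m)$ and $\mathcal{Z}' + \{E(m)\} = \mathcal{Z}(m)$, to $\sum_{\omega}x(\omega)-\sum_{\omega}e(\omega)-z_j-\sum_{i\neq j}z'_i \in \mathcal{Z}(m)$, which says exactly $z_j \in \hat Y_j$. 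The reverse inclusion follows by the same computation. Since $\hat Y_j$ is relatively compact by hypothesis, its translate $\hat Y'_j$ is relatively compact as well.

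No real obstacle is anticipated; the entire claim is a bookkeeping consequence of the translation that defines $\mathcal{E}'$. The only point requiring a modicum of care is keeping track of the sign conventions in $E(m^{(j)})$ and the asymmetric definition of $\mathcal{Z}'$ versus $\mathcal{Z}(m)$, so I would state the two identities $\bar Y' = \bar Y + \{E(m)\}$ and $\mathcal{Z}' + \{E(m)\} = \mathcal{Z}(m)$ as preliminary lemmas and then derive each of the three conclusions as a one-line translation argument.
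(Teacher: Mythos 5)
Your proposal is correct and follows essentially the same route as the paper's proof: both rest on the identities $\bar Y' = \bar Y + \{E(m)\}$ and $\bar Y' + \mathcal{Z}' = \bar Y + \mathcal{Z}(m)$, together with the observation that $y_j \in \hat Y'_j$ if and only if $y_j - E(m^{(j)}) \in \hat Y_j$, so every conclusion reduces to translation invariance of closedness, convexity, and relative compactness. No gaps.
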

\begin{proof}
Note that $\bar{Y}'=\bar{Y}+\{E(m)\}$, where $E(m)\in \NegReals^{\ell}$ is the vector such that its projection to the first $k$th coordinates is $m$ and its rest coordinates are $0$. As $\{E(m)\}$ is a singleton and $\bar{Y}$ is closed and convex, $\bar{Y}'$ is closed and convex. Note that $\bar{Y}'+\mathcal{Z}'=\bar{Y}+\{E(m)\}+\mathcal{Z}'=\bar{Y}+\mathcal{Z}(m)$. Hence, we know that $\bar{Y}'+\mathcal{Z}'$ is closed. For every $j\in J$, the set $\hat{Y}'_{j}$ of quota-compliant production plans for the $j$-th producer for the economy $\mathcal{E}'$ is:
\[
\left\{y_j\in Y'_j : \exists (x, \tilde{y})\in \cA\times \prod_{i\neq j}Y'_{i}, \sum_{\omega\in \Omega} x(\omega)-\sum_{\omega\in \Omega} e(\omega)- y_j-\sum_{i\neq j}\tilde{y}(i)\in \mathcal{Z}'\right\}. \nonumber
\] 
Note that an element $y_j\in \hat{Y}'_{j}$ if and only if $y_j-E(m^{(j)})\in \hat{Y}_{j}$. As $\hat{Y}_{j}$ is relatively compact for every $j\in J$, $\hat{Y}'_{j}$ is relatively compact for every $j\in J$.  
\end{proof}
By \cref{finitemainexpd},
there is a $\mathcal{Z}'$-compliant quota equilibrium $(\bar{x}, \bar{y}, \bar{p})$ for $\mathcal{E}'$ where $\bar{p}\in (\mathcal{Z}')^{0}$.
We now show that $(\bar{x}, \bar{y}(\mathcal{E}), \bar{p})$ is a quota equilibrium for $\mathcal{E}$:
\begin{enumerate}
    \item Note that we have $\bar{p}\cdot \bar{y}(j)=\bar{p}\cdot \bar{y}(\mathcal{E})(j)+\pi_{k}(\bar{p})\cdot m^{(j)}$. For every $j\in J$, we have $\bar{y}(j)\in \argmax_{z\in Y'_{j}}\bar{p}\cdot z$. As $\pi_{k}(\bar{p})\cdot m^{(j)}$ is a constant over $Y_{j}$, we have $\bar{y}(\mathcal{E})(j)\in S^{m}_{j}(\bar{p})$ for all $j\in J$;
    \item As $\sum_{\omega\in \Omega}\bar{x}(\omega)-\sum_{\omega\in\Omega}e(\omega)-\sum_{j\in J} \bar{y}(j)\in \mathcal{Z}'$, we have 
    \[
    &\sum_{\omega\in \Omega}\bar{x}(\omega)-\sum_{\omega\in \Omega}e(\omega)-\sum_{j\in J} \bar{y}(\mathcal{E})(j) \nonumber\\ 
    &=\sum_{\omega\in \Omega}\bar{x}(\omega)-\sum_{\omega\in \Omega}e(\omega)-\sum_{j\in J} \bar{y}(j)+E(m)\in \mathcal{Z}(m). \nonumber
    \]
\end{enumerate}
\begin{claim}\label{derebatein}
$\bar{x}(\omega)\in D^{m}_{\omega}(\bar{x},\bar{y}(\mathcal{E}),\bar{p})$ for all $\omega\in \Omega$. 
\end{claim}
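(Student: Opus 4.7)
The plan is to show that the budget sets and preferences at $(\bar{x},\bar{y},\bar{p})$ in $\mathcal{E}'$ coincide exactly with those at $(\bar{x},\bar{y}(\mathcal{E}),\bar{p})$ in $\mathcal{E}$; optimality on one side then transports immediately to the other. The main computation is keeping track of how the shift $Y'_j = Y_j + \{E(m^{(j)})\}$ and the vanishing quota $\tilde{m}^{(j)}=0$ in $\mathcal{E}'$ combine to reproduce the revenue structure $p\cdot y(j) + \pi_k(p)\cdot m^{(j)}$ of $\mathcal{E}$.

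First, I would compute the $\mathcal{E}'$-budget set. Because each firm's quota is $0$ in $\mathcal{E}'$, the budget is
\[
B'_{\omega}(\bar{y},\bar{p}) = \Of{z\in X_\omega : \bar{p}\cdot z \le \bar{p}\cdot e(\omega) + \sum_{j\in J}\theta_{\omega j}\,\bar{p}\cdot \bar{y}(j)}.
\]
Since $\bar{y}(j) = \bar{y}(\mathcal{E})(j) + E(m^{(j)})$, we get $\bar{p}\cdot \bar{y}(j) = \bar{p}\cdot \bar{y}(\mathcal{E})(j) + \pi_k(\bar{p})\cdot m^{(j)}$, and hence $B'_{\omega}(\bar{y},\bar{p}) = B^{m}_{\omega}(\bar{y}(\mathcal{E}),\bar{p})$. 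This is a one-line translation, so no serious obstacle arises here.

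Second, I would record that by the very definition of $P'_\omega$ we have
\[
P'_{\omega}(\bar{x},\bar{y},\bar{p}) = P_{\omega}(\bar{x},\bar{y}(\mathcal{E}),\bar{p}),
\]
so the binary relations $\succ'_{\bar{x},\bar{y},\omega,\bar{p}}$ and $\succ_{\bar{x},\bar{y}(\mathcal{E}),\omega,\bar{p}}$ on $X_\omega$ are identical.

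Finally, from $\bar{x}(\omega)\in D'_{\omega}(\bar{x},\bar{y},\bar{p})$ (the first equilibrium condition for $\mathcal{E}'$), any $w\in X_\omega$ with $w\succ'_{\bar{x},\bar{y},\omega,\bar{p}}\bar{x}(\omega)$ must lie outside $B'_\omega(\bar{y},\bar{p})$. Using the two equalities above, this is exactly the statement that any $w\in X_\omega$ strictly preferred to $\bar{x}(\omega)$ under $\succ_{\bar{x},\bar{y}(\mathcal{E}),\omega,\bar{p}}$ lies outside $B^m_\omega(\bar{y}(\mathcal{E}),\bar{p})$, i.e., $\bar{x}(\omega)\in D^{m}_{\omega}(\bar{x},\bar{y}(\mathcal{E}),\bar{p})$. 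This completes the verification, and together with the earlier checks on $S^m_j(\bar{p})$ and the feasibility relation in $\mathcal{Z}(m)$, finishes the proof of \cref{standardqtmain}. The only subtle point is the bookkeeping involving $E(m^{(j)})$ and $\pi_k(\bar{p})\cdot m^{(j)}$, but this is purely mechanical rather than a conceptual obstacle.
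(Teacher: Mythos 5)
Your proposal is correct and follows essentially the same route as the paper: identify the two budget sets via $\bar{p}\cdot \bar{y}(j)=\bar{p}\cdot \bar{y}(\mathcal{E})(j)+\pi_{k}(\bar{p})\cdot m^{(j)}$, identify the preferences via $P'_{\omega}(\bar{x},\bar{y},\bar{p})=P_{\omega}(\bar{x},\bar{y}(\mathcal{E}),\bar{p})$, and transport optimality from $D'_{\omega}$ to $D^{m}_{\omega}$. No gaps.
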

\begin{proof}
Note that $\bar{p}\cdot \bar{y}(j)=\bar{p}\cdot \bar{y}(\mathcal{E})(j)+\pi_{k}(\bar{p})\cdot m^{(j)}$ for all $j\in J$. 
Thus, for all $\omega\in \Omega$, the quota budget set $B'_{\omega}(\bar{y}, \bar{p})$ for agent $\omega$ of the economy $\mathcal{E}'$ can be written as: 
\[
\left\{z\in X_{\omega}: \bar{p}\cdot z \leq \bar{p}\cdot e(\omega)+\sum_{j\in J}\theta_{\omega j}\big(\bar{p}\cdot \bar{y}(\mathcal{E})(j)+\pi_{k}(\bar{p})\cdot m^{(j)}\big)\right\}, \nonumber
\]
which is the same as the quota budget set $B^{m}_{\omega}(\bar{y}(\mathcal{E}),\bar{p})$ of the economy $\mathcal{E}$. As $P_{\omega}(\bar{x},\bar{y}(\mathcal{E}),\bar{p})=P'_{\omega}(\bar{x},\bar{y},\bar{p})$ for all $\omega\in \Omega$, the quota demand set $D'_{\omega}(\bar{x},\bar{y}, \bar{p})$ for agent $\omega$ of the economy $\mathcal{E}'$ is the same as the quota demand set $D^{m}_{\omega}(\bar{x},\bar{y}(\mathcal{E}),\bar{p})$ of the economy $\mathcal{E}$. Thus, we conclude that $\bar{x}(\omega)\in D^{m}_{\omega}(\bar{x},\bar{y}(\mathcal{E}),\bar{p})$ for all $\omega\in \Omega$.
\end{proof}
By \cref{derebatein},  $(\bar{x}, \bar{y}(\mathcal{E}), \bar{p})$ is a quota equilibrium for $\mathcal{E}$ where $\bar{p}\in (\mathcal{Z}')^{0}$.
\end{proof}

\cref{standardqtmain} shows that the production economy with quota model, defined in \cref{defmeasureeco}, has a quota equilibrium under moderate assumptions. At any equilibrium, the total net emission of the first $k$ commodities equals the pre-specified quota. 

We now turn our attention to allocation of quotas between the government and private firms and how it affects agents' equilibrium consumption. 

\begin{customexp} {1}\label{exampleproperty}\normalfont
Let $\mathcal{E}=\{(X, \succ_{\omega}, P_\omega, e_\omega, \theta)_{\omega\in \Omega}, (Y_j)_{j\in J}, (m^{(j)})_{j\in J}, \mathcal{Z}(m)\}$ be a production economy with quota such that:
\begin{enumerate}
\item The economy $\mathcal{E}$ has three commodities $\mathrm{CO}_{2}$, coal and electricity, which we denote by $c_1$, $c_2$ and $c_3$;

\item There are two agents with consumption sets $X_{1}=X_{2}=\{0\}\times \NNReals^{2}$ and endowments $e(1)=e(2)=(0,1,0)$. 
Given the total net emission $v$ of $\mathrm{CO}_{2}$, the two agents have the same utility function $u_{v}(c_1,c_2,c_3)=c_3-\frac{v^{2}}{2}$;
    
\item The government firm's production set $Y_0$ is the singleton $\{0\}$. There is a single private firm with the production set $Y=\{(r,-r,r): r\in \NNReals\}$. So the private firm has the production technology to burn $r$ units of coal to generate $r$ units of electricity and $r$ units of $\mathrm{CO}_{2}$ as byproduct;

\item Let $m$ be the negative of the quota on total net emission of $\mathrm{CO}_{2}$. 
The quota-compliance region $\mathcal{Z}(m)$ is $\{m\}\times \NegReals\times \{0\}$; we require non-free-disposal of electricity while allowing for free disposal of coal at equilibrium. 
The quota $m^{(0)}$ for the government firm and the quota $m^{(1)}$ for the private firm will be specified later;

\item The government determines that the profit of the government firm is distributed equally to all agents. That is, we have $\theta(1)(0)=\theta(2)(0)=\frac{1}{2}$. The private firm is completely owned by the first agent. That is, we have $\theta(1)(1)=1$ and $\theta(2)(1)=0$. 
\end{enumerate}

Note that the two consumers are identical except for their shareholdings of the private firm, and the economy $\mathcal{E}$ can at most produce $2$ units of $\mathrm{CO}_{2}$. 
By \cref{standardqtmain}, $\mathcal{E}$ has a quota equilibrium for every $m\in [-2, 0]$ and every pair of $(m^{(0)},m^{(1)})\in [-2,0]\times [-2,0]$ with $m^{(0)}+m^{(1)}=m$. Since no electricity will be produced if we set the quota to be $0$, we focus on the case where $m\in (-2, 0)$.\footnote{Since no firm has $\mathrm{CO}_{2}$ sequestration technology, \cref{esto0} of \cref{standardqtmain} is satisfied if and only if the quota is not $0$. In fact, \cref{esto0} is usually satisfied if, for each regulated commodity, either the quota is positive or there is a firm that has the technology to sequester the commodity. 
In this example, the equilibrium price, the equilibrium production of the private firm and agents' equilibrium consumption on electricity are uniquely determined for $m\in (-2,0)$. This is not the case if $m=-2$ or $m=0$.}
We shall show that the allocation of quotas between the government firm and private firms has a significant impact on the welfare of consumers at the resulting equilibrium. In particular, when the entire quota is allocated to the government firm, the two agents' equilibrium consumptions are the same. On the other hand, when the entire quota is allocated to the private firm, at any quota equilibrium, the first agent consumes all of the electricity and the second agent consumes nothing.

We first consider the case where all quota for $\mathrm{CO}_{2}$ emission is allocated to the government firm, i.e., we consider a model of global quota on $\mathrm{CO}_{2}$ emission.  
\begin{claim}\label{governquota}
Let $m\in (-2,0)$. 
Suppose $m^{(0)}=m$ and $m^{(1)}=0$. Then, for any equilibrium $(\bar{x},\bar{y},\bar{p})$, the private firm's equilibrium production is $(-m,m,-m)$, the equilibrium price is $\bar{p}=(-\frac{1}{2},0,\frac{1}{2})$ and both agents' equilibrium consumptions of electricity are $\frac{-m}{2}$.
\end{claim}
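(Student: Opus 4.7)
The plan is to chain four observations: feasibility for $\mathrm{CO}_2$ pins down the private firm's production, profit maximization yields a linear relation among prices, monotonicity plus market clearing determine which prices are zero or nonzero, and normalization fixes the magnitudes.

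First, the feasibility condition $\sum_{\omega}\bar{x}(\omega) - \sum_{\omega}e(\omega) - \bar{y}(1)\in\{m\}\times\NegReals\times\{0\}$, read in the first coordinate (where both $X_\omega$ and $e(\omega)$ vanish), forces the private firm to burn $r:=-m>0$ units of coal, so $\bar{y}(1)=(-m,m,-m)$; the third coordinate then forces $\bar{x}_1(3)+\bar{x}_2(3)=-m$. Since $r\mapsto r(\bar{p}_1-\bar{p}_2+\bar{p}_3)$ attains an interior maximum at $r=-m>0$, profit maximization gives the no-arbitrage identity $\bar{p}_1-\bar{p}_2+\bar{p}_3=0$, whence $\bar{p}\cdot\bar{y}(1)=0$. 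Since each agent's effective utility $c_3-m^2/2$ (with $v=-m$ pinned down by feasibility) is strongly monotone in $c_3$ and indifferent to $c_2$, I also get $\bar{p}_3>0$ and $\bar{p}_2\geq 0$: otherwise the corresponding demand is unbounded and market clearing fails.

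The crucial step is to rule out $\bar{p}_2>0$. If it held, each agent would set $\bar{x}_\omega(2)=0$ and, since $\bar{p}\cdot\bar{y}(1)=0$ means the private firm contributes nothing to budgets, would have total income $\bar{p}_2+\bar{p}_1 m/2$ (endowment value $\bar{p}_2$ plus her $1/2$ share of the government firm's quota revenue $\bar{p}_1 m$), all of it spent on electricity, giving $\bar{x}_\omega(3)=(\bar{p}_2+\bar{p}_1 m/2)/\bar{p}_3$. Summing over the two agents, equating to supply $-m$, and substituting $\bar{p}_3=\bar{p}_2-\bar{p}_1$ collapses to $(2+m)\bar{p}_2=0$; since $m>-2$ this forces $\bar{p}_2=0$, a contradiction. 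Hence $\bar{p}_2=0$, so $\bar{p}_3=-\bar{p}_1>0$, and the normalization $|\bar{p}_1|+|\bar{p}_3|=1$ yields $\bar{p}=(-1/2,0,1/2)$.

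At this price each agent's budget equals $-m/4>0$, so maximizing $c_3$ over the budget set (with $\bar{p}_3=1/2$) gives $\bar{x}_\omega(3)=-m/2$, as claimed. The main obstacle is the case analysis eliminating $\bar{p}_2>0$; everything else is forced by feasibility, profit maximization, and normalization. Note that coal consumption is not uniquely pinned down since $\bar{p}_2=0$ leaves consumers indifferent, but this is not asserted in the claim.
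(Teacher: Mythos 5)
Your proof is correct and follows essentially the same route as the paper's: feasibility pins down the firm's production, boundedness of profit gives $\bar{p}_1-\bar{p}_2+\bar{p}_3=0$, the market-clearing/budget computation collapses to $(2+m)\bar{p}_2=0$, and normalization plus $\bar{p}_3>0$ yields the price. The only cosmetic difference is that the paper invokes the agents' symmetry to assign each $-m/2$ units of electricity before solving for prices, whereas you aggregate the two demands first and back out individual consumption at the end — the underlying algebra is identical.
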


The proof of \cref{governquota} is presented in \cref{appendixexp1}.
We now consider the case where all quota for $\mathrm{CO}_{2}$ emission is allocated to the private firm, i.e., our model captures cap and trade on $\mathrm{CO}_{2}$ emission rights.  
As we will see, this results in equilibrium with quite different consumptions by the two agents; the agent who owns the first firm captures all of the consumption of electricity. 

\begin{claim}\label{privatequota}
Let $m\in (-2,0)$. 
Suppose $m^{(0)}=0$ and $m^{(1)}=m$. Then, for any equilibrium $(\bar{x},\bar{y},\bar{p})$,
the private firm's production is $(-m,m,-m)$, the equilibrium price is $\bar{p}=(-\frac{1}{2},0,\frac{1}{2})$ and the equilibrium consumption of electricity of the first agent is $-m$ while the equilibrium consumption of electricity of the second agent is $0$. 
\end{claim}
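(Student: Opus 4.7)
The plan is to back out the equilibrium production from feasibility, pin down the price through profit maximization paired with Walras' law, and then compute budgets to determine consumption. Throughout, I would use that only $\mathrm{CO}_2$ is regulated, so $k=1$ and $\pi_{k}(\bar{p}) = p_1$.

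First, writing $\bar{y}(1) = (r, -r, r)$ for some $r \ge 0$ and applying the feasibility condition $\sum_\omega \bar{x}(\omega) - \sum_\omega e(\omega) - \bar{y}(1) \in \mathcal{Z}(m) = \{m\} \times \NegReals \times \{0\}$: the first coordinate forces $r = -m > 0$, hence $\bar{y}(1) = (-m, m, -m)$, and the third yields electricity clearing $\bar{x}(1)_3 + \bar{x}(2)_3 = -m$. Profit maximization then constrains prices: the firm's profit at $(r, -r, r)$ equals $r(p_1 - p_2 + p_3) + p_1 m$, and since $r = -m > 0$ is chosen, I would conclude $p_1 - p_2 + p_3 = 0$. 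I would also note $p_3 > 0$ (otherwise agent 1's strictly monotone preference in $c_3$ would generate unbounded demand, precluding equilibrium), while $p_2 \ge 0$ follows from $\bar{p} \in (\mathcal{Z}')^{0}$ as guaranteed by \cref{standardqtmain}.

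The main obstacle is showing $p_2 = 0$. For this I would invoke Walras' law: strict monotonicity in $c_3$ together with $p_3 > 0$ forces each agent's budget to bind, and summing the two budget identities together with the third-coordinate clearing and the firm's profit identity yields $p_2 \bigl( \sum_\omega \bar{x}(\omega)_2 - (2+m) \bigr) = 0$, i.e., $p_2 z_2 = 0$ where $z_2 \le 0$ is the coal component of the excess vector. Suppose for contradiction that $p_2 > 0$; then $z_2 = 0$ and thus $\sum_\omega \bar{x}(\omega)_2 = 2 + m > 0$. However, at any price with $p_2 > 0$ and $p_3 > 0$, each agent's utility maximization on a nonnegative budget uniquely yields $c_2 = 0$ (any $c_2 > 0$ could be traded for $p_2 c_2 / p_3 > 0$ additional units of $c_3$ within budget, strictly improving utility), so total coal demand is $0$, contradicting $\sum_\omega \bar{x}(\omega)_2 = 2 + m > 0$. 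Hence $p_2 = 0$.

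With $p_2 = 0$ in hand, combining $p_1 + p_3 = 0$, $p_3 > 0$, and the normalization $\|\bar{p}\| = 1$ uniquely yields $\bar{p} = (-1/2, 0, 1/2)$. The consumption claim is then direct: agent 1's income is $\bar{p} \cdot e(1) + \bar{p} \cdot \bar{y}(1) + p_1 m^{(1)} = 0 + 0 + (-1/2)m = -m/2$, so her electricity demand at $p_3 = 1/2$ is $\bar{x}(1)_3 = -m$; agent 2 owns no share of the private firm and receives nothing from the government firm (since $m^{(0)} = 0$ and $\bar{y}(0) = 0$), so her income is $\bar{p} \cdot e(2) = 0$ and $\bar{x}(2)_3 = 0$. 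This matches electricity clearing $-m + 0 = -m$, completing the characterization. The only truly delicate point is isolating $p_2 = 0$ via the Walras identity paired with the strict-preference argument on $c_2$; the rest is routine algebra.
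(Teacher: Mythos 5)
Your proposal is correct in substance and reaches the same conclusion, but the key step---establishing $\bar{p}_2=0$---is handled by a genuinely different argument than the paper's. The paper splits into three cases ($m<-1$, $m=-1$, $m>-1$), computes each agent's budget explicitly, and shows that $\bar{p}_2\neq 0$ would force aggregate electricity demand to differ from the $-m$ units available. You instead sum the two binding budget identities to get the Walras-type relation $\bar{p}_2 z_2=0$ with $z_2\leq 0$ the coal component of the excess vector, and then rule out $\bar{p}_2>0$ by observing that no agent demands coal at a positive coal price. Your route avoids the case analysis entirely and is arguably cleaner; the paper's route is more hands-on but makes visible how the budgets and the bound $m>-2$ interact. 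The remaining steps (production pinned down by the first coordinate of the feasibility condition, $\bar{p}_1-\bar{p}_2+\bar{p}_3=0$ from profit maximization, $\bar{p}_3>0$ from monotonicity, and the final budget computations) coincide with the paper's.

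One point needs repair. You justify $\bar{p}_2\geq 0$ by appealing to the conclusion of \cref{standardqtmain} that the equilibrium price lies in $(\mathcal{Z}')^{0}$. That theorem is an existence result: it produces \emph{one} equilibrium whose price lies in $(\mathcal{Z}')^{0}$, whereas \cref{privatequota} quantifies over \emph{all} equilibria, so the citation does not cover an arbitrary equilibrium. The fix is immediate and keeps your structure intact: if $\bar{p}_2<0$ (with $\bar{p}_3>0$ already established), an agent can purchase arbitrarily much coal to finance arbitrarily much electricity, so her demand set is empty and no equilibrium can have such a price; alternatively, note that your Walras identity $\bar{p}_2 z_2=0$ already forces $z_2=0$ whenever $\bar{p}_2\neq 0$, and the case $\bar{p}_2<0$ can be dispatched by the same unbounded-demand observation. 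With that substitution the proof is complete.
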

The proof of \cref{privatequota} is presented in \cref{appendixexp1}. \cref{governquota} and \cref{privatequota} jointly imply that allocation of quota has a major impact on the agent's welfare.

\end{customexp}

\cref{exampleproperty} shows agents' equilibrium consumption can differ vastly under different quota allocations. On the other hand, every quota equilibrium can be realized as a global quota equilibrium by carefully setting the 
shareholdings of the government firm, in order to match the equilibrium distribution of the revenue from selling the quota. 

\begin{theorem}\label{equnderctisquota}
Let $\mathcal E=\{(X, \succ_{\omega}, P_\omega, e_\omega, \theta)_{\omega\in \Omega}, (Y_j)_{j\in J}, (m^{(j)})_{j\in J}, \mathcal{Z}(m)\}$ be a finite production economy with quota
and $(\bar{x},\bar{y},\bar{p})$ be a quota equilibrium.
Consider a finite production economy with quota 
$\mathcal{F}=\{(X, \succ_{\omega}, P_\omega, e_\omega, \tilde{\theta})_{\omega\in \Omega}, (Y_j)_{j\in J}, (\tilde{m}^{(j)})_{j\in J}, \mathcal{Z}(m)\}$: 
\begin{enumerate}
\item all private firm's quota are $0$, and the government firm's quota is $-m$,\footnote{Note that we have $\sum_{j\in J}\tilde{m}^{(j)}=\sum_{j\in J}m^{(j)}=m$.} i.e., $\mathcal{F}$ is a global quota economy;
\item $\tilde{\theta}(\omega)(j)=\theta(\omega)(j)$ for every private firm $j\in J$, and $\tilde{\theta}(\omega)(0)=\frac{\sum_{j\in J}\theta_{\omega j}\pi_{k}(\bar{p})\cdot m^{(j)}}{\pi_{k}(\bar{p})\cdot m}$.
\end{enumerate}
Then $(\bar{x},\bar{y},\bar{p})$ is a $\mathcal{Z}(m)$-compliant global quota equilibrium for $\mathcal{F}$. 
\end{theorem}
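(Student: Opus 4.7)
The plan is to verify directly that $(\bar x,\bar y,\bar p)$ satisfies the three defining conditions for a $\mathcal Z(m)$-compliant quota equilibrium of $\mathcal F$, by matching each quantity to its counterpart in $\mathcal E$. The design of $\tilde\theta(\omega)(0)$ is engineered precisely so that every agent's aggregate quota rent in $\mathcal F$ coincides with her rent in $\mathcal E$; once this is observed, the rest is bookkeeping.

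First I would check that $\tilde\theta$ is well-defined and a valid share structure. Assuming $\pi_k(\bar p)\cdot m\neq 0$ (which is implicit in the statement), exchanging summations and using $\sum_\omega\theta_{\omega j}=1$ for each $j$ yields
\[
\sum_{\omega\in\Omega}\tilde\theta(\omega)(0)=\frac{\sum_{j\in J}\bigl(\sum_{\omega}\theta_{\omega j}\bigr)\pi_k(\bar p)\cdot m^{(j)}}{\pi_k(\bar p)\cdot m}=\frac{\sum_{j\in J}\pi_k(\bar p)\cdot m^{(j)}}{\pi_k(\bar p)\cdot m}=1,
\]
while the private firms' shares are inherited from $\theta$ and sum to one automatically.

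Feasibility transfers immediately: $\mathcal Z(m)$ is unchanged and $\sum_\omega\bar x(\omega)-\sum_\omega e(\omega)-\sum_{j\in J}\bar y(j)$ is the same vector in both economies. Profit maximization is equally trivial: since $\pi_k(p)\cdot m^{(j)}$ is a constant in the firm's production plan, $S^{\tilde m}_j(\bar p)=\argmax_{z\in Y_j}\bar p\cdot z=S^{m}_j(\bar p)$ for private firms, which already contains $\bar y(j)$; and for the government firm, $\bar y(0)=0$ is forced by $Y_0=\{0\}$ in both economies.

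The only substantive step is the identity of budget sets at $\bar p$. In $\mathcal F$, using $\tilde m^{(j)}=0$ for $j\neq 0$, $\tilde m^{(0)}=m$, $\bar y(0)=0$, and $\tilde\theta_{\omega j}=\theta_{\omega j}$ for $j\neq 0$, agent $\omega$'s income reduces to
\[
\bar p\cdot e(\omega)+\sum_{j\in J,\,j\neq 0}\theta_{\omega j}\,\bar p\cdot\bar y(j)+\tilde\theta_{\omega 0}\,\pi_k(\bar p)\cdot m.
\]
Substituting the definition of $\tilde\theta_{\omega 0}$ cancels the factor $\pi_k(\bar p)\cdot m$ and produces exactly
\[
\bar p\cdot e(\omega)+\sum_{j\in J}\theta_{\omega j}\bigl(\bar p\cdot\bar y(j)+\pi_k(\bar p)\cdot m^{(j)}\bigr),
\]
the budget $B^{m}_\omega(\bar y,\bar p)$ from $\mathcal E$. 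Because the preference map $P_\omega$ does not depend on the quota allocation, $P_\omega(\bar x,\bar y,\bar p)$ is the same object in both economies, so the demand sets coincide and $\bar x(\omega)\in D^{\tilde m}_\omega(\bar x,\bar y,\bar p)$ follows from $\bar x(\omega)\in D^{m}_\omega(\bar x,\bar y,\bar p)$. No step presents a real obstacle; the entire content sits in the budget identity, whose success is built into the definition of $\tilde\theta(\omega)(0)$ as the ratio reallocating the total rent $\pi_k(\bar p)\cdot m$ in proportion to each agent's original rent share $\sum_{j\in J}\theta_{\omega j}\,\pi_k(\bar p)\cdot m^{(j)}$.
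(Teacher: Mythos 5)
Your proof is correct and follows essentially the same route as the paper's: verify feasibility and profit maximization are unchanged, and reduce the demand condition to the budget identity $\sum_{j\in J}\theta_{\omega j}\,\pi_k(\bar p)\cdot m^{(j)}=\tilde\theta(\omega)(0)\,\pi_k(\bar p)\cdot m$, which holds by construction of $\tilde\theta(\omega)(0)$. Your additional checks (that $\tilde\theta(\cdot)(0)$ sums to one and that $\pi_k(\bar p)\cdot m\neq 0$ is implicitly assumed) are sensible sanity verifications the paper omits, but the substance is identical.
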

\begin{proof}
Agent $\omega$'s budget set $B_{\omega}^{m}(\bar{y},\bar{p})$ is
\[
\{z\in X_{\omega}: \bar{p}\cdot z\leq \bar{p}\cdot e(\omega)+\sum_{j\in J}\theta_{\omega j}(\bar{p}\cdot \bar{y}(j)+\pi_{k}(\bar{p})\cdot m^{(j)})\}. \nonumber
\]
As $\sum_{j\in J}\theta_{\omega j}\pi_{k}(\bar{p})\cdot m^{(j)}=\tilde{\theta}(\omega)(0)\pi_{k}(\bar{p})\cdot m$, the budget set $\tilde{B}_{\omega}^{m}(\bar{y},\bar{p})$ of agent $\omega$ in the economy $\mathcal{F}$ is the same as $B_{\omega}^{m}(\bar{y},\bar{p})$. 
Hence, $\bar{x}(\omega)$ is an element of the quota demand set $\tilde{D}^{m}_{\omega}(\bar{x},\bar{y},\bar{p})$ for all $\omega\in \Omega$ in the economy $\mathcal{F}$.
As $\bar{y}(j)\in S^{m}_{j}(\bar{p})$ for all $j\in J$, we have $\bar{y}(j)\in \argmax_{z\in Y_j}\bar{p}\cdot z$. Finally, as $\sum_{\omega\in \Omega}\bar{x}(\omega)-\sum_{\omega\in \Omega}e(\omega)-\sum_{j\in J}\bar{y}(j)\in \mathcal{Z}(m)$, we conclude that $(\bar{x},\bar{y},\bar{p})$ is a $\mathcal{Z}(m)$-compliant global quota equilibrium for $\mathcal{F}$.
\end{proof}

Note that the government firm's shareholdings 
in $\mathcal{F}$ in \cref{equnderctisquota} depend on the equilibrium price in $\mathcal{E}$, and the excess demand is a different function of price in the two economies; they agree only at the equilibrium price, which is the same in $\mathcal{E}$ and $\mathcal{F}$. 
Thus, different quota equilibria in $\mathcal{E}$ are global quota equilibria in possibly different global quota economies. 
However, when there is only one regulated commodity, the excess demand is the same function of price in $\mathcal{E}$ and $\mathcal{F}$, hence every equilibrium of a production economy with quota can be realized as a global quota equilibrium of a single global quota economy.

\begin{theorem}\label{capecoquota}
Let $\mathcal E=\{(X, \succ_{\omega}, P_\omega, e_\omega, \theta)_{\omega\in \Omega}, (Y_j)_{j\in J}, (m^{(j)})_{j\in J}, \mathcal{Z}(m)\}$ be a finite production economy with quota such that $k=1$.\footnote{There is a single regulated commodity. Hence, we have $m^{(j)}\in \NegReals$ for all $j\in J$.} Consider another finite production economy with quota 
$\mathcal{F}=\{(X, \succ_{\omega}, P_\omega, e_\omega, \tilde{\theta})_{\omega\in \Omega}, (Y_j)_{j\in J}, (\tilde{m}^{(j)})_{j\in J}, \mathcal{Z}(m)\}$ where:
\begin{enumerate}
\item all private firm's quota are $0$, and the government firm's quota is $-m$, i.e., $\mathcal{F}$ is a global quota economy;

\item $\tilde{\theta}(\omega)(j)=\theta(\omega)(j)$ for every private firm $j\in J$, and $\tilde{\theta}(\omega)(0)=\frac{\sum_{j\in J}\theta_{\omega j}m^{(j)}}{m}$. 
\end{enumerate}
Then every quota equilibrium in $\mathcal{E}$ is a global quota equilibrium in $\mathcal{F}$.
\end{theorem}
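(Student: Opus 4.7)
The plan is to reduce Theorem \ref{capecoquota} directly to Theorem \ref{equnderctisquota}. The key observation is that when $k=1$, the projection $\pi_k(\bar{p}) = \pi_1(\bar{p})$ is a single real number, and both $m^{(j)}$ and $m = \sum_{j\in J} m^{(j)}$ are real scalars. Consequently, in the formula for the government firm's shareholdings supplied by Theorem \ref{equnderctisquota}, the scalar $\pi_1(\bar{p})$ factors out of both numerator and denominator and cancels, yielding a shareholding that no longer depends on the equilibrium price; this is precisely the expression $\tilde{\theta}(\omega)(0) = \frac{\sum_{j \in J}\theta_{\omega j} m^{(j)}}{m}$ used to define $\mathcal{F}$.

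First I would fix an arbitrary quota equilibrium $(\bar{x}, \bar{y}, \bar{p})$ of $\mathcal{E}$. Applying Theorem \ref{equnderctisquota} produces an economy $\mathcal{F}_{\bar{p}}$ with the same consumption sets, endowments, preferences, private firms, and quota-compliance region as $\mathcal{E}$, in which the entire quota $-m$ is assigned to the government firm and the government shareholdings are
\[
\tilde{\theta}_{\bar{p}}(\omega)(0) = \frac{\sum_{j \in J}\theta_{\omega j}\pi_1(\bar{p})\, m^{(j)}}{\pi_1(\bar{p})\, m},
\]
in which $(\bar{x}, \bar{y}, \bar{p})$ is a $\mathcal{Z}(m)$-compliant global quota equilibrium. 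Whenever $\pi_1(\bar{p}) \neq 0$, the cancellation described above gives $\mathcal{F}_{\bar{p}} = \mathcal{F}$, which yields the conclusion for this equilibrium. A brief sanity check confirms that the shareholdings $\tilde{\theta}(\omega)(0)$ are non-negative and sum to one: non-negativity follows because $m^{(j)},m\leq 0$ and $\theta_{\omega j}\geq 0$, and summing over $\omega$ and using $\sum_\omega \theta_{\omega j}=1$ gives $m/m=1$.

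The one subtle point is the degenerate case $\pi_1(\bar{p}) = 0$, where the formula from Theorem \ref{equnderctisquota} is indeterminate. Here every firm's quota-selling revenue $\pi_1(\bar{p})\cdot m^{(j)}$ vanishes, so each agent's budget set in $\mathcal{E}$ and in $\mathcal{F}$ coincides regardless of the chosen value of $\tilde{\theta}(\omega)(0)$; the common budget set is simply $\{z\in X_\omega: \bar{p}\cdot z \leq \bar{p}\cdot e(\omega)+\sum_{j\in J}\theta_{\omega j}\bar{p}\cdot \bar{y}(j)\}$. Profit maximization of the private firms is unchanged because $Y_j$ is identical in the two economies, the government firm's supply set $\{0\}$ is trivially optimal, and the feasibility condition $\sum_\omega \bar{x}(\omega)-\sum_\omega e(\omega)-\sum_{j\in J}\bar{y}(j)\in \mathcal{Z}(m)$ is inherited directly from $\mathcal{E}$. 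A direct verification of the three conditions in Definition \ref{def_mameqprod} then confirms that $(\bar{x}, \bar{y}, \bar{p})$ remains a global quota equilibrium of $\mathcal{F}$. Since the entire substantive content is this one-dimensional scalar cancellation, I do not expect any significant obstacle beyond cleanly handling the edge case $\pi_1(\bar{p})=0$.
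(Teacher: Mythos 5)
Your proposal is correct and follows essentially the same route as the paper: the scalar cancellation for $k=1$ makes the government shareholding price-independent, after which the budget-set identification from Theorem \ref{equnderctisquota} applies verbatim. The paper handles your edge case $\pi_1(\bar{p})=0$ implicitly by observing that the identity $\sum_{j\in J}\theta_{\omega j}\pi_{k}(p)\cdot m^{(j)}=\tilde{\theta}(\omega)(0)\pi_{k}(p)\cdot m$ holds for \emph{all} $p\in\Delta$ (both sides vanish when $\pi_1(p)=0$), so your separate case analysis is a harmless, slightly more explicit rendering of the same argument.
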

\begin{proof}
As $k=1$, $\sum_{j\in J}\theta_{\omega j}\pi_{k}(p)\cdot m^{(j)}=\tilde{\theta}(\omega)(0)\pi_{k}(p)\cdot m$ for $p\in \Delta$. By the same proof of \cref{equnderctisquota}, a quota equilibrium in $\mathcal{E}$ is a global quota equilibrium in $\mathcal{F}$.
\end{proof}

\subsection{Welfare Theorem for Production Economies with Quota}\label{secwelfare}
Since net $\mathrm{CO}_{2}$ emissions affect the temperature, emissions is an important factor to consider in comparing a house in Minneapolis to one in Miami. In this section, we focus on this specific type of externality and investigate the welfare properties of quota equilibrium. In particular, we assume that the only externality arises from the total net emission of the regulated commodities and 
establish that every quota equilibrium consumption-production pair is constrained Pareto optimal, i.e., it is Pareto optimal among all quota-compliant consumption-production pairs with the same total net emission of the regulated commodities.
As it is possible for a quota equilibrium consumption-production pair to Pareto dominate another quota equilibrium consumption-production pair with a different total net emission of the regulated commodities, a quota equilibrium consumption-production pair is, in general, not full Pareto optimal among all quota-compliant consumption-production pairs. Full Pareto optimality, however, can often be achieved if the government sets the right quota.

The total net pollution emission depend on agents' consumption and the aggregate production. In particular, for any $(x,y)\in \mathcal{A}\times Y$ (not necessarily quota-compliant), 
$C(x,y)=\pi_{k}\big(\sum_{\omega\in \Omega}e(\omega)+\sum_{j\in J}y(j)-\sum_{\omega\in \Omega}x(\omega)\big)$ is the total net emission of the regulated commodities, where $\pi_{k}$ is the projection map onto the first $k$ coordinates.
The externality in agents' preferences needs to be taken into account in defining Pareto domination. As the only externality arises from the total net pollution emission, an agent $\omega$'s global preference relation $\succ_{\omega}$ is a preference defined on $X_{\omega}\times G$, where $G=\pi_{k}\big(\sum_{\omega\in \Omega}e(\omega)+\bar{Y}-\bar{X}\big)$\footnote{$\bar{X}=\sum_{\omega\in \Omega}X_{\omega}$ is the aggregate consumption set.} denote the set of all possible total net emissions of the regulated commodities. 
We define \emph{Pareto domination} and \emph{full Pareto optimality} as: 
\begin{definition}
For two quota-compliant consumption-production pairs $(f, y), (f',y')\in \mathcal{A}\times Y$, we say $(f, y)$ \emph{Pareto dominates} $(f', y')$ if:
 \begin{itemize}
    \item for all $\omega\in \Omega$, $(f'(\omega), C(f',y'))\not\succ_{\omega} (f(\omega), C(f,y))$;
    \item there exists some $\omega_0\in \Omega$ such that $(f(\omega_0), C(f,y))\succ_{\omega_0}(f'(\omega_0), C(f',y'))$. 
\end{itemize}
A consumption-production pair $(g, h)$ \emph{strongly Pareto dominates} another consumption-production pair $(g',h')$ if $(g(\omega),C(g,h))\succ_{\omega} (g'(\omega),C(g',h'))$ for all $\omega\in \Omega$.
A consumption-production pair $(f, y)$ is \emph{(weakly) Pareto optimal} among $F\subset \mathcal{A}\times Y$ if no consumption-production pair in $F$ (strongly) Pareto dominates $(f, y)$. A consumption-production $(f, y)$ is \emph{(weakly) full Pareto optimal} if it is (weakly) Pareto optimal among all quota-compliant consumption-production pairs.
\end{definition}

It is too much to hope that quota equilibria are full Pareto optimal. After all, setting the quota to zero would result in an immediate return to a pre-industrial society. Therefore, we need to introduce the following weakened notion of Pareto optimality to better characterize the social welfare properties of quota equilibrium. 
\begin{definition}
A quota-compliant consumption-production pair $(f,y)\in \mathcal{A}\times Y$ is \emph{(weakly) constrained Pareto optimal} if $(f, y)$ is (weakly) Pareto optimal among all quota-compliant consumption-production pairs with the same total net emissions of the regulated commodities. 
\end{definition}
Our main result in this section indicates that any quota equilibrium consumption-production pair is constrained Pareto optimal.

\begin{customthm}{2}\label{fstwelfarequota}
Let $\mathcal E=\{(X, \succ_{\omega}, P_\omega, e_\omega, \theta)_{\omega\in \Omega}, (Y_j)_{j\in J}, (m^{(j)})_{j\in J}, \mathcal{Z}(m)\}$ be a finite production economy with quota, and the only externality arises from the total net emission of the first $k$ commodities. Suppose $\mathcal{Z}(m)_{n}=\{0\}$ for all $k<n\leq \ell$, i.e., we require non-free-disposal for non-regulated commodities at equilibrium.\footnote{Even in the context of \citet{ad54} and \citet{Mck59}, the first welfare theorem may fail if some coordinates of the equilibrium prices are negative or the equilibrium allows free disposal. Since we allow for possible negative equilibrium prices for non-regulated commodities, non-free-disposal for non-regulated commodities is necessary for the validity of this theorem. In fact, if we allow for free-disposal of electricity in \cref{exampleproperty}, then there is a quota equilibrium that is not constrained Pareto optimal.}
Let $(\bar{f}, \bar{y}, \bar{p})$ be a $\mathcal{Z}(m)$-compliant quota equilibrium. Then:
\begin{enumerate}
    \item $(\bar{f},\bar{y})$ is weakly constrained Pareto optimal;
    \item Suppose $P_{\omega}(\bar{f},\bar{y},\bar{p})$ is negatively transitive and locally non-satiated for all $\omega\in \Omega$. Then $(\bar{f},\bar{y})$ is constrained Pareto optimal.
\end{enumerate}
\end{customthm}
\begin{proof}
Suppose there exists some quota-compliant consumption-production pair $(\hat{f},\hat{y})$ such that $C(\hat{f},\hat{y})=C(\bar{f},\bar{y})$ and strongly  Pareto dominates $(\bar{f},\bar{y})$. Then, we have $\big(\hat{f}(\omega), \bar{f}(\omega)\big)\in P_{\omega}(\bar{f},\bar{y},\bar{p})$ for all $\omega\in \Omega$. As $(\bar{f}, \bar{y}, \bar{p})$ is a $\mathcal{Z}(m)$-compliant quota equilibrium, we have
\[
\bar{p}\cdot \hat{f}(\omega)&>\bar{p}\cdot e(\omega)+\sum_{j\in J}\theta_{\omega j}\big(\bar{p}\cdot \bar{y}(j)+\pi_{k}(\bar{p})\cdot m^{(j)}\big)\nonumber \\ 
&\geq  \bar{p}\cdot e(\omega)+\sum_{j\in J}\theta_{\omega j}\big(\bar{p}\cdot \hat{y}(j)+\pi_{k}(\bar{p})\cdot m^{(j)}\big)\nonumber
\]
for all $\omega\in \Omega$. Thus, we have
\[
\bar{p}\big(\sum_{\omega\in \Omega}\hat{f}(\omega)-\sum_{\omega\in \Omega}e(\omega)-\sum_{j\in J} \hat{y}(j)\big)>\pi_{k}(\bar{p})\cdot m. \nonumber
\]
As $\mathcal{Z}(m)_{n}=\{0\}$ for all $k<n\leq \ell$ and $(\hat{f},\hat{y})\in [C(\bar{f},\bar{y})]_{\mathrm{total}}$, \footnote{If we know that the equilibrium prices for non-regulated commodities are non-negative, then the conclusion of the theorem remains valid without assuming $\mathcal{Z}(m)_{n}=\{0\}$ for all $k<n\leq \ell$.}we have 
\[
\bar{p}\big(\sum_{\omega\in \Omega}\hat{f}(\omega)-\sum_{\omega\in \Omega}e(\omega)-\sum_{j\in J} \hat{y}(j)\big)=\pi_{k}(\bar{p})\cdot m, \nonumber
\]
which leads to a contradiction. Hence $(\bar{f},\bar{y})$ is weakly constrained Pareto optimal.

We now show that $(\bar{f},\bar{y})$ is constrained Pareto optimal if $P_{\omega}(\bar{f},\bar{y},\bar{p})$ is negatively transitive and locally non-satiated for all $\omega\in \Omega$.
Suppose there exists some quota-compliant consumption-production pair $(\hat{f},\hat{y})$ such that $C(\hat{f},\hat{y})=C(\bar{f},\bar{y})$ and Pareto dominates $(\bar{f},\bar{y})$. Then there exists some $\omega_0\in \Omega$ such that $\big(\hat{f}(\omega_0), \bar{f}(\omega_0)\big)\in P_{\omega_0}(\bar{f},\bar{y},\bar{p})$. 
As $(\bar{f}, \bar{y}, \bar{p})$ is a $\mathcal{Z}(m)$-compliant quota equilibrium, we have:
\[
\bar{p}\cdot \hat{f}(\omega_0)>\bar{p}\cdot e(\omega_0)+\sum_{j\in J}\theta_{\omega_{0} j}\big(\bar{p}\cdot \hat{y}(j)+\pi_{k}(\bar{p})\cdot m^{(j)}\big). \nonumber
\]
To complete the proof, we need the following key result: 
\begin{claim}\label{nolesshatf}
For every $\omega\in \Omega$, we have:
\[
\bar{p}\cdot \hat{f}(\omega)\geq \bar{p}\cdot e(\omega)+\sum_{j\in J}\theta_{\omega j}\big(\bar{p}\cdot \bar{y}(j)+\pi_{k}(\bar{p})\cdot m^{(j)}\big). \nonumber
\]
\end{claim}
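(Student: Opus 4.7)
The plan is to prove the inequality by a contradiction argument, closely paralleling the classical first welfare theorem but adapted to the quota budget and to the externality formalism used above. Fix $\omega \in \Omega$. Since $(\hat f,\hat y)$ Pareto dominates $(\bar f,\bar y)$ and $C(\hat f,\hat y)=C(\bar f,\bar y)$, the ``nobody is strictly worse off'' clause of Pareto domination, together with the assumption that the only externality comes from the total net emission of regulated commodities, yields
\[
(\bar f(\omega),\hat f(\omega))\notin P_\omega(\bar f,\bar y,\bar p),
\]
i.e.\ agent $\omega$ does not strictly prefer her equilibrium bundle to $\hat f(\omega)$ at the common externality level.

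Now suppose for contradiction that the claim fails, so
\[
\bar p\cdot \hat f(\omega) < \bar p\cdot e(\omega)+\sum_{j\in J}\theta_{\omega j}\bigl(\bar p\cdot\bar y(j)+\pi_k(\bar p)\cdot m^{(j)}\bigr)=:B_\omega.
\]
By local non-satiation of $P_\omega(\bar f,\bar y,\bar p)$ at $\hat f(\omega)$, every neighborhood of $\hat f(\omega)$ in $X_\omega$ contains some $z$ with $(z,\hat f(\omega))\in P_\omega(\bar f,\bar y,\bar p)$. By continuity of the linear functional $u\mapsto \bar p\cdot u$, we may take $z$ close enough to $\hat f(\omega)$ that $\bar p\cdot z<B_\omega$, so $z\in B_\omega^{m}(\bar y,\bar p)$.

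The decisive step is to apply negative transitivity of $P_\omega(\bar f,\bar y,\bar p)$ to the pair $(z,\hat f(\omega))$: for the third element $\bar f(\omega)$, either $(z,\bar f(\omega))\in P_\omega(\bar f,\bar y,\bar p)$ or $(\bar f(\omega),\hat f(\omega))\in P_\omega(\bar f,\bar y,\bar p)$. The second alternative is ruled out by the opening observation, so the first holds. But then $z\in B_\omega^{m}(\bar y,\bar p)$ is strictly preferred to $\bar f(\omega)$, contradicting $\bar f(\omega)\in D_\omega^{m}(\bar f,\bar y,\bar p)$. Hence $\bar p\cdot\hat f(\omega)\geq B_\omega$, as required.

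The only genuinely delicate point is justifying the initial claim $(\bar f(\omega),\hat f(\omega))\notin P_\omega(\bar f,\bar y,\bar p)$ from the externality-level statement $(\bar f(\omega),C(\bar f,\bar y))\not\succ_\omega (\hat f(\omega),C(\hat f,\hat y))$: here one must use both that the only externality in agents' preferences is the total net emission of the regulated commodities (so the strict preference relation induced by $P_\omega$ depends on $(x,y)$ only through $C(x,y)$) and that $C(\hat f,\hat y)=C(\bar f,\bar y)$, which is exactly the constrained-optimality setting of the theorem. Once that bridge is in place, the rest is the standard non-satiation/negative-transitivity argument.
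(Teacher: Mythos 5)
Your proof is correct and follows essentially the same route as the paper: assume the budget inequality fails, use local non-satiation to find a strictly-preferred point still inside the budget set, then use negative transitivity together with the fact that $(\bar f(\omega),\hat f(\omega))\notin P_\omega(\bar f,\bar y,\bar p)$ (which follows from Pareto domination, the equality of emissions, and the single-externality assumption) to contradict $\bar f(\omega)\in D^m_\omega(\bar f,\bar y,\bar p)$. Your explicit justification of the bridge from the externality-level non-preference to the preference-map statement is a step the paper leaves implicit, but the argument is the same.
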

\begin{proof}
Suppose there exists some $\omega_1\in \Omega$ such that 
\[
\bar{p}\cdot \hat{f}(\omega_{1})<\bar{p}\cdot e(\omega_{1})+\sum_{j\in J}\theta_{\omega_{1} j}\big(\bar{p}\cdot \bar{y}(j)+\pi_{k}(\bar{p})\cdot m^{(j)}\big). \nonumber
\]
As $P_{\omega_{1}}(\bar{f},\bar{y},\bar{p})$ is locally non-satiated, then there exists some $u\in X_{\omega_{1}}$ such that $\big(u, \hat{f}(\omega_{1})\big)\in P_{\omega_{1}}(\bar{f},\bar{y},\bar{p})$ and 
$\bar{p}\cdot u<\bar{p}\cdot e(\omega_{1})+\sum_{j\in J}\theta_{\omega_{1} j}\big(\bar{p}\cdot \bar{y}(j)+\pi_{k}(\bar{p})\cdot m^{(j)}\big)$.
As $P_{\omega_{1}}(\bar{f},\bar{y},\bar{p})$ is negatively transitive, we have $\big(u, \bar{f}(\omega_{1})\big)\in P_{\omega_{1}}(\bar{f},\bar{y},\bar{p})$. This leads to a contradiction since $(\bar{f}, \bar{y}, \bar{p})$ is a $\mathcal{Z}(m)$-compliant quota equilibrium. 
\end{proof}
By \cref{nolesshatf}, we have 
\[
\bar{p}\cdot \hat{f}(\omega)\geq \bar{p}\cdot e(\omega)+\sum_{j\in J}\theta_{\omega j}\big(\bar{p}\cdot \hat{y}(j)+\pi_{k}(\bar{p})\cdot m^{(j)}\big) \nonumber
\]
for all $\omega\in \Omega$. So we have $\bar{p}\big(\sum_{\omega\in \Omega}\hat{f}(\omega)-\sum_{\omega\in \Omega}e(\omega)-\sum_{j\in J} \hat{y}(j)\big)>\pi_{k}(\bar{p})\cdot m$. By the same argument as in the first paragraph, this leads to a contradiction which allows us to conclude that $(\bar{f},\bar{y})$ is constrained Pareto optimal. 
\end{proof}

\cref{fstwelfarequota} shows that, after the quota is set, constrained Pareto optimality of the equilibrium consumption-production pair is achieved without further intervention from the government. 
If in addition, we assume that agents cannot consume any of the regulated commodities,\footnote{\label{noconsumefoot} In the classical general equilibrium model developed \citet{ad54}, equilibrium assigns ownership which conveys the right to consume the commodity, but does not entail the obligation to eliminate it. In the regulation of pollution, a firm that accepts payment to take ownership of emissions must be obligated to eliminate those emissions, for example, by sequestering $\mathrm{CO}_{2}$. Moreover, individual consumers are technologically incapable of eliminating emissions, an industrial process available only to firms. Therefore, we assume agents cannot consume the regulated commodities.} 
then, since endowments are fixed, the total net emissions of the regulated commodities depend only on the production. The following result is similar to \cref{fstwelfarequota} except that the total net emissions of the regulated commodities are replaced by the total net production of the regulated commodities. 

\begin{customcor}{1}\label{welfareprod}
Let $\mathcal E=\{(X, \succ_{\omega}, P_\omega, e_\omega, \theta)_{\omega\in \Omega}, (Y_j)_{j\in J}, (m^{(j)})_{j\in J}, \mathcal{Z}(m)\}$ be a finite production economy with quota, and the only externality arises from the total net emission of the first $k$ commodities. Suppose  $\mathcal{Z}(m)_{n}=\{0\}$ for all $k<n\leq \ell$ and $\pi_{k}(X_{\omega})=\{0\}$ for all $\omega\in \Omega$. Let $(\bar{f}, \bar{y}, \bar{p})$ is a $\mathcal{Z}(m)$-compliant quota equilibrium. Then:
\begin{enumerate}
    \item $(\bar{f},\bar{y})$ is weakly constrained Pareto optimal, i.e., $(\bar{f},\bar{y})$ is weakly Pareto optimal among all quota-compliant production-consumption pairs with the same total production of the regulated commodities;
    \item Suppose $P_{\omega}(\bar{f},\bar{y},\bar{p})$ is negatively transitive and locally non-satiated for all $\omega\in \Omega$. Then $(\bar{f},\bar{y})$ is constrained Pareto optimal, i.e., $(\bar{f},\bar{y})$ is Pareto optimal among all quota-compliant production-consumption pairs with the same total production of the regulated commodities. 
\end{enumerate}
\end{customcor}

Since the total net emission of the regulated commodities is likely to affect agents' preferences, there may be a Pareto ranking among quota equilibrium with different total net emission of the regulated commodities. We conclude this section with the following example:

\begin{example}[continues=exampleproperty]\label{quotawelfareexample}
We investigate agents' welfare properties under different quota allocations between the government firm and private firms. 

We first consider the case where the private firm's quota is $0$. As we have shown in \cref{exampleproperty}, given $m\in (-2, 0)$, both agents' equilibrium consumption of electricity is 
$\frac{-m}{2}$. Hence, both agent's utility is given by $\frac{-m}{2}-\frac{m^{2}}{2}$. By taking the derivative, we see that $m=-\frac{1}{2}$ uniquely maximize both agents' utility. Thus, by setting the $\mathrm{CO}_{2}$ emission to be $\frac{1}{2}$, the resulting quota equilibrium consumption-production pair is full Pareto optimal, and allocates the same consumption to both agents. 

We now consider the case where the government firm's quota is $0$. As we have shown in \cref{exampleproperty}, given $m\in (-2, 0)$, the first agent's equilibrium electricity consumption is $-m$ while the second agent's equilibrium electricity consumption is $0$; all of the consumption goes to the first agent. However, the second agent's utility is adversely affected by the $\mathrm{CO}_{2}$ emissions. In particular, the first agent's utility is given by $(-m)-\frac{m^{2}}{2}$ and the second agent's utility is given by $-\frac{m^{2}}{2}$. The second agent's utility is an increasing function for $m\in [-2, 0]$ while the first agent's utility is an increasing function for $m\in [-2,-1)$ and a decreasing function for $m\in [-1, 0]$. Thus, for every $m\in [-1, 0]$, the resulting $\mathcal{Z}(m)$-compliant quota equilibrium consumption production pair is full Pareto optimal. 
\end{example}

\section{Production Economy with Tax}\label{sectaxequil}
An alternative to setting quotas is to set tax rates on pollution emissions.
In this section, we present a general equilibrium model that incorporates a tax regulatory scheme on net pollution emissions, and study its connection with quota equilibrium.  
While quota equilibrium always exists and is effective in limiting the total net pollution emission, we demonstrate in \cref{equndertaxisquota}, \cref{equnderqtistax} and \cref{taxequipbexample} that: 
\begin{enumerate}
    \item Every quota equilibrium can be realized as an emission tax equilibrium and vice versa; 
    \item Given specific tax rates, there may be no emission tax equilibrium consistent with those tax rates;
    \item Even if an emission tax equilibrium exists for a given tax rate, there may be multiple equilibria and there is no guarantee that emissions will lie under a prespecified level of total net pollution emissions for every equilibrium. 
\end{enumerate}
On the other hand, \cref{nonlinearexample} shows that it might be possible to limit the total net pollution emission and achieve full Pareto optimality through an emission tax if there is a one-to-one correspondence between tax rates and pollution emissions. Finally, in \cref{secfuel}, we reconsider \cref{taxequipbexample} but instead place a fuel tax on the input of production. We discover that, in the context of this example, fuel tax equilibrium is Pareto dominated by the emission tax equilibrium given a total net $\mathrm{CO}_{2}$ emission level. 
We now give a rigorous formulation of a general equilibrium model that incorporates tax on emission of pollution:
\begin{definition}\label{defecotax}
A finite production economy with emission tax
\[
\mathcal{F}\equiv \{(X, \succ_{\omega}, P_\omega, e_\omega, \theta)_{\omega\in \Omega}, (Y_j)_{j\in J},  \mathcal{V}, t, \lambda\} \nonumber
\] 
is a list  such that 
\begin{enumerate}[(i)]
\item All firms are private firms. Consumption sets, preferences, endowments, production sets, and agents' shares of private firms are defined the same as in \cref{defmeasureeco}. Hence, $(X, \succ_{\omega}, P_\omega, e_\omega, \theta)_{\omega\in \Omega}$ and $(Y_j)_{j\in J}$ and well-defined;
\item $\lambda:\Omega\to \NNReals$ is the government's rebate share such that $\sum_{\omega\in \Omega}\lambda(\omega)=1$. The government rebates its revenue from tax to agents according to $\lambda$; 
\item The compliance region $\mathcal{V}$ takes the form of $\prod_{n\leq \ell}\mathcal{V}_{n}$ where $\mathcal{V}_{n}=\NegReals$ for all $n\leq k$ and $\mathcal{V}_{n}$ is either $\{0\}$ or $\NegReals$ for all $n>k$;
\item $t\in \pi_{k}(\Delta)$ is an emission tax rate on the net emission of the first $k$ commodities.
\end{enumerate}
\end{definition}

\begin{remark}
The features of our general equilibrium model with emission tax are: 
\begin{itemize}
\item Our model focuses on emission taxes. An alternative formulation is to impose fuel taxes on commodities that would generate pollution as a byproduct in the production process. 
For example, a tax on gasoline is an fuel tax: the consumer pays the equilibrium price and the tax to the seller, who keeps the equilibrium price and remits the tax to the government. Fuel taxes have a significant advantage: it is much easier to collect a gasoline tax at the service station than it is to measure the tailpipe emissions of each vehicle. However, fuel taxes may be inefficient because they tax all uses of a commodity, irrespective of emissions generated. For example, crude oil and natural gas may be burned, releasing a large quantity of $\mathrm{CO}_{2}$ into the atmosphere, or may be used in the production of chemicals such as fertilizer, releasing a much smaller quantity of $\mathrm{CO}_{2}$. Indeed, carbon sequestration relies on using some of the electricity generated by burning coal in order to reduce $\mathrm{CO}_{2}$ emissions. A fuel tax on coal discourages sequestration by taxing the portion of the coal devoted to electricity generation needed for sequestration, as in example \cref{cbtaxequil}. As a consequence, a fuel tax may be inefficient compared to the corresponding emission tax. One might hope that, because the fuel tax model allows all equilibrium prices to adjust, fuel tax equilibrium might always exist. However, \cref{addonnoexst} shows that fuel tax equilibrium may not exist for certain tax rate;

\item The government rebate share $\lambda$ is parallel to agents' shareholdings of the government firm in production economies with quota, which are both determined by the government. Moreover, the property rights in a production economy with tax coincide with the property rights in the corresponding global quota economy, since all rights are vested to the government, and the profits are distributed to agents according to the rebate scheme;

\item As the first $k$ coordinates of the compliance region $\mathcal{V}$ are $\NegReals$, we allow for arbitrary net emissions for the regulated commodities, but charge a tax on the emissions of these commodities. As in production economies with quota, we allow for either free disposal or non-free-disposal of the non-regulated commodities. 
\end{itemize}
\end{remark}
Recall that $C(x,y)=\pi_{k}\big(\sum_{\omega\in \Omega}e(\omega)+\sum_{j\in J}y(j)-\sum_{\omega\in \Omega}x(\omega)\big)$ is the total net emission of the regulated commodities. 
For every $\omega\in \Omega$, $p\in \Delta$ and $(x,y)\in \mathcal{A}\times Y$, the \emph{emission tax budget set} $B_{\omega}^{t}(x, y, p)$ is defined to be:
\[
\{z\in X_{\omega}: p\cdot z \leq p\cdot e(\omega) + \sum_{j\in J}\theta_{\omega j}p\cdot y(j)+\lambda(\omega)t\cdot C(x,y)\}.\nonumber
\]
So an agent's budget consists of the value of her endowment, her dividends from firms and her rebate from the government's emission tax revenue. 
For each $\omega\in \Omega$ and $(x,y,p)\in \mathcal{A}\times Y\times \Delta$, the \emph{emission tax demand set} $D^{t}_{\omega}(x,y,p)$ consists of all elements in the emission tax budget set $B_{\omega}^{t}(x, y, p)$ that maximize the agent's preference given $(x,y,p)$. 
In particular, $D^{t}_{\omega}(x,y,p)$ is defined as:
\[
\{z \in B^{t}_{\omega}(x,y,p): w \succ_{x,y,\omega,p} z\implies w\not\in B_{\omega}^{t}(x,y,p)\}.\nonumber
\]
The equilibrium notion for finite production economies with emission tax is: 
\begin{definition}\label{deftaxequil}
Let $\mathcal{F}=\{(X, \succ_{\omega}, P_\omega, e_\omega, \theta)_{\omega\in \Omega}, (Y_j)_{j\in J}, \mathcal{V}, t, \lambda\}$ be a finite production economy with emission tax. 
A $\mathcal{V}$-compliant emission tax equilibrium is a triple $(\bar{x}, \bar{y}, \bar{p})\in \mathcal{A}\times Y\times \Delta$ such that the following conditions are satisfied:
\begin{enumerate}
\item $\pi_{k}(\bar{p})=-t$;
\item $\bar{x}(\omega)\in D^{t}_{\omega}(\bar{x},\bar{y},\bar{p})$ for all $\omega\in \Omega$;
\item $\bar{y}(j)\in S_{j}(\bar{p})\equiv \argmax_{z\in Y_{j}}\bar{p}\cdot z$ for all $j\in J$. So every firm is profit maximizing given the price $\bar{p}$;
\item $\sum_{\omega\in \Omega}\bar{x}(\omega)-\sum_{\omega\in \Omega}e(\omega)-\sum_{j\in J}\bar{y}(j)\in \mathcal{V}$.
\end{enumerate}
\end{definition}

Once the government sets the tax rate and the rebate share, the total net emission of the regulated commodities is determined endogenously.
We now study the connection between quota equilibrium and emission tax equilibrium. In particular, we show that every quota equilibrium can be realized as a tax equilibrium and vice versa.

\begin{customthm}
{3}\label{equndertaxisquota}
Let $\mathcal E=\{(X, \succ_{\omega}, P_\omega, e_\omega, \theta)_{\omega\in \Omega}, (m^{(j)})_{j\in J}, (Y_j)_{j\in J},  \mathcal{Z}(m)\}$ be a finite production economy with quota as in \cref{defmeasureeco}. Let $(\bar{x},\bar{y},\bar{p})$ be a $\mathcal{Z}(m)$-compliant quota equilibrium of $\mathcal{E}$. Let $\mathcal{F}=\{(X, \succ_{\omega}, P_\omega, e_\omega, \theta)_{\omega\in \Omega}, (Y_j)_{j\in J'}, \mathcal{V}, t, \lambda\}$ be a finite production economy with emission tax such that:
\begin{enumerate}
    \item  $\mathcal{V}_{n}=\NegReals$ for all $n<k$ and $\mathcal{V}_{n}=\mathcal{Z}(m)_{n}$ for all $n\geq k$;
    \item $J'$ is the set of private firms in $J$;
    \item $t=-\pi_{k}(\bar{p})$;
    \item  $\lambda(\omega)=\frac{\sum_{j\in J}\theta_{\omega j}\pi_{k}(\bar{p})\cdot m^{(j)}}{\pi_{k}(\bar{p})\cdot m}$.
\end{enumerate}
Then, $(\bar{x},\bar{y},\bar{p})$ is a $\mathcal{V}$-compliant emission tax equilibrium of $\mathcal{F}$. Suppose, in addition, $k=1$, i.e., there is one regulated commodity. Then $\lambda(\omega)=\frac{\sum_{j\in J}\theta_{\omega j}m^{(j)}}{m}$ is independent of the price. 
\end{customthm}
\begin{proof}
We first consider the special case where $\mathcal{E}$ is a global quota economy, i.e., $m^{(j)}=0$ for every private firm $j\in J$. As $(\bar{x},\bar{y},\bar{p})$ is a $\mathcal{Z}(m)$-compliant quota equilibrium, 
we have $\bar{y}(j)\in S_{j}(\bar{p})$ for all $j\in J$. As $\mathcal{V}\supset \mathcal{Z}(m)$, we have 
$\sum_{\omega\in \Omega}\bar{x}(\omega)-\sum_{\omega\in \Omega}e(\omega)-\sum_{j\in J}\bar{y}(j)\in \mathcal{V}$.
As $t=-\pi_{k}(\bar{p})$ and $C(\bar{x},\bar{y})=-m$, 
we have $B_{\omega}^{m}(\bar{y},\bar{p})=B_{\omega}^{t}(\bar{x},\bar{y},\bar{p})$ for all $\omega\in \Omega$. As $\bar{x}(\omega)\in D_{\omega}^{m}(\bar{x},\bar{y},\bar{p})$ for all $\omega\in \Omega$, we have $\bar{x}(\omega)\in D_{\omega}^{t}(\bar{x},\bar{y},\bar{p})$ for all $\omega\in \Omega$.
Hence, $(\bar{x},\bar{y},\bar{p})$ is a $\mathcal{V}$-compliant emission tax equilibrium for the economy $\mathcal{F}$. 

For general production economies with quota, the conclusions follow from \cref{equnderctisquota} and \cref{capecoquota}.
\end{proof}

As in \cref{equnderctisquota}, the government's rebate share depends on the equilibrium price in $\mathcal{E}$, and the excess demand is a different function of price in the two economies; they agree only at the equilibrium price, which is the same in $\mathcal{E}$ and $\mathcal{F}$. 
Thus, different quota equilibria in $\mathcal{E}$ are emission tax equilibria in possibly different production economies with emission tax. However, when there is only one regulated commodity, the excess demand is the same function of price in $\mathcal{E}$ and $\mathcal{F}$, and the government's rebate share is independent of the equilibrium price in $\mathcal{E}$.  

Conversely, every emission tax equilibrium is a global quota equilibrium with the quota being the negative of the total net emission of the regulated commodities in the emission tax equilibrium:

\begin{customthm}
{4}\label{equnderqtistax}
Let $\mathcal F=\{(X, \succ_{\omega}, P_\omega, e_\omega, \theta)_{\omega\in \Omega}, (Y_j)_{j\in J}, \mathcal{V}, t, \lambda\}$ be a finite production economy with emission tax and $(\bar{x},\bar{y},\bar{p})$ be a $\mathcal{V}$-compliant emission tax equilibrium. Let $\mathcal{E}=\{(X, \succ_{\omega}, P_\omega, e_\omega, \tilde{\theta})_{\omega\in \Omega}, (m^{(j)})_{j\in J}, (Y_j)_{j\in J'}, \mathcal{Z}(m)\}$ be a global quota economy where:
\begin{enumerate}[leftmargin=*]
\item The set $J'$ is the union of $J$ and the government firm, denoted as firm $0$, with the production set $Y_{0}=\{0\}$;
\item $m=-C(\bar{x},\bar{y})$ and $\mathcal{Z}(m)_{n}=\mathcal{V}_{n}$ for all $n\geq k$;
\item For every $\omega\in \Omega$, we have $\tilde{\theta}(\omega)(j)=\theta(\omega)(j)$ for all $j\in J$, and $\tilde{\theta}(\omega)(0)=\lambda(\omega)$, i.e., the agents' shareholdings of the government firm are the same as the government's rebate share. 
\end{enumerate}
Then $(\bar{x},\bar{y},\bar{p})$ is a $\mathcal{Z}(m)$-compliant global quota equilibrium for $\mathcal{E}$. 
\end{customthm}
\begin{proof}
As $(\bar{x},\bar{y},\bar{p})$ is a $\mathcal{V}$-compliant emission tax equilibrium, we have $-C(\bar{x},\bar{y})\leq 0$. By choosing $m=-C(\bar{x},\bar{y})$, we have $\sum_{\omega\in \Omega}\bar{x}(\omega)-\sum_{\omega\in \Omega}e(\omega)-\sum_{j\in J}\bar{y}(j)\in \mathcal{Z}(m)$. As $\bar{y}(j)\in S_{j}^{m}(\bar{p})$ for all $j\in J$, 
we have $\bar{y}(j)\in S_{j}(\bar{p})$ for all $j\in J$. As $t=-\pi_{k}(\bar{p})$, $m=-C(\bar{x},\bar{y})$, $\tilde{\theta}(\omega)(0)=\lambda(\omega)$ for all $\omega\in \Omega$, and $\mathcal{E}$ is a global quota economy, we have $B_{\omega}^{m}(\bar{y},\bar{p})=B_{\omega}^{t}(\bar{x},\bar{y},\bar{p})$ for all $\omega\in \Omega$. Hence, we have $\bar{x}(\omega)\in D_{\omega}^{m}(\bar{x},\bar{y},\bar{p})$ for all $\omega\in \Omega$. So $(\bar{x},\bar{y},\bar{p})$ is a $\mathcal{Z}(m)$-compliant global quota equilibrium of $\mathcal{E}$. 
\end{proof}

\cref{equndertaxisquota} and \cref{equnderqtistax} show that a quota equilibrium can be realized as an emission tax equilibrium and vice versa. Hence, as we have pointed out in the introduction, these results provide rigorous arguments on the interchangeability of emission taxes and quotas. The existence and welfare properties of emission tax equilibrium, along with its comparison to quota equilibrium, will be studied through examples in the next section.

\subsection{Examples of Production Economies with Emission Tax}\label{secexample}
\cref{equndertaxisquota} and \cref{equnderqtistax} do not address the existence of emission tax equilibrium. 
In the next example, we show that emission tax equilibrium need not exist for certain tax rates, and, in some cases, it is impossible to limit emissions under a prespecified total net pollution emission level by setting tax rate.

\begin{customexp}{2}\label{taxequipbexample}\normalfont
Let $\mathcal{F}=\{(X, \succ_{\omega}, P_{\omega}, e_{\omega}, \theta_{\omega})_{\omega\in \Omega}, (Y_j)_{j\in J}, \mathcal{V}, t, \lambda\}$ be a finite production economy with emission tax: 
\begin{enumerate}
    \item The economy $\mathcal{E}$ has three commodities $\mathrm{CO}_{2}$, coal and electricity, which we denote by $c_1$, $c_2$ and $c_3$;
    \item There is a single agent with consumption set $X=\{0\}\times \NNReals^{2}$ and endowment $e=(0,1,0)$. Given the total net emission $v$ of $\mathrm{CO}_{2}$, the utility function $u_{v}(c_1,c_2,c_3)=c_3-\frac{v^{2}}{2}$;
    \item There are two producers with production sets $Y_1=\{(r,-r,r): r\in \NNReals\}$ and $Y_2=\{(-2r,0,-r): r\in \NNReals\}$. So the first producer has the production technology to burn $r$ units of coal to generate $r$ units of electricity and $r$ units of $\mathrm{CO}_{2}$ as byproduct. The second producer has the production technology to use $r$ units of electricity to sequester $2r$ units of $\mathrm{CO}_{2}$;
    \item The compliance region $\mathcal{V}=\NegReals^{3}$;
    \item Since there is only one agent, we have $\theta_{\omega}(1)=\theta_{\omega}(2)=\lambda(\omega)=1$.
\end{enumerate}

\textbf{Existence of Emission Tax Equilibrium}: We consider the existence of emission tax equilibrium under different emission tax rates.  

\begin{claim}\label{eqestclaim}
There is a $\mathcal{V}$-compliant emission tax equilibrium for emission if and only if the tax rate $t\leq \frac{1}{4}$.
\end{claim}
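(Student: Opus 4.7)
The plan is to analyze the equilibrium conditions directly, exploiting the simple structure of this one-consumer, two-firm economy with ray-shaped production sets. I would split into necessity (equilibrium $\Rightarrow t \leq 1/4$) and sufficiency.

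For necessity, fix any $\mathcal{V}$-compliant emission tax equilibrium $(\bar{x}, \bar{y}, \bar{p})$. Since each $Y_j$ is a non-negative ray generated by $d_j$ (with $d_1=(1,-1,1)$ and $d_2=(-2,0,-1)$), the supply set $S_j(\bar{p})$ is nonempty only if $\bar{p}\cdot d_j \leq 0$, as otherwise the profit $r(\bar{p}\cdot d_j)$ is unbounded above in $r \geq 0$. This yields $\bar{p}_3 \leq \bar{p}_2+t$ (firm 1) and $\bar{p}_3\geq 2t$ (firm 2), and hence $\bar{p}_2\geq t$. Next, using the agent's strict monotonicity in $c_3$ together with the coal-market cap $r_1 \leq 1$ (the agent demands no coal) and the fact that electricity supply is bounded by $r_1 - r_2 \leq 1$, one rules out $\bar{p}_3 \leq 0$: otherwise the agent's demand for $c_3$ would be unbounded and market clearing would fail. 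Combining $\bar{p}_3 > 0$, $\bar{p}_2 \geq t$, $\bar{p}_3 \geq 2t$, $\bar{p}_1 = -t$, and $\|\bar{p}\|=1$, for $t \geq 0$ we obtain $1 \geq t + t + 2t = 4t$, so $t \leq 1/4$.

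For sufficiency, I construct an equilibrium explicitly. Given $t \in [0, 1/4]$, set $\bar{p}=(-t,\,1/2-t,\,1/2)$, which lies in $\Delta$ since its coordinates' absolute values sum to $1$. Firm 1's profit coefficient $-t-(1/2-t)+1/2$ vanishes, so any $r_1 \geq 0$ is profit-maximizing; firm 2's coefficient $2t - 1/2$ is non-positive, so $r_2 = 0$ is optimal (with indifference at the boundary $t = 1/4$, yielding a continuum of equilibria). Take $r_1 = 1$, $r_2 = 0$, $\bar{x}=(0,0,1)$. The total emission is $C(\bar{x},\bar{y}) = r_1 - 2r_2 = 1$, so the agent's income is $\bar{p}\cdot e + \lambda t \cdot C = (1/2-t) + t = 1/2$, and her optimal demand $c_3 = I/\bar{p}_3 = 1$ matches the electricity supply $r_1-r_2=1$. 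Resource feasibility $\bar{x}-e-\bar{y}(1)-\bar{y}(2) = (-1,0,0) \in \NegReals^3$ is then immediate.

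The main obstacle is the necessity direction, specifically ruling out equilibria with $\bar{p}_3 \leq 0$. This requires carefully combining (i) the coal cap $r_1 \leq 1$ from market clearing in $c_2$, (ii) the fact that aggregate electricity supply equals $r_1 - r_2$, hence is bounded, and (iii) the agent's strict monotonicity in $c_3$, to reach a contradiction when $\bar{p}_3 \leq 0$. Once this is established, the normalization chain $1 = \|\bar{p}\| \geq 4t$ forces the upper bound $t \leq 1/4$ cleanly.
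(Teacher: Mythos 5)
Your proof is correct and follows essentially the same route as the paper's: the sufficiency construction $\bar{p}=(-t,\tfrac{1}{2}-t,\tfrac{1}{2})$, $\bar{y}=\big((1,-1,1),(0,0,0)\big)$, $\bar{x}=(0,0,1)$ is identical, and the necessity direction rests on the same profit-boundedness restrictions $\bar{p}_3\geq 2t$ and $\bar{p}_2\geq \bar{p}_3-t$. Your last step is in fact slightly more direct---you obtain $1=\|\bar{p}\|\geq 4t$ straight from these inequalities and the normalization, whereas the paper first argues that the first firm must be active in order to pin down $\bar{p}_3=\tfrac{1}{2}$ exactly before concluding $t\leq\tfrac{1}{4}$.
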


The proof of \cref{eqestclaim} is presented in \cref{appendixexp2}. \cref{eqestclaim} shows that emission tax equilibrium might not exist for specific emission tax rates.

\noindent \textbf{Limiting $\mathrm{CO}_{2}$ Emission via Emission Tax}: We now investigate whether setting an emission tax rate can ensure the total net $\mathrm{CO}_{2}$ emission is under a pre-specified level.
\begin{claim}\label{wrongtax}
The total net emission of $\mathrm{CO}_{2}$ at any $\mathcal{V}$-compliant emission tax equilibrium is $1$ if the emission tax rate $t<\frac{1}{4}$.  
\end{claim}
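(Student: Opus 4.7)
The plan is to parametrize each firm by its activity level --- $\bar y(1) = (r_1, -r_1, r_1)$ and $\bar y(2) = (-2r_2, 0, -r_2)$ with $r_1, r_2 \geq 0$ --- so that $C(\bar x, \bar y) = r_1 - 2 r_2$. Since $\pi_k(\bar p) = -t$ pins $\bar p_1 = -t$, firm~1's profit equals $r_1(\bar p_3 - \bar p_2 - t)$ and firm~2's profit equals $r_2(2t - \bar p_3)$. Because $0 \in Y_j$ for each $j$, profit maximization gives $\bar p_3 \leq \bar p_2 + t$ and $\bar p_3 \geq 2t$ at any equilibrium, with equality in the $j$-th inequality whenever $r_j > 0$. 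Combining these inequalities with $t > 0$ and the normalization $|\bar p_1| + |\bar p_2| + |\bar p_3| = 1$ also forces $\bar p_2, \bar p_3 > 0$ (if $\bar p_2 \leq 0$, then $\bar p_3 \leq t < 2t$, contradiction).

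The heart of the argument is to pin down which firms are active. If both $r_1 > 0$ and $r_2 > 0$, both profit inequalities are tight, giving $\bar p_2 = t$ and $\bar p_3 = 2t$; the normalization then forces $4t = 1$, contradicting $t < 1/4$. If $r_1 = 0$, electricity market clearing $\bar x_3 \leq r_1 - r_2$ collapses to $\bar x_3 = r_2 = 0$, so $C = 0$ and firm profits vanish; but then the consumer's income is $\bar p_2 \cdot 1 > 0$, and since her utility is strictly monotone in $c_3$ with $C$ exogenous to her optimization, she would demand $\bar x_3 = \bar p_2/\bar p_3 > 0$, contradicting $\bar x_3 = 0$. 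Hence we must have $r_1 > 0$ and $r_2 = 0$; this entails $\bar p_3 = \bar p_2 + t$, and the subcase $\bar p_3 = 2t$ again yields $t = 1/4$, so in fact $\bar p_3 > 2t$. Combined with normalization, this uniquely determines $\bar p_2 = 1/2 - t$ and $\bar p_3 = 1/2$.

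It remains to close the system via the consumer's problem and market clearing. Firms earn zero profit, so the consumer's income is $\bar p_2 + t r_1 = (1/2 - t) + t r_1$. Since $\bar p_2, \bar p_3 > 0$ and $c_3$ is the only term in her utility that depends on her choice, she sets $\bar x_2 = 0$ and $\bar x_3 = \big((1/2 - t) + t r_1\big)/(1/2) = (1 - 2t) + 2 t r_1$. Electricity market clearing is tight because $\bar p_3 > 0$, so $\bar x_3 = r_1$, yielding $r_1 (1 - 2t) = 1 - 2t$, and hence $r_1 = 1$ (using $t < 1/4 < 1/2$); the coal market then clears with $\bar x_2 = 0$ and endowment $1$ fully absorbed by firm~1. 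Therefore $C(\bar x, \bar y) = r_1 - 2 r_2 = 1$, as claimed.

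The main obstacle is excluding the case $r_1 = 0$: this requires combining the price-positivity derived from the firm constraints with strict monotonicity of the single consumer's preference in electricity, in order to produce a strictly positive excess demand that cannot be met at zero supply. Once only the configuration $r_1 > 0$, $r_2 = 0$ survives, the remaining step is a short linear system fixed by the normalization, firm~1's indifference condition, and the two market clearing equalities.
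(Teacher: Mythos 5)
Your proof is correct and follows essentially the same route as the paper's: pin the prices to $(-t,\tfrac12-t,\tfrac12)$ via the firms' no-unbounded-profit conditions and the normalization, deduce that the sequestration firm is inactive since $\bar p_3=\tfrac12>2t$, and then use the consumer's budget exhaustion on electricity to force $r_1=1$. The only cosmetic difference is that you close the last step by solving $r_1(1-2t)=1-2t$ explicitly (invoking Walras-law tightness of the electricity market), whereas the paper argues by contradiction that $r_1<1$ would leave the agent with unspent income and too little electricity to buy; these are the same idea.
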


The proof of \cref{wrongtax} is presented in \cref{appendixexp2}.
By \cref{wrongtax}, if the government sets the emission tax rate to be less than $\frac{1}{4}$, then the total net $\mathrm{CO}_{2}$ emission is $1$ unit. 
We now consider the case where the emission tax rate is $\frac{1}{4}$. Note that, when the emission tax rate is $\frac{1}{4}$, the equilibrium price must be $(-\frac{1}{4},\frac{1}{4},\frac{1}{2})$. 

\begin{claim}\label{taxquatunique}
Let the emission tax rate be $\frac{1}{4}$ and $0\leq v\leq 1$. Then there exists a $\mathcal{V}$-compliant emission tax equilibrium such that the total net emission of $\mathrm{CO}_{2}$ is $v$.
\end{claim}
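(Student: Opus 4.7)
The plan is to exhibit an explicit equilibrium at $\bar p=(-\tfrac14,\tfrac14,\tfrac12)$ by exploiting the slack produced by zero-profit firms. The CO2 price is forced to $-t=-\tfrac14$, and to avoid unbounded firm supply both producers must earn zero profit; combining the zero-profit equations $\bar p\cdot(1,-1,1)=0$ and $\bar p\cdot(-2,0,-1)=0$ with the normalization $\|\bar p\|=1$ pins $\bar p$ down uniquely, as noted before the claim. At this price every pair $\bar y_1=(r_1,-r_1,r_1)$ and $\bar y_2=(-2r_2,0,-r_2)$ with $r_1,r_2\geq 0$ lies in the firms' supply correspondences, so two free scalars remain to be chosen.

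Next I would observe that because the agent cannot consume CO2 and is not endowed with any, the net emission $C(\bar x,\bar y)=r_1-2r_2$ depends only on production. Hence, given the target emission $v$, the agent's preference $x_3-\tfrac{v^2}{2}$ is monotone in $x_3$ alone. With the single agent holding $\theta=\lambda=1$ and both firms earning zero profit, the budget reduces to $\tfrac14 x_2+\tfrac12 x_3\leq \tfrac14+\tfrac14 v=\tfrac{1+v}{4}$; the agent's unique optimum is $x_2=0$ and $x_3=\tfrac{1+v}{2}$.

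Finally I would solve for $r_1,r_2$ so that the emission equation $r_1-2r_2=v$ and the feasibility constraints $x_2=0\leq 1-r_1$ (coal) and $x_3=\tfrac{1+v}{2}\leq r_1-r_2$ (electricity) hold simultaneously. These force $r_1=1$ and $r_2=\tfrac{1-v}{2}$, which is nonnegative precisely because $0\leq v\leq 1$. A short computation then gives $\sum \bar x-\sum e-\sum \bar y=(-v,0,0)\in\mathcal V$ and $C(\bar x,\bar y)=v$, so the triple $(\bar x,\bar y,\bar p)$ satisfies every clause of \cref{deftaxequil}.

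The main subtlety, rather than an obstacle, is that the indeterminacy of production under zero profits is precisely what allows \emph{any} $v\in[0,1]$ to be realized as the equilibrium emission level; this same indeterminacy underlies the multiplicity phenomenon flagged in \cref{taxequipbexample} and explains why the tax rate $t=\tfrac14$, unlike a quota, cannot by itself pin down the level of net emissions.
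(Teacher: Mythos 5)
Your construction is correct and is essentially the paper's own argument: both exhibit the same triple $\hat p=(-\tfrac14,\tfrac14,\tfrac12)$, $\hat x_v=(0,0,\tfrac{1+v}{2})$, $\hat y_v=\bigl((1,-1,1),(v-1,0,\tfrac{v-1}{2})\bigr)$ and verify the four clauses of \cref{deftaxequil} by the same computations (zero profits, budget bound $\tfrac{1+v}{4}$, excess demand $(-v,0,0)\in\mathcal V$). The only difference is in how the candidate is located: the paper pins down $r_1=1$ by appealing to \cref{equnderqtistax} and \cref{fstwelfarequota} (constrained Pareto optimality of the associated quota equilibrium), whereas you derive it directly from the agent's demand and the feasibility inequalities, which is more elementary and self-contained but reaches the identical endpoint.
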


The proof of \cref{taxquatunique} is presented in \cref{appendixexp2}.
By \cref{taxquatunique}, there are multiple $\mathcal{V}$-compliant emission tax equilibrium associated with the emission tax rate $\frac{1}{4}$. In fact, as illustrated in Figure ~\ref{linearfigure}, for any prespecified total net $\mathrm{CO}_{2}$ emission $0\leq v\leq 1$, there is a $\mathcal{V}$-compliant emission tax equilibrium with emission tax rate $\frac{1}{4}$ with total net $\mathrm{CO}_{2} $emission being $v$. Combining with \cref{wrongtax}, we conclude that, in this example, it is impossible to set an emission tax rate to ensure that the $\mathrm{CO}_{2}$ emission is strictly less than $1$.\footnote{As we will see shortly, the full Pareto optimal $\mathrm{CO}_{2}$ emission level is $\frac{1}{2}$.}

\begin{figure}[h]
\caption*{Emission Tax with Linear Technology (\cref{taxequipbexample})}
\centering
\includegraphics[width=0.85\textwidth]{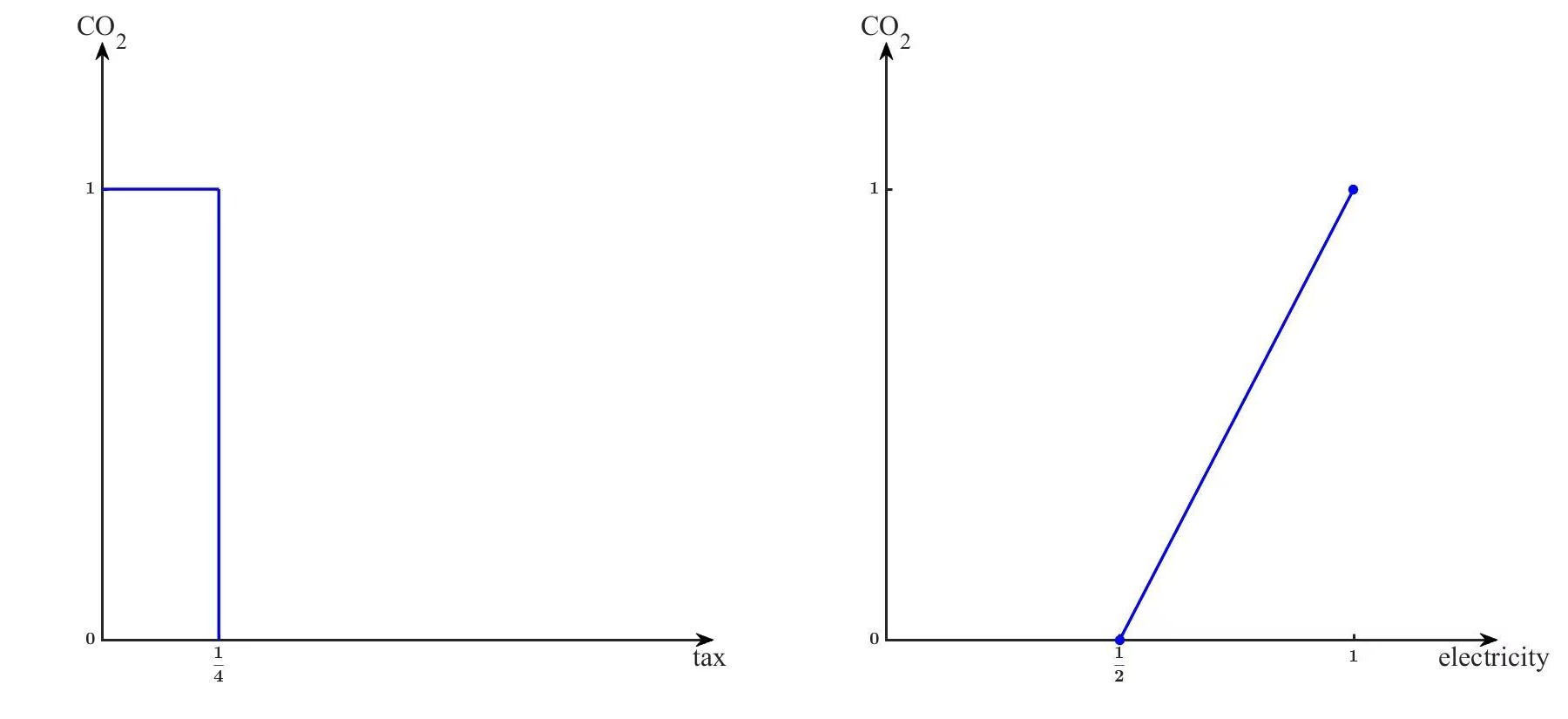}
\caption{The left figure plots the equilibrium total net $\mathrm{CO}_{2}$ emission as a correspondence of the tax rate $t\in [0,\frac{1}{4}]$. The right figure plots the 
electricity consumption/$\mathrm{CO}_{2}$ emissions pairs for emission tax equilibrium corresponding to $t=\frac{1}{4}$.}
\label{linearfigure}
\end{figure}

\textbf{The Welfare Properties of Emission Tax Equilibrium and Comparison with Quota Equilibrium}: We shall show that, in this example, it is impossible to ensure full Pareto optimality by setting an emission tax rate. However, we can ensure full Pareto optimality by carefully choosing the right quota. 

We start by considering the welfare properties of $\mathcal{V}$-compliant emission tax equilibria. 
There is a Pareto ranking among $\mathcal{V}$-compliant emission tax equilibria arising from the externality. 
By taking the derivative, the total net emission of $\mathrm{CO}_{2}$ that maximizes the agent's utility is $\hat{v}=\frac{1}{2}$. 
As a result, the $\mathcal{V}$-compliant emission tax equilibrium consumption-production pair $(\hat{x}_{\frac{1}{2}}, \hat{y}_{\frac{1}{2}})$\footnote{In particular, we have $\hat{x}_{\frac{1}{2}}=(0,0,\frac{3}{4})$, $\hat{y}_{\frac{1}{4}}=\big((1,-1,1), (-\frac{1}{2},0,-\frac{1}{4})\big)$ and $\hat{p}=(-\frac{1}{4}, \frac{1}{4},\frac{1}{2})$.} with emission tax rate $\frac{1}{4}$ Pareto dominates all other $\mathcal{V}$-compliant emission tax equilibrium consumption-production pairs. Thus, if the government sets the emission tax rate to be less than $\frac{1}{4}$, then total net $\mathrm{CO}_{2}$ emission of the resulting $\mathcal{V}$-compliant emission tax equilibrium is $1$, hence the resulting equilibrium consumption-production pair is Pareto dominated. 
On the other hand, there are multiple $\mathcal{V}$-compliant emission tax equilibrium with the emission tax rate $\frac{1}{4}$, but only one of these equilibrium consumption-production pairs is Pareto optimal. So it is impossible to guarantee full Pareto optimality by setting an emission tax rate on $\mathrm{CO}_{2}$.\footnote{In this example, the emission tax on $\mathrm{CO}_{2}$ can be viewed as the Pigouvian tax on electricity from the perspective of \citet{sandmo75}. The emission tax rate $\frac{1}{4}$ would be called the optimal Pigouvian tax rate since it is the emission tax rate associated with the unique Pareto optimal emission tax equilibrium. However, the emission tax rate $\frac{1}{4}$ is associated with multiple emission tax equilibria, and it is impossible to guarantee full Pareto optimality by setting the emission tax rate to be $\frac{1}{4}$.}

We now consider an associated finite production economy with quota $\mathcal{E}$. In particular, $\mathcal{E}=\{(X, \succ_{\omega}, P_{\omega}, e_{\omega}, \tilde{\theta}_{\omega})_{\omega\in \Omega}, (Y_j)_{j\in J'}, (m^{(j)})_{j\in J'}, \mathcal{Z}(m)\}$ is such that:
\begin{enumerate}
\item $J'$ is the union of $J$ and the government firm, denoted as firm $0$, with the production set $Y_0=\{0\}$;
\item $m$ is the negative of the quota on $\mathrm{CO}_{2}$ net emission, and $\mathcal{Z}(m)=\{m\}\times \NegReals^{2}$;
\item Since there is only one agent, we have $\tilde{\theta}_{\omega}(0)=\tilde{\theta}_{\omega}(1)=\tilde{\theta}_{\omega}(2)=1$;
\item $\mathcal{E}$ is a global quota economy. That is, we have $m^{(0)}=m$.\footnote{As there is only one agent, the quota allocation between the government and private firms has no impact on the equilibrium outcome}
\end{enumerate}
By \cref{taxquatunique} and \cref{equnderqtistax}, $\mathcal{E}$ has a $\mathcal{Z}(m)$-compliant quota equilibrium for all $m\in [-1,0]$.

By the definition of quota equilibrium, the total net emission of $\mathrm{CO}_{2}$ equals $-m$ for all $\mathcal{Z}(m)$-compliant quota equilibrium. Thus, unlike emission tax equilibrium, quota equilibrium always exists, and the quota can be chosen to ensure that the total net $\mathrm{CO}_{2}$ emission of every quota equilibrium equals the prespecified level. Moreover, by \cref{fstwelfarequota}, every quota equilibrium consumption-production pair is constrained Pareto optimal. That is, every quota equilibrium consumption-production pair is Pareto optimal among all quota-compliant consumption-production pairs with the same total net $\mathrm{CO}_{2}$ emission. Finally, as $(\hat{x}_{\frac{1}{2}}, \hat{y}_{\frac{1}{2}})$ associated with the tax rate $\frac{1}{4}$ is the unique Pareto optimal consumption-production pairs among all $\mathcal{V}$-compliant tax equilibrium, by \cref{equndertaxisquota}, $(\hat{x}_{\frac{1}{2}},\hat{y}_{\frac{1}{2}},\hat{p})$ is the only $\mathcal{Z}(-\frac{1}{2})$-compliant quota equilibrium of $\mathcal{E}$.
Thus, once the government sets the quota on $\mathrm{CO}_{2}$ emission to be $\frac{1}{2}$, full Pareto optimality of the equilibrium consumption-production pair is achieved without further intervention from the government.  
\end{customexp}

\cref{taxequipbexample} shows that emission tax equilibrium may not exist for certain emission tax rates, and setting an emission tax rate
does not ensure that the total net $\mathrm{CO}_{2}$ emission will be under a pre-specified level.
However, as the next example illustrates, if there exists a one-to-one correspondence between emission tax rates and total net $\mathrm{CO}_{2}$ emissions, one can not only limit the total net $\mathrm{CO}_{2}$ emission under a pre-specified level but also guarantee full Pareto optimality via emission tax rate.

\begin{customexp}
{3}\label{nonlinearexample}\normalfont
Let $\mathcal F$ be the finite production economy with emission tax as in \cref{taxequipbexample} except that the second firm's production set is now given by 
\[
Y_{2}=\{(-a,0,-r^{2}): (r\in \NNReals)\wedge (0\leq a\leq 2r)\}.\nonumber
\]
The second firm has the production technology to sequester $\mathrm{CO}_{2}$ using electricity, and the marginal quantity of electricity needed to sequester an additional unit of $\mathrm{CO}_{2}$ increases as the amount of $\mathrm{CO}_{2}$ that has been sequestered increases. This is qualitatively what we should expect: it is much easier to capture the first ten percent of the emissions of a given plant than the last ten percent.

\textbf{Existence and Properties of Emission Tax Equilibrium}: We consider the existence of emission tax equilibrium under different tax rates.

\begin{claim}\label{uniqueequilclaim}
There is a unique $\mathcal{V}$-compliant emission tax equilibrium for all emission tax rate $t\leq \frac{1}{4}$. There is no $\mathcal{V}$-compliant emission tax equilibrium for emission tax rate $t>\frac{1}{4}$.
\end{claim}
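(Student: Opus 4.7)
The plan is to exploit the strict convexity of firm 2's new sequestration technology, which forces an interior first-order condition and thereby pins the equilibrium down uniquely (or rules it out). At any candidate equilibrium, the price has the form $\bar{p} = (-t, p_2, p_3)$. Firm 2 chooses $(a, r)$ with $0 \leq a \leq 2r$ to maximize $ta - p_3 r^{2}$; since $t \geq 0$, the constraint $a = 2r$ is binding, leaving $\max_{r \geq 0}(2tr - p_3 r^{2})$. For this to admit a finite maximizer (and thus for equilibrium to exist), we need $p_3 > 0$, whence $r^{\ast} = t/p_3$, $a^{\ast} = 2t/p_3$, and firm 2's profit is $t^{2}/p_3$.

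Next I would analyze firm 1. Its profit is the linear function $r_1(p_3 - p_2 - t)$, so to obtain an interior $r_1 \in (0, \infty)$---which is necessary because the agent's positive demand for electricity requires positive production---the price must satisfy $p_3 = p_2 + t$. Combining this identity with Walras' law (the agent's budget is exhausted on coal and electricity, both firm profits are distributed, and the tax revenue $t\cdot C(x,y)$ is rebated) and the electricity market-clearing equation $x_3 = r_1 - (r^{\ast})^{2}$, a short computation collapses to $r_1 p_2 = p_2$. Provided $p_2 > 0$ this forces $r_1 = 1$, i.e., the entire coal endowment is burned.

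I would then close the system using the price normalization $\|\bar{p}\| = t + p_2 + p_3 = 1$. Substituting $p_2 = p_3 - t$ gives $2 p_3 = 1$, hence $p_3 = \tfrac{1}{2}$ and $p_2 = \tfrac{1}{2} - t$. To rule out the alternative configurations, I would argue that $p_2 = 0$ forces $p_3 = t$ and $a^{\ast} = 2$, while $p_2 < 0$ forces $0 < p_3 < t$ and hence $a^{\ast} > 2$; both violate the material balance $a \leq r_1 \leq 1$ for CO$_2$, so neither yields an equilibrium.

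Finally, the feasibility constraint $a^{\ast} \leq r_1$ becomes $4t \leq 1$, i.e. $t \leq \tfrac{1}{4}$. When this holds, the tuple $(\bar{x}, \bar{y}, \bar{p})$ with $r_1 = 1$, $r^{\ast} = 2t$, $\bar{x} = (0, 0, 1 - 4t^{2})$, and $\bar{p} = (-t, \tfrac{1}{2} - t, \tfrac{1}{2})$ satisfies all equilibrium conditions, and the derivation above shows it is the only such tuple. When $t > \tfrac{1}{4}$, the feasibility constraint fails for every admissible price, so no $\mathcal{V}$-compliant emission tax equilibrium exists. The main obstacle is bookkeeping around the boundary prices $p_2 \leq 0$ and verifying that no candidate with $r_1 < 1$ clears markets; each such alternative is excluded by tracing it to a violation of either firm 2's interior optimum ($p_3 > 0$) or the CO$_2$ material balance $a^{\ast} \leq r_1$.
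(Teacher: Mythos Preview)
Your proposal is correct and follows essentially the same route as the paper: both arguments hinge on firm~2's strictly concave profit $2tr - p_3 r^{2}$ pinning down $r^{\ast}=t/p_3$ and $a^{\ast}=2t/p_3$, firm~1's linear profit forcing $p_3 = p_2 + t$, and the normalization $\|\bar p\|=1$ then yielding $p_3=\tfrac12$, $p_2=\tfrac12-t$, after which the CO$_2$ material balance $a^{\ast}\le r_1\le 1$ is exactly the condition $t\le\tfrac14$. The only stylistic difference is that the paper first exhibits the equilibrium $(\bar x_{t},\bar y_{t},\bar p_{t})$ and then proves uniqueness by showing any equilibrium must have $\hat p_2\ge t$ (from $\hat p_3\ge 2t$ and firm~1 boundedness) and that $r_1<1$ leaves the agent with excess electricity demand, whereas you derive the candidate directly via the market-clearing identity $r_1 p_2 = p_2$ and handle $p_2\le 0$ by a separate case analysis; both arguments are equivalent.
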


\begin{figure}[h]
\caption*{Emission Tax with Non-Linear Technology (\cref{nonlinearexample})}
\centering
\includegraphics[width=0.85\textwidth]{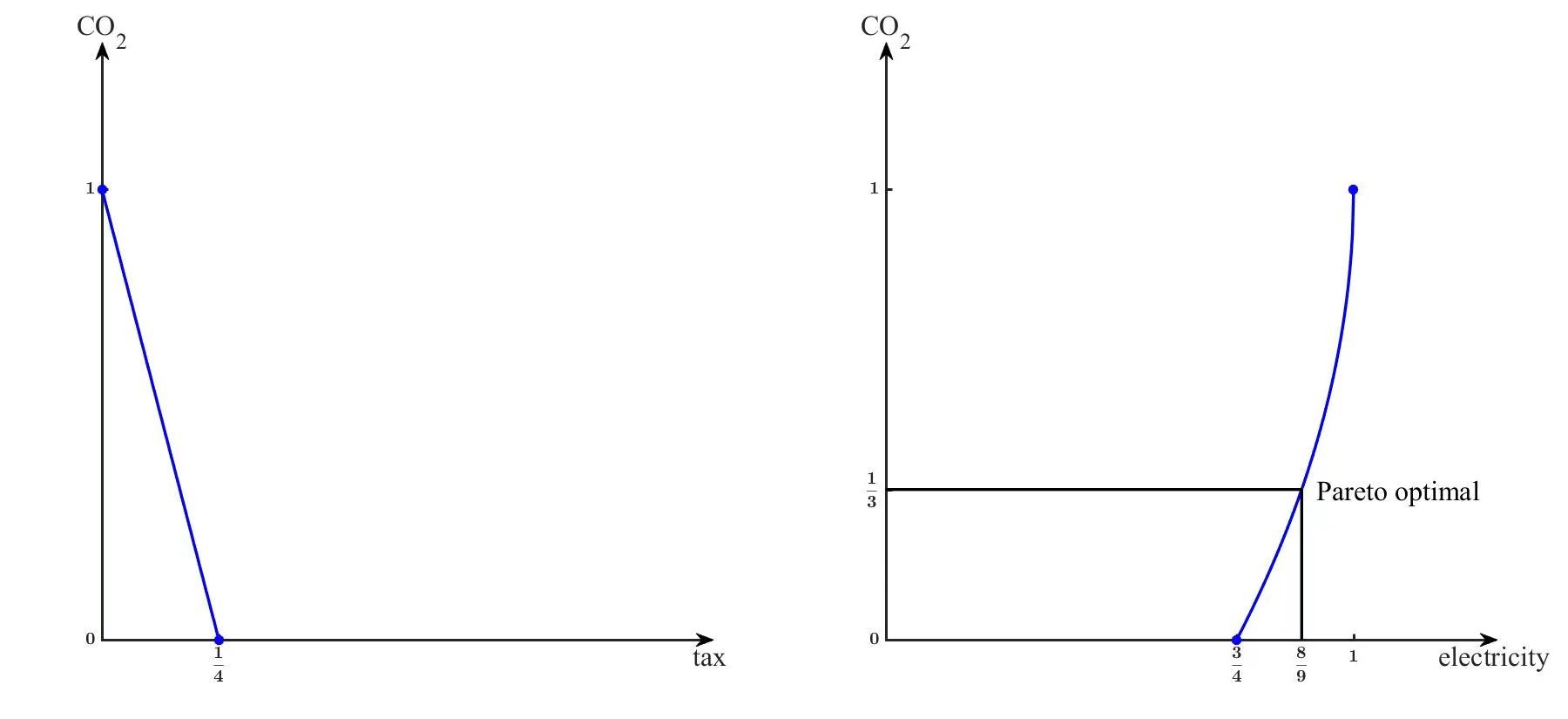}
\caption{The left figure plots the equilibrium total net $\mathrm{CO}_{2}$ emission as a function of the tax rate $t\in [0,\frac{1}{4}]$. The right figure plots the 
electricity consumption/$\mathrm{CO}_{2}$ emissions pairs for emission tax equilibrium for $t\in [0,\frac{1}{4}]$.}
\label{nlfigure}
\end{figure}

The proof of \cref{uniqueequilclaim} is presented in \cref{appendixexp3}. As illustrated in Figure ~\ref{nlfigure}, there is a one-to-one correspondence between the emission tax rate and the total net $\mathrm{CO}_{2}$ emission. 
Hence, although emission tax equilibrium does not exist for emission tax rate $t>\frac{1}{4}$, one can achieve any pre-specified total net $\mathrm{CO}_{2}$ emission via emission tax rate alone. In particular, by \cref{uniqueequilclaim}, given a tax rate $t\leq \frac{1}{4}$, the total net $\mathrm{CO}_{2}$ emission is $1-4t$. Hence, the government can limit the total net $\mathrm{CO}_{2}$ emission under any pre-specifed level $v\leq 1$ by setting the emission tax rate to be no less than $\frac{1-v}{4}$.

\textbf{The Welfare Properties of Emission Tax Equilibrium and Comparison with Quota Equilibrium}: We shall show that, in this example, one can ensure full Pareto optimality through either setting an emission tax rate or setting a quota.

As in \cref{taxequipbexample}, 
there is a Pareto ranking among $\mathcal{V}$-compliant emission tax equilibrium consumption-production pairs arising from the externality. 
By \cref{uniqueequilclaim}, the agent's utility at a $\mathcal{V}$-compliant emission tax equilibrium, as a function of the tax rate, is given by 
$(1-4t^{2})-\frac{(1-4t)^{2}}{2}$. By taking the derivative, 
the agent's utility maximizes uniquely at $t=\frac{1}{6}$ and the utility is $\frac{15}{18}$. Thus, the $\mathcal{V}$-compliant emission tax equilibrium consumption-production pair with emission tax rate $\frac{1}{6}$ Pareto dominates all other $\mathcal{V}$-compliant emission tax equilibrium consumption-production pairs. By \cref{uniqueequilclaim}, there exists a unique $\mathcal{V}$-compliant emission tax equilibrium with emission tax rate $\frac{1}{6}$. Hence, full Pareto optimality can be achieved by setting the emission tax rate to be $\frac{1}{6}$. 

We now consider the associated finite production economy with quota. Let $\mathcal{E}$ be a finite production economy with quota as in \cref{taxequipbexample} except that the second firm's production set is $Y_2=\{(-a,0,-r^{2}): (r\in \NNReals)\wedge (0\leq a\leq 2r)\}$. 
By \cref{uniqueequilclaim} and \cref{equnderqtistax}, $\mathcal{E}$ has a $\mathcal{Z}(m)$-compliant quota equilibrium for all $m\in [-1,0]$. 
Moreover, as there exists a one-to-one correspondence between tax rate and total net $\mathrm{CO}_{2}$ emission, it follows from \cref{equndertaxisquota} that $(\bar{x}_{\frac{1}{6}},\bar{y}_{\frac{1}{6}},\bar{p}_{\frac{1}{6}})$ is the only $\mathcal{Z}(-\frac{1}{3})$-compliant quota equilibrium. Thus, once the government sets the quota on $\mathrm{CO}_{2}$ to be $\frac{1}{3}$, full Pareto optimality of the equilibrium consumption-production pair is achieved without further intervention from the government. 
\end{customexp}

\subsection{Fuel Taxes}\label{secfuel}
We have so far focused on the emission taxes on regulated commodities instead of add-on taxes. 
In this section, we consider economies with three commodities: $\mathrm{CO}_{2}$, coal and electricity, as in \cref{secexample}. 
However, instead of placing a tax on $\mathrm{CO}_{2}$ emissions, we place a tax on the total utilization of coal as a consumption good and as an input to production. Moreover, our tax on coal is an add-on to the equilibrium price. In other words, in order to buy coal, one must pay the equilibrium price plus the tax.\footnote{Our ``tax" could be either positive or negative; in the latter case, it serves as a subsidy of the commodity.} 
In the next example, we find that, for a given total net $\mathrm{CO}_{2}$ emission level, the corresponding fuel tax equilibrium is Pareto dominated by the corresponding emission tax equilibrium.

\begin{customexp}
{4}\label{cbtaxequil}\normalfont
Let $\mathcal{F}=\{(X, \succ_{\omega}, P_{\omega}, e_{\omega}, \theta_{\omega})_{\omega\in \Omega}, (Y_j)_{j\in J}, \mathcal{V}, t, \lambda\}$ be a finite production economy as in \cref{taxequipbexample} but with a fuel tax on coal.\footnote{In this example, coal is not consumed, and is solely used as an input to production.} In particular, 
let $t\geq 0$ be the tax rate on coal. Through normalization, for a price vector $p$, we require that $t+\sum_{k=1}^{\ell}|p_{k}|=1$, that is, $(t,p)\in \Delta$. Note that only the first firm uses coal as an input in production, and the agent does not consume coal.  
For every $(t,p)\in \Delta$ and $y\in Y$, the agent's \emph{fuel tax budget set} $B^{c}_{\omega}(y,t,p)$ is defined to be:
\[
\{z\in X: p\cdot z\leq p\cdot e+\sum_{j\in J}(p\cdot y(j)+ty(j)_{2})+\sum_{j\in J}t|y(j)_{2}|=p\cdot e+\sum_{j\in J}p\cdot y(j)\}. \nonumber
\]
The \emph{fuel tax demand set} is defined to be the collection of elements in $B^{c}_{\omega}(y,t,p)$ that maximizes the agent's utility function. For every $(t, p)\in \Delta$ and each $j\in J$, the firm's supply set $S^{t}_{j}(p)$ is $\argmax_{z\in Y_j}(p\cdot z+tz_{2})$. 
A $\mathcal{V}$-compliant fuel tax equilibrium under the tax rate $t$ is $(\bar{x},\bar{y},\bar{p})$ such that:
\begin{enumerate}
    \item $(t,\bar{p})\in \Delta$;
    \item $\bar{x}$ is in the fuel tax demand set;
    \item $\bar{y}(j)\in S^{t}_{j}(\bar{p})$ for all $j\in J$. Every firm is profit maximizing given the fuel tax rate $t$ and the price vector $\bar{p}$;
    \item $\bar{x}-\sum_{j\in J}\bar{y}(j)-e\in \mathcal{V}$.
\end{enumerate}

The proof of the following claim is presented in \cref{appendixexp4}.

\begin{claim}\label{fueltaxext}
There exists a $\mathcal{V}$-compliant fuel tax equilibrium for any tax rate $t\in [0, 1)$. Moreover, there exists a $\mathcal{V}$-compliant fuel tax equilibrium for any given total net $\mathrm{CO}_{2}$ emission level $v\in [0, 1]$.
\end{claim}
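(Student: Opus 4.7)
The plan is to prove both statements by explicit construction, exploiting the special structure of this example. I would first record a preliminary observation that licenses the budget computations throughout: since $y(j)_2\leq 0$ for every $j\in J$ (coal is only ever an input to production), one has $\sum_{j}t\bigl(y(j)_2+|y(j)_2|\bigr)=0$, so the agent's fuel-tax budget reduces to $p\cdot e+\sum_{j}p\cdot y(j)$ in all candidate equilibria. Firm $1$ is profit-maximizing iff $p_1-p_2+p_3-t\leq 0$ with equality when $r_1>0$, firm $2$ iff $2p_1+p_3\geq 0$ with equality when $r_2>0$, and under $p_3>0$ together with $p_2\geq 0$ the agent's unique demand equals $(0,0,\text{budget}/p_3)$ since her utility $u_v$ is strictly increasing only in $c_3$.

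For existence at every $t\in[0,1)$, I would split into two overlapping regimes. On $t\in[0,1/2)$, I would propose $\bar p=(0,(1-2t)/2,1/2)$, which satisfies $(t,\bar p)\in\Delta$, makes firm $1$ break even, and gives firm $2$ strictly negative marginal profit (so $r_2=0$); setting $\bar y(1)=(1,-1,1)$ and $\bar y(2)=0$, the agent's budget becomes $(1-2t)/2+t=1/2$, producing the unique demand $\bar x=(0,0,1)$, and the excess-demand vector $(-1,0,0)$ lies in $\mathcal{V}=\NegReals^{3}$. On $t\in[1/2,1)$, I would use the degenerate prices $\bar p=(0,0,1-t)$ with $\bar y=0$ and $\bar x=0$: now firm $1$'s marginal profit $1-2t\leq 0$ and firm $2$'s $1-t>0$ both force zero production, the agent has zero budget, and excess demand $(0,-1,0)$ again lies in $\mathcal{V}$.

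For the emission-parametrization statement, I would fix $t=0$ and take $\bar p=(-1/4,1/4,1/2)$, at which both firms' profit expressions vanish identically so that every pair $(r_1,r_2)\in\NNReals^{2}$ is profit-maximizing and the agent's budget equals $\bar p_2=1/4$ independently of the firms' choices, yielding the unique demand $\bar x=(0,0,1/2)$. For any target $v\in[0,1]$, I would choose $r_1=1$ and $r_2=(1-v)/2\in[0,1/2]$, so that $\bar y(1)=(1,-1,1)$ and $\bar y(2)=\bigl(-(1-v),0,-(1-v)/2\bigr)$ realize total net $\mathrm{CO}_{2}$ emission $r_1-2r_2=v$, and the excess-demand vector computes to $(-v,0,-v/2)\in\NegReals^{3}$.

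The verifications are computational rather than conceptual; the single subtlety is the budget simplification flagged at the outset, which rests on the fixed sign $y(j)_2\leq 0$ of coal in both production technologies. Because the agent's preference is strongly monotone in $c_3$ only, in every case above one checks that the proposed $\bar x$ maximizes $c_3$ on the budget set by a one-line argument using $\bar p_3>0$ together with $\bar p_2\geq 0$.
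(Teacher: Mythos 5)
Your proof is correct, and the two halves have different relationships to the paper's own argument. The first half (existence for every $t\in[0,1)$) is exactly the paper's construction: the same two regimes $t<\tfrac12$ and $t\geq\tfrac12$, the same prices $(0,\tfrac12-t,\tfrac12)$ and $(0,0,1-t)$, and the same allocations. The second half is a genuinely different construction. The paper fixes the tax rate at $t=\tfrac12$ with price $(0,0,\tfrac12)$, lets firm $1$ scale its production to $(v,-v,v)$ with firm $2$ idle, and disposes only of zero-priced coal, so that every commodity discarded has zero value. You instead fix $t=0$ and the price $(-\tfrac14,\tfrac14,\tfrac12)$ (the emission-tax-equilibrium price of \cref{taxequipbexample}), under which both firms earn identically zero profit, run firm $1$ at $(1,-1,1)$ and firm $2$ at the sequestration level $(1-v)/2$, and absorb the mismatch by disposing of $v$ units of $\mathrm{CO}_{2}$ at price $-\tfrac14$ and $v/2$ units of electricity at price $+\tfrac12$. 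This is legitimate under the stated definition, since $\mathcal{V}=\NegReals^{3}$ and the equilibrium conditions impose no complementary-slackness requirement on disposal (and indeed the value of your excess-demand vector $(-v,0,-v/2)$ happens to be zero at $\bar p$, so no income leaks from the model); but it does mean your equilibria discard a positively priced good, which the paper's construction deliberately avoids and which makes yours the more degenerate specimen. Two small points of presentation: your blanket assertion that the demand is the \emph{unique} point $(0,0,\text{budget}/p_3)$ fails in the regime $p_2=0$ (coal is then free, so the demand set is a ray in $c_2$); what you actually need, and what holds, is only that the proposed $\bar x$ lies in the demand set. And the phrase ``firm $2$'s $1-t>0$'' should read that firm $2$'s profit is $-(1-t)r\leq 0$, maximized at $r=0$.
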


We now study the welfare properties of $\mathcal{V}$-compliant fuel tax equilibrium, and its comparison to $\mathcal{V}$-compliant emission tax equilibrium in \cref{taxequipbexample}. 
Coal is used to generate electricity for two purposes: consumption of electricity by the agent, and sequestration of $\mathrm{CO}_{2}$ by the second firm. Because the fuel tax applies to coal used for both purposes, it discourages sequestration and results in an inefficient outcome. We now provide the detailed calculations in the context of this example. 

As shown in \cref{taxquatunique}, for every $v\in [0, 1]$, there is a $\mathcal{V}$-compliant emission tax equilibrium such that the total net $\mathrm{CO}_{2}$ emissions is $v$. Moreover, the equilibrium consumption-production pair always takes the form $(\hat{x}_{v},\hat{y}_{v})$ where $\hat{x}_{v}=(0,0,\frac{1+v}{2})$ and $\hat{y}_{v}=\big((1,-1,1), (v-1,0,\frac{v-1}{2})\big)$. We now compare the agent's equilibrium consumption of electricity between fuel tax equilibrium and emission tax equilibrium for any given level of total net $\mathrm{CO}_{2}$ emission.  

\begin{claim}\label{eqconsumpcompare}
For a given total net $\mathrm{CO}_{2}$ emission level $v\in [0, 1]$, the agent's equilibrium consumption of electricity at every fuel tax equilibrium with total net $\mathrm{CO}_{2}$ emission $v$ is no greater than $\frac{1+v}{2}$,\footnote{As shown in \cref{taxquatunique}, $\frac{1+v}{2}$ is the agent's equilibrium consumption of electricity of the emission tax equilibrium with total net $\mathrm{CO}_{2}$ emission $v$.} with strict inequality for $v\in (0,1)$.
\end{claim}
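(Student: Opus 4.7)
My plan is to directly analyze any $\mathcal{V}$-compliant fuel tax equilibrium in $\mathcal{F}$ by parametrizing firm productions. Since the agent does not consume coal and has strictly monotone preference in electricity, the electricity resource constraint binds at equilibrium. Write firm $1$'s production as $(r_1, -r_1, r_1)$ with $r_1 \geq 0$ and firm $2$'s production as $(-2r_2, 0, -r_2)$ with $r_2 \geq 0$. Free-disposal coal feasibility gives $r_1 \leq 1$, and the total net $\mathrm{CO}_{2}$ identity $v = r_1 - 2r_2$ forces $c_3 = r_1 - r_2 = (r_1 + v)/2$. Combining these two facts immediately yields the weak inequality $c_3 \leq (1+v)/2$.

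For the strict inequality when $v \in (0,1)$, the plan is to show $r_1 < 1$ via a case analysis on whether firm $2$ is active. If $r_2 = 0$, the $\mathrm{CO}_{2}$ identity forces $r_1 = v < 1$, giving $c_3 = v < (1+v)/2$.

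If $r_2 > 0$, I would use firm $2$'s interior first-order condition $2p_1 + p_3 = 0$ together with firm $1$'s first-order condition $p_1 - p_2 + p_3 = t$ (firm $1$ is active since $r_1 > v > 0$) to derive the price relation $p_3 = 2(p_2 + t)$. Strict monotonicity of the agent's preference in electricity requires $p_3 > 0$ for a finite demand, hence $p_2 > -t$. Substituting $c_3 = (r_1 + v)/2$ into the binding budget constraint $p_3 c_3 = p_2 + r_1 t$ and simplifying yields the key identity
\[
p_2(1 - r_1 - v) = v t.
\]
A short sign analysis then shows $r_1 \leq 1 - v$: if $1 - r_1 - v < 0$, the identity together with $vt > 0$ forces $p_2 < 0$, but combining $p_2 = vt/(1 - r_1 - v)$ with the requirement $p_2 > -t$ yields $r_1 > 1$, contradicting coal feasibility; the boundary case $r_1 = 1 - v$ requires $vt = 0$, which (since $v > 0$) reduces to $t = 0$ and still gives $r_1 = 1 - v < 1$. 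Hence $r_1 < 1$ in every sub-case, and $c_3 = (r_1 + v)/2 < (1+v)/2$.

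The main obstacle is the sign analysis in the case $r_2 > 0$, specifically ensuring that the requirement $p_3 > 0$, needed for the agent's demand to be well-defined, is incompatible with $r_1 > 1 - v$. The rest of the argument is routine bookkeeping with the two firms' first-order conditions and the agent's binding budget equation.
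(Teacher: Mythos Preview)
Your approach is essentially the same as the paper's: both parametrize firm productions, use the two firms' first-order conditions to obtain the price relation $p_3 = 2(p_2+t)$, and combine the binding budget with $c_3 = (r_1+v)/2$ to derive the identity $p_2(1-r_1-v) = vt$ (the paper writes this equivalently as $v = 2(1-r)p_2/p_3$); your case split on whether $r_2 = 0$ is just a reorganization of the paper's split on the sign of $p_1$. One caveat: your assertion that the electricity resource constraint binds is not justified by monotonicity alone---with $p_1 < 0$, Walras's law $p\cdot d = 0$ does not force $d_3 = 0$ even though $p_3 > 0$---but this is harmless (if electricity does not bind then $c_3 < (r_1+v)/2 \leq (1+v)/2$ immediately), and the paper's own proof makes the same implicit assumption.
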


The proof of \cref{eqconsumpcompare} is presented in \cref{appendixexp4}. 
By \cref{eqconsumpcompare}, given the the total net $\mathrm{CO}_{2}$ emission $v\in (0, 1)$, fuel tax equilibrium is Pareto dominated by the emission tax equilibrium. Given the the total net $\mathrm{CO}_{2}$ emission 
$v\in \{0, 1\}$, the corresponding emission tax equilibrium is at least as good as the corresponding fuel tax equilibrium. Finally, we note that, by the calculation in \cref{taxequipbexample}, the optimal total net $\mathrm{CO}_{2}$ emission is $\frac{1}{2}$. 
\end{customexp}

As we have seen in \cref{taxequipbexample}, emission tax equilibrium may fail to exist for certain tax rates $t$. The fuel tax formulation allows additional freedom, since the purchase price (tax plus the equilibrium price) of all commodities can be adjusted in order to equate supply and demand. However, the next example shows that fuel tax equilibrium may also fail to exist.

\begin{customexp}
{5}\label{addonnoexst}\normalfont
Let $\mathcal{F}=\{(X, \succ_{\omega}, P_{\omega}, e_{\omega}, \theta_{\omega})_{\omega\in \Omega}, (Y_j)_{j\in J}, \mathcal{V}, t, \lambda\}$ be a finite production economy with fuel tax on coal: 
\begin{enumerate}
    \item The economy $\mathcal{E}$ has three commodities $\mathrm{CO}_{2}$, coal and electricity, which we denote by $c_1$, $c_2$ and $c_3$;
    \item There are two agents with the same consumption set $X=\{0\}\times \NNReals^{2}$ and endowment $e=(0,1,0)$. Given the total net emission $v$ of $\mathrm{CO}_{2}$, the first agent's utility function $f_{v}(c_1,c_2,c_3)=c_3-\frac{v^{2}}{2}$ and the second agent's utility function is $g_{v}(c_1,c_2,c_3)=c_2+c_3-\frac{v^{2}}{2}$, i.e., the second agent derives utility from consuming both coal and electricity; 
    \item There is one  producer with production set $Y_1=\{(r,-r,r): r\in \NNReals\}$. So the producer has the production technology to burn $r$ units of coal to generate $r$ units of electricity and $r$ units of $\mathrm{CO}_{2}$ as byproduct;
    \item The compliance region $\mathcal{V}=\NegReals^{3}$;
    \item Both agents have the same shareholding of the private firm, i.e., $\theta_{1}(1)=\theta_{2}(1)=\frac{1}{2}$. The government's rebate shares to both agents are the same, i.e., $\lambda(1)=\lambda(2)=\frac{1}{2}$. 
\end{enumerate}

By the agents' utility function, the equilibrium prices for coal and electricity must both be positive. If the fuel tax is set to be greater than $\frac{1}{2}$, then the only possible equilibrium production for the producer is $(0,0,0)$. However, the first agent has a positive budget and wishes to consume electricity. Hence, there is no equilibrium. 
\end{customexp}

\section{Concluding Remarks and Future Work}\label{secconclude}
To model climate change, we introduce two GE models, the quota equilibrium model and the emission tax equilibrium model, that incorporate regulatory schemes to limit the total net pollution emissions:
\begin{enumerate}
   \item We formulate the quota equilibrium model, which includes two polar cases: cap and trade equilibrium (in which the emission property rights are vested in the private firms), and global quota equilibrium (in which the emission property rights are vested in the government, which redistributes the proceeds to the agents according to the rebate scheme). As we have seen in \cref{exampleproperty}, the property rights specified in quota equilibrium have a major impact on the distribution of welfare among agents;  
   \item \cref{standardqtmain} establishes the existence of quota equilibrium under moderate conditions. The total net emissions of the regulated commodities for any quota equilibrium equals the presepcified quota. 
   Moreover, under the assumptions of \cref{fstwelfarequota}, every quota equilibrium consumption-production pair is constrained Pareto optimal, i.e., Pareto optimal among all quota-compliant consumption-production pairs with the same total net emissions of the regulated commodities. Our examples suggest that full Pareto optimality of quota equilibrium can often be achieved if the government chooses the quota carefully;
   \item We formulate the emission tax equilibrium model in which the government allows for arbitrary net emissions of regulated commodities, but charges a tax on the regulated commodities. The government then rebates its revenue from the tax to agents according to the rebate scheme. The emission tax equilibrium model and the global quota equilibrium model allocate property rights to agents in identical ways, namely, through the government rebate scheme. In particular, as shown in \cref{equndertaxisquota} and \cref{equnderqtistax}, every quota equilibrium can be realized as an emission tax equilibrium with a corresponding tax rate and government rebate scheme, and every emission tax equilibrium can be realized as a global quota equilibrium;
   \item However, as shown in \cref{taxequipbexample} and \cref{nonlinearexample}, emission tax equilibrium need not exist for certain emission tax rates. In some cases, even if emission tax equilibria exist, it is impossible to ensure that the total net emissions of the regulated commodities fall under a prespecified level at all equilibria, as illustrated in \cref{taxequipbexample}. On the other hand, when there is a one-to-one correspondence between emission tax rate and total net pollution emission, the government can often limit the total net emissions of regulated commodities to a prespecified level and achieve full Pareto optimality, through tax alone. Finally, \cref{cbtaxequil} shows that fuel tax equilibrium may be inefficient compared with emission tax equilibrium. \cref{addonnoexst} shows that fuel tax equilibrium may not exist. 
\end{enumerate}

The paper also suggests the following promising directions for future work:
\begin{enumerate}
\item In all of our examples, full Pareto optimality of quota equilibrium can be achieved if the government sets the right quota. It would be desirable to develop general results on the existence of a quota so that the quota equilibria are full Pareto optimal;

\item It would be desirable to have a version of the Second Welfare Theorem in our setting;  

\item The Arrow-Debreu model is inherently static.\footnote{
The addition of Arrow securities allows consideration of multiple time periods in the Arrow-Debreu model, but the markets meet only once. \citet{Golosov14} analyse a dynamic stochastic general equilibrium (DSGE) model with climate change as an externality.} It would be desirable to extend our equilibrium model to  multiple time periods, in two directions:
\begin{enumerate}
\item    
Since emission-reducing technologies require long-term capital investments, the ability to trade long-term emission rights is important in facilitating the deployment of these technologies. \citet{nord07} points out that emissions price volatility is an issue for the cap and trade system, and argues that this is a reason to favor emissions taxes over cap and trade.  Volatility, and the potential for futures contracts to facilitate long-term investments in a volatile environment, can best be analyzed in a general equilibrium financial markets model with multiple time periods.  In a multi-period version of our quota model, we could allow for trading emission rights over multiple periods, as for example in emission futures; 
\item In our models, the quotas, emission taxes, and allocation schemes are exogenously specified.  Nordhaus developed optimal growth models\footnote{See \citet{RDmanual} for details of the Nordhaus DICE and RICE models.} to assist a government in {\em choosing} an optimal emission-reduction path.  
Given such an optimal path, we could readily consider a sequence of our one-period quota models, with the quota in each period prescribed by the optimal path.
It would be desirable to study the welfare properties of such a sequence of one-period equilibria, as well as of multi-period quota models in which emissions quotas can be traded across time;

\end{enumerate}

\item There is a single government in our quota and emissions tax model. Hence, our model does not allow for international trade of commodities. \citet{goulder13} point out that climate policies have a significant impact on international competitiveness of domestic firms. Thus, it would be desirable to develop quota and emissions tax models with multiple governments;  

\item In \cref{cbtaxequil}, fuel tax equilibrium is Pareto dominated by the emission tax equilibrium with the same total net $\mathrm{CO}_{2}$ emission $v\in (0, 1)$.
It would be desirable to identify sufficient conditions under which fuel tax equilibrium is generically inefficient;

\item CGE methods have been widely used in macroeconomics and applications such as international trade negotiations. Quota and tax equilibria are effectively computable using CGE methods, which can identify spillovers from regulation in the energy sector to the broader economy, including identifying losers in a policy change that is beneficial to most agents. Identifying and compensating losers are important in building a political consensus in favor of a policy change, and in mitigating the adverse side effects of that change. It would be desirable to implement CGE analysis of tax and quota equilibria in existing applied macro and trade models;

\item  In an economy with finitely many agents, all agents have some market power\footnote{\citet{roberts76} showed that the demand of each agent has some impact on the price formation as long as there are finitely many agents.} and hence may not act as price-takers. With a measure space of agents developed in \citet{au64}, all agents are negligible, and hence the price-taking assumption is justified. \citet{Noguchi2006} develop a production economy with a measure space of agents and widespread externalities, but without bads, and hence with positive prices. \citet{ha05} provides an example showing that equilibrium may not exist in a measure-theoretic production economy with bads, and in a large finite production economy, a small group of agents may end up consuming almost all of the bads.
Hence, it is desirable to develop an
extension of the production economy with quota to a measure space of agents and finitely many producers, in order to separate price-taking consumers from producers that have market power, and may need to be regulated. We propose to study the equilibrium existence and welfare properties of this extended model. 
\end{enumerate}

\appendix

\section{Appendix}\label{secappa}
The goal of this appendix is to provide a complete proof of a special case of \cref{standardqtmain} where the quota-compliance region is a cone, 
which further completes our proof of \cref{standardqtmain}. 
To do so, we need to introduce the concept of quasi-equilibrium. Let 
\[
\mathcal{E}=\{(X, \succ_{\omega}, P_\omega, e_\omega, \theta)_{\omega\in \Omega}, (Y_j)_{j\in J}, (m^{(j)})_{j\in J}, \mathcal{Z}(m)\}
\]
be a finite production economy with quota as in \cref{defmeasureeco}. 
Given $(x, y,p)\in \mathcal{A}\times Y\times \Delta$, the \emph{quota quasi-demand set} $\bar{D}^{m}_{\omega}(x,y,p)$ is defined as: 
\[
\{z\in B^{m}_{\omega}(y,p): w \succ_{x,y,\omega,p} z\implies p\cdot w\geq p\cdot e(\omega) + \sum_{j\in J}\theta_{\omega j}\big(p\cdot y(j)+\pi_{k}(p)\cdot m^{(j)}\big)\}.\nonumber
\]
A \emph{$\mathcal{Z}(m)$-compliant quota quasi-equilibrium} is $(\bar{x},\bar{y},\bar{p})\in \mathcal{A}\times Y\times \Delta$ such that: 
\begin{enumerate}
\item $\bar{x}(\omega)\in \bar{D}^{m}_{\omega}(\bar{x},\bar{y}, \bar{p})$ for all $\omega\in \Omega$;
\item $\bar{y}(j)\in S_{j}^{m}(\bar{p})$ for all $j\in J$;
\item $\sum_{\omega\in \Omega}\bar{x}(\omega)-\sum_{\omega\in \Omega}e(\omega)-\sum_{j\in J} \bar{y}(j)\in \mathcal{Z}(m)$.
\end{enumerate}
Note that quasi-equilibrium is not stable since agents could, in principle, do better within their budget sets. Thus, the interest of the quasi-equilibrium concept is purely mathematical. The rest of this appendix is broken into the following  two parts:
\begin{enumerate}
    \item To show that every quota quasi-equilibrium is a quota equilibrium under suitable regularity conditions;
    \item Prove the special case of \cref{standardqtmain} by first establishing the existence of a quota quasi-equilibrium, then applying the result mentioned in the previous item to show that the quota quasi-equilibrium is a quota equilibrium.  
\end{enumerate}

\subsection{From Quasi-Equilibrium to Equilibrium}\label{secquasieq}

In this section, we show that, under \cref{esto0} and \cref{conditionexplain} of \cref{standardqtmain}, every $\mathcal{Z}(m)$-compliant quota quasi-equilibrium is a $\mathcal{Z}(m)$-compliant quota equilibrium.  

\begin{lemma}\label{interioritylm}
Let $\mathcal{E}=\{(X, \succ_{\omega},P_{\omega}, e_{\omega}, \theta_{\omega})_{\omega\in \Omega}, (Y_j)_{j\in J}, (m^{(j)})_{j\in J}, \mathcal{Z}(m)\}$ be a finite production economy with quota and $(\bar{x},\bar{y},\bar{p})$ be a $\mathcal{Z}(m)$-compliant quota quasi-equilibrium.
Suppose
\begin{enumerate}[(i)]
\item $m^{(j)}=0$ for all $j\in J$;
\item\label{esto0appqua} there exists an agent $\omega_0\in \Omega$ such that the set $X_{\omega_0} - \sum_{j\in J}\theta_{\omega_{0} j}Y_j$ has non-empty interior $U_{\omega_0}\subset \Reals^{\ell}$ and $e(\omega_0)\in U_{\omega_0}$;
\item there exists 
a commodity $s\in \{1,2,\dotsc,\ell\}$ such that:
\begin{itemize}
    \item the projection $\pi_{s}(X_{\omega_0})$ is unbounded, and the agent $\omega_0$ has a strongly monotone preference on the commodity $s$, where $\omega_0$ is the agent specified in \cref{esto0appqua};
    \item for every $\omega\in \Omega$, there is an open set $V_{\omega}$ containing the $s$-th coordinate $e(\omega)_{s}$ of $e(\omega)$ such that $(e(\omega)_{-s},v)\in X_{\omega} - \sum_{j\in J}\theta_{\omega j}Y_j$ for all $v\in V_{\omega}$.
\end{itemize}
\end{enumerate}
Then $(\bar{x},\bar{y},\bar{p})$ is a $\mathcal{Z}(m)$-compliant quota equilibrium.
\end{lemma}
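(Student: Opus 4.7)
The plan is to upgrade the quota quasi-equilibrium $(\bar x,\bar y,\bar p)$ to a quota equilibrium by the classical \emph{cheaper-point} argument applied agent by agent. Since $m^{(j)}=0$ for every $j$, the quota (quasi-)budget and (quasi-)demand sets coincide with the classical ones at the wealth $W_\omega:=\bar p\cdot e(\omega)+\sum_{j\in J}\theta_{\omega j}\bar p\cdot\bar y(j)$. It suffices to verify the cheaper-point inequality $\inf_{x\in X_\omega}\bar p\cdot x<W_\omega$ for every $\omega\in\Omega$; once this holds, the standard convex-combination argument (invoking convexity of $X_\omega$ and continuity of the preference map from \cref{defmeasureeco}) converts each agent's quasi-demand into the full demand, yielding $\bar x(\omega)\in D^m_\omega(\bar x,\bar y,\bar p)$ and hence, together with the unchanged profit-maximization and feasibility conditions, a quota equilibrium.

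For $\omega_0$ the cheaper-point inequality comes directly from the interiority hypothesis. Since $e(\omega_0)\in U_{\omega_0}=\mathrm{int}(X_{\omega_0}-\sum_j\theta_{\omega_0 j}Y_j)$, for sufficiently small $\varepsilon>0$ the point $e(\omega_0)-\varepsilon\bar p$ remains in $X_{\omega_0}-\sum_j\theta_{\omega_0 j}Y_j$ and can therefore be written as $x-\sum_j\theta_{\omega_0 j}y_j$ with $x\in X_{\omega_0}$ and $y_j\in Y_j$. Taking the inner product with $\bar p$ and using profit maximization $\bar p\cdot y_j\le\bar p\cdot\bar y(j)$ gives $\bar p\cdot x\le W_{\omega_0}-\varepsilon\bar p\cdot\bar p<W_{\omega_0}$, which yields the cheaper-point inequality for $\omega_0$ and hence $\bar x(\omega_0)\in D^m_{\omega_0}(\bar x,\bar y,\bar p)$.

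The crucial intermediate step is to show $\bar p_s>0$. Granted this, the second bullet of the commodity-$s$ hypothesis furnishes, for each $\omega\in\Omega$, a value $v\in V_\omega$ with $v<e(\omega)_s$ together with a decomposition $(e(\omega)_{-s},v)=x-\sum_j\theta_{\omega j}y_j$ with $x\in X_\omega$ and $y_j\in Y_j$. Taking the inner product with $\bar p$ and invoking profit maximization yields $\bar p\cdot x\le W_\omega+\bar p_s(v-e(\omega)_s)<W_\omega$, the cheaper-point argument then upgrades every agent's quasi-demand to the full demand, and $(\bar x,\bar y,\bar p)$ is a quota equilibrium.

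To prove $\bar p_s>0$ I argue by contradiction: assuming $\bar p_s\le 0$, I combine the unboundedness of $\pi_s(X_{\omega_0})$ with strong monotonicity in $s$ to produce a bundle in $X_{\omega_0}$ that is strictly preferred to $\bar x(\omega_0)$ and affordable, contradicting $\bar x(\omega_0)\in D^m_{\omega_0}(\bar x,\bar y,\bar p)$ from the second paragraph. This is the main obstacle, because strong monotonicity as stated in the footnote only compares bundles differing in coordinate $s$ alone, whereas the geometry of the closed convex set $X_{\omega_0}\subset\NNReals^\ell$ only provides recession-type directions with positive $s$-component that may also involve other coordinates. I plan to bridge the gap by first extracting, from closedness, convexity, and nonnegativity of $X_{\omega_0}$, a recession direction $d\in\NNReals^\ell$ with $d_s>0$, and then exploiting $\bar p_s\le 0$ together with continuity of the preference map and carefully chosen convex combinations to produce an affordable bundle that agrees with $\bar x(\omega_0)$ outside coordinate $s$ and strictly exceeds it in coordinate $s$.
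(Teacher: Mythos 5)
Your proposal follows the paper's own proof essentially step for step: the cheaper\mbox{-}point correspondence, the interiority argument giving a cheaper point for $\omega_0$ and hence (via the convex\mbox{-}combination/continuity argument) the full demand condition for $\omega_0$, the deduction $\bar p_s>0$ from strong monotonicity and unboundedness, the second bullet of condition (iii) giving a cheaper point for every other agent, and the same upgrade applied to all agents. The one place you genuinely depart is the bridge you propose for $\bar p_s>0$, and that bridge does not work as stated: unboundedness of $\pi_s(X_{\omega_0})$ for a closed convex $X_{\omega_0}\subset\NNReals^{\ell}$ does \emph{not} yield a recession direction $d$ with $d_s>0$. For instance, $X=\{(a,b)\in\NNReals^{2}: a\le \sqrt{b}\}$ has unbounded projection onto the first coordinate but recession cone $\{0\}\times\NNReals$; and no convex combination of $\bar x(\omega_0)$ with a far\mbox{-}out point of such a set produces a bundle agreeing with $\bar x(\omega_0)$ off coordinate $s$. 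To be fair, the difficulty you are flagging is real and the paper's own proof dispatches this step in a single sentence without addressing it; under the intended reading of the hypothesis (strong monotonicity in $s$ is non\mbox{-}vacuous, i.e.\ one can raise the $s$-th coordinate of $\bar x(\omega_0)$ within $X_{\omega_0}$), the argument is immediate: if $\bar p_s\le 0$, then $\hat x=\bar x(\omega_0)+\varepsilon e_s\in X_{\omega_0}$ is strictly preferred and satisfies $\bar p\cdot\hat x\le \bar p\cdot\bar x(\omega_0)\le W_{\omega_0}$, contradicting the strict\mbox{-}inequality claim you already established for $\omega_0$. You should either state that reading explicitly or replace the recession\mbox{-}cone step; everything else in your proposal matches the paper's argument.
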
 

\begin{proof}
Let $(\bar{x},\bar{y},\bar{p})$ be a $\mathcal{Z}(m)$-compliant quota quasi-equilibrium.
For each consumer $\omega$, define a correspondence  $\delta_\omega: \Delta\cto X_{\omega}$ as
\[
\delta_{\omega}(p)=\{x_{\omega}\in X_{\omega}: p\cdot x_{\omega} < p\cdot e(\omega) + \sum_{j\in J}\theta_{\omega j}\sup\{p\cdot y: y\in Y_j\}\}. \nonumber
\]
We start by establishing the following claim:
\begin{claim}\label{delta0claim}
$\delta_{\omega_0}(\bar{p})\neq \emptyset$.
\end{claim}
\begin{proof}
Note that set $X_{\omega_0} - \sum_{j\in J}\theta_{\omega_0 j}Y_j$ has non-empty interior $U_{\omega_0}$ and $e(\omega_0)\in U_{\omega_0}$. Hence, we can pick $u_{\omega_0}\in \Reals^{\ell}$ such that $\bar{p}\cdot u_{\omega_0}<0$ and that $(e(\omega_0) + u_{\omega_0}) \in (X_{\omega_0}-\sum_{j\in J}\theta_{\omega_0 j}Y_j)$. As $(\bar{x},\bar{y},\bar{p})$ is a $\mathcal{Z}(m)$-compliant quota quasi-equilibrium, we have
$\bar{p}\cdot \tilde x_{\omega_0} < \bar{p}\cdot e(\omega_0)+\sum_{j\in J}\theta_{\omega_{0} j}\bar{p}\cdot \bar{y}(j)$
for some $\tilde x_{\omega_0}\in X_{\omega_0}$.
So we have $\delta_{\omega_0}(\bar{p})\neq\emptyset$.
\end{proof}
\cref{delta0claim} leads to the following result:
\begin{claim}\label{omega0equil}
If $\hat{x}\in X_{\omega_0}$ with $(\hat{x}, \bar{x}(\omega_0))\in P_{\omega_0}(\bar{x},\bar{y},\bar{p})$, then $\bar{p}\cdot \hat{x}>\bar{p}\cdot e(\omega_0)+\sum_{j\in J}\theta_{\omega_0 j}\bar{p}\cdot \bar{y}(j)$, i.e., $\hat{x}$ is not in agent $\omega_0$'s budget set given $\bar{p}$ and $\bar{y}$. 
\end{claim}
\begin{proof}
Let $\hat{x}\in X_{\omega_0}$ be such that 
$(\hat{x}, \bar{x}(\omega_0))\in P_{\omega_0}(\bar{x},\bar{y},\bar{p})$. By \cref{delta0claim}, pick $z_{\omega_0}\in \delta_{\omega_0}(\bar{p})$. 
As $m^{(j)}=0$ for all $j\in J$, 
we have $\bar{p}\cdot \hat{x}\geq \bar{p}\cdot e(\omega_0)+\sum_{j\in J}\theta_{\omega_0 j}\bar{p}\cdot \bar{y}(j)$ since $(\bar{x},\bar{y},\bar{p})$ is a $\mathcal{Z}(m)$-compliant quota quasi-equilibrium. As $P_{\omega_0}(\bar{x},\bar{y},\bar{p})$ is continuous, there exists $\lambda\in (0,1)$ such that 
$(\lambda z_{\omega_0}+(1-\lambda)\hat{x}, \bar{x}(\omega_0)) \in P_{\omega_0}(\bar{x},\bar{y},\bar{p})$.

Assume that $\bar{p}\cdot \hat{x}=\bar{p}\cdot e(\omega_0)+\sum_{j\in J}\theta_{\omega_0 j}\bar{p}\cdot \bar{y}(j)$. 
Then we have $(\lambda z_{\omega_0}+(1-\lambda)\hat{x}, \bar{x}(\omega_0)) \in P_{\omega_0}(\bar{x},\bar{y},\bar{p})$ and $\lambda z_{\omega_0}+(1-\lambda)\hat{x}\in \delta_{\omega_0}(\bar{p})$. 
This furnishes us a contradiction since $(\bar{x},\bar{y},\bar{p})$ is a $\mathcal{Z}(m)$-compliant quota quasi-equilibrium. So we have $\bar{p}\cdot \hat{x}>\bar{p}\cdot e(\omega_0)+\sum_{j\in J}\theta_{\omega_0 j}\bar{p}\cdot \bar{y}(j)$.  
\end{proof}
As the agent $\omega_0$ has a strongly monotone preference on the commodity $s$ and the projection $\pi_{s}(X_{\omega_0})$ is unbounded, by \cref{omega0equil}, we conclude that $\bar{p}_{s}>0$, i.e., the equilibrium price $\bar{p}_{s}$ of commodity $s$ is strictly positive. 
\begin{claim}\label{deltaclaim}
For every $\omega\in \Omega$, $\delta_{\omega}(\bar{p})\neq \emptyset$. 
\end{claim}
\begin{proof}
Note that, for every $\omega\in \Omega$, there is an open set $V_{\omega}$ containing the $s$-th coordinate $e(\omega)_{s}$ of $e(\omega)$ such that $(e(\omega)_{-s},v)\in X_{\omega} - \sum_{j\in J}\theta_{\omega j}Y_j$ for all $v\in V_{\omega}$.
As $\bar{p}_{s}>0$, we can pick $u_{\omega}\in \Reals^{\ell}$ such that $\bar{p}\cdot u_{\omega}<0$ and that $(e(\omega) + u_{\omega}) \in (X_{\omega}-\sum_{j\in J}\theta_{\omega j}Y_j)$.
As $(\bar{x},\bar{y},\bar{p})$ is a $\mathcal{Z}(m)$-compliant quota quasi-equilibrium, we have
\[
\bar{p}\cdot \tilde x_{\omega} < \bar{p}\cdot e(\omega)+\sum_{j\in J}\theta_{\omega j}\bar{p}\cdot \bar{y}(j) \nonumber
\]
for some $\tilde x_\omega\in X_{\omega}$.
So we have $\delta_{\omega}(\bar{p})\neq\emptyset$.  
\end{proof}

We now show that $(\bar{x},\bar{y},\bar{p})$ is a $\mathcal{Z}(m)$-compliant quota equilibrium. The proof is similar to the proof of \cref{omega0equil}. 
For each $\omega\in \Omega$, by \cref{deltaclaim},
pick $z_{\omega}\in \delta_{\omega}(\bar{p})$ and $\hat{x}_{\omega}\in X_{\omega}$ such that $(\hat{x}_{\omega}, \bar{x}(\omega))\in P_{\omega}(\bar{x},\bar{y},\bar{p})$. 
As $m^{(j)}=0$ for all $j\in J$, 
we have $\bar{p}\cdot \hat{x}_{\omega}\geq \bar{p}\cdot e(\omega)+\sum_{j\in J}\theta_{\omega j}\bar{p}\cdot \bar{y}(j)$ since $(\bar{x},\bar{y},\bar{p})$ is a $\mathcal{Z}(m)$-compliant quota quasi-equilibrium. As $P_{\omega}(\bar{x},\bar{y},\bar{p})$ is continuous, there exists $\lambda\in (0,1)$ such that 
$(\lambda z_\omega+(1-\lambda)\hat x_\omega, \bar{x}(\omega)) \in P_{\omega}(\bar{x},\bar{y},\bar{p})$.
Assume that $\bar{p}\cdot \hat{x}_{\omega}=\bar{p}\cdot e(\omega)+\sum_{j\in J}\theta_{\omega j}\bar{p}\cdot \bar{y}(j)$. 
Then we have $(\lambda z_\omega+(1-\lambda)\hat x_\omega, \bar{x}(\omega)) \in P_{\omega}(\bar{x},\bar{y},\bar{p})$ and $\lambda z_\omega+(1-\lambda)\hat x_\omega\in \delta_{\omega}(\bar{p})$. 
This furnishes us a contradiction since $(\bar{x},\bar{y},\bar{p})$ is a $\mathcal{Z}(m)$-compliant quota quasi-equilibrium. Therefore, we have $\bar{p}\cdot \hat x_\omega>\bar{p}\cdot e(\omega)+\sum_{j\in J}\theta_{\omega j}\bar{p}\cdot \bar{y}(j)$. 
Hence, $(\bar{x},\bar{y},\bar{p})$ is a $\mathcal{Z}(m)$-compliant quota equilibrium. 
\end{proof}

\subsection{Existence of Quota Equilibrium with Quota-Compliance Cone}\label{appendixcone}

In this section, we prove a special case of \cref{standardqtmain} when the quota-compliance region is a cone. 
The main result of this section is similar to the Proposition 3.2.3 in \citet{fl03bk}.

\begin{theorem}\label{finitemainexpd}
Let $\mathcal{E}=\{(X, \succ_{\omega}, P_\omega, e_\omega, \theta_{\omega})_{\omega\in \Omega}, (Y_j)_{j\in J}, (m^{(j)})_{j\in J}, \mathcal{Z}(m)\}$ be a finite production economy with quota as in \cref{defmeasureeco}. Let the polar cone of the quota-compliance region $\mathcal{Z}(m)$ be $\mathcal{Z}^{0}=\{p\in \Delta: (\forall z\in \mathcal{Z}(m))(p\cdot z\leq 0)\}$. 
Suppose $\mathcal{E}$, in addition, satisfies: 
\begin{enumerate}[(i)]
    \item $m^{(j)}=0$ for all $j\in J$;
    \item for all $\omega\in \Omega$,  $\hat{X}_{\omega}$ is compact and $P_{\omega}$ takes value in $\mathcal{P}_{H}$; 
    \item \label{esto0app} there exists an agent $\omega_0\in \Omega$ such that the set $X_{\omega_0} - \sum_{j\in J}\theta_{\omega_{0} j}Y_j$ has non-empty interior $U_{\omega_0}\subset \Reals^{\ell}$ and $e(\omega_0)\in U_{\omega_0}$;
\item \label{estcommoapp} there exists 
a commodity $s\in \{1,2,\dotsc,\ell\}$ such that:
\begin{itemize}
    \item the projection $\pi_{s}(X_{\omega_0})$ is unbounded, and the agent $\omega_0$ has a strongly monotone preference on the commodity $s$, where $\omega_0$ is the agent specified in \cref{esto0app};
    \item for every $\omega\in \Omega$, there is an open set $V_{\omega}$ containing the $s$-th coordinate $e(\omega)_{s}$ of $e(\omega)$ such that $(e(\omega)_{-s},v)\in X_{\omega} - \sum_{j\in J}\theta_{\omega j}Y_j$ for all $v\in V_{\omega}$;
\end{itemize}
    \item \label{satiateappen}for all $\omega\in \Omega$, for each $(x,y)\in \cO$, there exists $u\in X_\omega$ such that $(u, x_\omega)\in \bigcap_{p\in \Delta\cap \mathcal{Z}^{0}}P_{\omega}(x,y,p)$;
    \item the aggregate production set $\bar{Y}=\left\{\sum_{j\in J}{y}(j): y\in Y\right\}$ is closed and convex, and for each $j\in J$, the set $\hat{Y}_{j}$ is relatively compact;
    \item $\bar{Y}+\mathcal{Z}(m)$ is closed.
\end{enumerate}
Then, there exists $(\bar{x}, \bar{y}, \bar{p})\in \mathcal{A}\times Y\times \Delta$ such that:
\begin{enumerate}
    \item $(\bar{x}, \bar{y}, \bar{p})$ is a $\mathcal{Z}(m)$-compliant quota equilibrium;
    \item we have $\bar{p}\in \mathcal{Z}^{0}$ and $\bar{p}\cdot \big(\sum_{\omega\in \Omega}\bar{x}(\omega)-\sum_{\omega\in \Omega}e(\omega)-\sum_{j\in J}\bar{y}(j)\big)=0$.
\end{enumerate}
\end{theorem}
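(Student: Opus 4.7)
The plan is to prove the theorem in two stages: first establish the existence of a $\mathcal{Z}(m)$-compliant quota quasi-equilibrium via a fixed-point argument on a suitably truncated economy, and then upgrade this quasi-equilibrium to a full equilibrium by appealing to \cref{interioritylm}. Because all firms' quotas are zero, the quota budget set reduces to the classical Walrasian budget and each firm's profit is simply $\sup_{z \in Y_j} p \cdot z$; the cone structure of $\mathcal{Z}(m)$ is what will force the equilibrium price into the polar cone $\mathcal{Z}^{0}$ via complementary slackness.

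For the first stage, I would follow the strategy of Proposition 3.2.3 in \citet{fl03bk}. Using compactness of $\hat{X}_{\omega}$ and relative compactness of $\hat{Y}_j$, I would pick a closed ball $K \subset \Reals^{\ell}$ with non-empty interior containing every quota-compliant consumption and production plan strictly in its interior, and truncate each $X_{\omega}$ and $Y_j$ to compact convex pieces. I would then set up an abstract economy in which each consumer chooses from her truncated Walrasian budget, each producer maximizes profit, and a Walrasian auctioneer chooses a price $p \in \Delta \cap \mathcal{Z}^{0}$ so as to maximize the value of aggregate excess demand. Continuity of $P_{\omega}$ in the closed convergence topology yields lower hemicontinuity of the strict preference correspondence and convex upper sections; a Gale--Mas-Colell or Kakutani--Fan--Glicksberg type fixed point theorem then supplies a social equilibrium. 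Non-satiation (assumption \ref{satiateappen}) is used in the standard way to prevent consumers from leaving their budgets unspent while strictly preferring some alternative bundle.

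Verifying that the fixed point is a quasi-equilibrium has two ingredients. First, Walras's law (available because $m^{(j)} = 0$) together with the auctioneer's best response forces $\bar{p} \cdot z = \max_{p \in \Delta \cap \mathcal{Z}^{0}} p \cdot z = 0$, where $z$ denotes the aggregate excess demand, and a separating hyperplane argument combined with closedness of $\bar{Y} + \mathcal{Z}(m)$ then pins down $z \in \mathcal{Z}(m)$ and $\bar{p} \in \mathcal{Z}^{0}$. Second, I would show the truncation does not bind, by choosing $K$ large enough that any candidate equilibrium consumption and production lies strictly in its interior, so that local deviations inside $K$ suffice to rule out profitable or preferred deviations outside. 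The non-convexity of $\Delta$ is handled, as in \citet{Hart75}, by restricting the price player to the convex set $\Delta \cap \mathcal{Z}^{0}$ (or by reparametrizing through a simplex intersected with this cone). The second stage then follows immediately from \cref{interioritylm}: its hypotheses are precisely \ref{esto0app} and \ref{estcommoapp} as stated here, so monotonicity in commodity $s$ and interior survival for $\omega_0$ yield $\bar{p}_s > 0$, after which the second bullet of \ref{estcommoapp} gives every agent strictly positive wealth at $\bar{p}$.

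The main obstacle is not the passage from quasi-equilibrium to equilibrium---\cref{interioritylm} handles that directly---but the fixed-point construction producing excess demand in the cone $\mathcal{Z}(m)$ with matching complementary slackness against the price, rather than merely the usual free-disposal inequality. Doing this correctly requires careful choice of the auctioneer's action set and uses closedness of $\bar{Y} + \mathcal{Z}(m)$ to pass to the limit. Once the pair $(z, \bar{p}) \in \mathcal{Z}(m) \times \mathcal{Z}^{0}$ with $\bar{p} \cdot z = 0$ is in hand, the remaining verifications follow the standard pattern and the claimed equilibrium properties fall out.
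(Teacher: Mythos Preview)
Your proposal follows the same two-stage architecture as the paper: obtain a quasi-equilibrium with $\bar{p}\in\mathcal{Z}^{0}$ and complementary slackness, then invoke \cref{interioritylm} to upgrade to an equilibrium. The difference is one of scale: the paper's proof is essentially three sentences, because it simply verifies that the hypotheses of Proposition~3.2.3 in \citet{fl03bk} are met---lower hemicontinuity of the strict-preference correspondence from continuity of $P_\omega$, $x_\omega\notin\conv(P'_\omega(x,y,p))$ from convex values and irreflexivity, non-satiation from assumption~\ref{satiateappen}, survival $e(\omega)\in X_\omega-\sum_j\theta_{\omega j}Y_j$ from \ref{esto0app}--\ref{estcommoapp}, and the compactness and closedness hypotheses verbatim---and then cites that proposition as a black box to obtain the quasi-equilibrium together with $\bar{p}\in\mathcal{Z}^{0}$ and $\bar{p}\cdot z=0$. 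You are instead sketching a proof of Florenzano's proposition itself. That is a legitimate route, and your outline (truncate, build an abstract economy with an auctioneer on the polar cone, apply a Gale--Mas-Colell type theorem, use Walras's law for complementary slackness, then untruncate) is the standard one.

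One step in your sketch does not go through as written. You say the non-convexity of $\Delta$ is handled ``by restricting the price player to the convex set $\Delta\cap\mathcal{Z}^{0}$''. Here $\Delta$ is the $\ell_1$-sphere, and since $\mathcal{Z}(m)_n=\{0\}$ for every regulated coordinate $n\le k$ (and possibly for some $n>k$), the corresponding price coordinates are unconstrained in $\mathcal{Z}^{0}$; hence $\Delta\cap\mathcal{Z}^{0}$ is generally \emph{not} convex, and a Kakutani/Gale--Mas-Colell argument cannot be applied to it directly. Your parenthetical alternative---reparametrising through a simplex, or using an antipodal-point argument as in \citet{Hart75}---is what is actually needed, but this is precisely the delicate part and you would have to spell it out. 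The paper sidesteps the entire issue by citing Florenzano rather than reproving her fixed-point construction.
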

\begin{proof}
As $m=\sum_{j\in J}m^{(j)}=0$, the quota-compliance region $\mathcal{Z}(m)$ is a cone, and neither the government firm nor the private firms have any quota. Thus, \cref{finitemainexpd} is similar to Proposition 3.2.3 in \citet{fl03bk}. 
For $\omega\in \Omega$, define the correspondence $P'_{\omega}: \cA\times Y\times \Delta\cto X_\omega$ by 
\[
P'_{\omega}(x,y,p)=\{a\in X_\omega|(a,x_\omega)\in P_{\omega}(x,y,p)\}.\nonumber
\]
Note that $P'_{\omega}$ is lower hemicontinuous since the preference map $P_{\omega}$ is continuous. As $P_{\omega}$ takes value in $\mathcal{P}_{H}$ and is irreflexive, $x_{\omega}\not\in \conv(P'_{\omega}(x,y,p))$ for all
$(x,y,p)\in \cA\times Y\times \Delta$ and all $\omega\in \Omega$. By \cref{satiateappen}, we have $\bigcap_{p\in \Delta\cap \mathcal{Z}^{0}}P'_{\omega}(x,y,p)\neq \emptyset$ for all $(x,y)\in \cO$. By \cref{esto0app} and \cref{estcommoapp}, we have $e(\omega)\in X_{\omega}-\sum_{j\in J}\theta_{\omega j}Y_j$ for all $\omega\in \Omega$.
As $\hat{X}_{\omega}$ is compact for every $\omega\in \Omega$, $\bar{Y}$ is closed and convex, $\hat{Y}_{j}$ is relatively compact for every $j\in J$ and $\bar{Y}+\mathcal{Z}(m)$ is closed, 
by Proposition 3.2.3 in \citet{fl03bk}, we conclude that $\mathcal{E}$ has a $\mathcal{Z}(m)$-compliant quota quasi-equilibrium $(\bar{x}, \bar{y}, \bar{p})\in \cA\times Y\times \Delta$. Moreover, we have $\bar{p}\in \mathcal{Z}^{0}$ and $\bar{p}\cdot \big(\sum_{\omega\in \Omega}\bar{x}(\omega)-\sum_{\omega\in \Omega}e(\omega)-\sum_{j\in J}\bar{y}(j)\big)=0$. 
As $m^{(j)}=0$ for all $j\in J$, 
by \cref{interioritylm}, $(\bar{x}, \bar{y}, \bar{p})$ is a $\mathcal{Z}(m)$-compliant quota equilibrium.
\end{proof}

\section{Detailed Analysis of Examples}\label{appendixexp}
This appendix is devoted to detailed analysis of the examples in the main body of the paper. In particular, we provide rigorous proofs to the claims within the examples. 
\subsection{Detailed Analysis of \cref{exampleproperty}}\label{appendixexp1}
We provide rigorous proofs for \cref{governquota} and \cref{privatequota} to complete our analysis of \cref{exampleproperty}.

\begin{proof}[Proof of \cref{governquota}]
As $\mathrm{CO}_{2}$ comes solely from the production of the private firm, the private firm's equilibrium production is $(-m,m,-m)$. Note that $\bar{p}_{1}-\bar{p}_{2}+\bar{p}_{3}=0$ since otherwise the first firm's profit is unbounded. As both agents have the same endowment, utility function and share of the government firm, and we also require no-free-disposal of electricity at equilibrium, they must each consume $\frac{-m}{2}$ unit of electricity at any equilibrium. 
Hence, we have $\frac{-m}{2}\cdot \bar{p}_{3}=\bar{p}_{2}-\frac{-m}{2}\bar{p}_{1}$, which implies that $\bar{p}_{3}=\frac{2}{-m}\bar{p}_{2}-\bar{p}_{1}$. As $\bar{p}_{1}-\bar{p}_{2}+\bar{p}_{3}=0$ and $m\in (-2,0)$, we conclude that $\bar{p}_{2}=0$. 
By the form of agents' utility function, we have $\bar{p}_{3}>0$. As $\bar{p}\in \Delta$, we conclude that $\bar{p}=(-\frac{1}{2},0,\frac{1}{2})$.
\end{proof}

\begin{proof}[Proof of \cref{privatequota}]
As in the proof of \cref{governquota}, the equilibrium production for the private firm is $(-m,m,-m)$, and we have $\bar{p}_{1}-\bar{p}_{2}+\bar{p}_{3}=0$. The first agent's budget is $\bar{p}_{2}+m\bar{p}_{1}=-m(\bar{p}_{2}-\bar{p}_{1})+(1+m)\bar{p}_{2}$ since the first agent's shareholding of the private firm is $1$. Again, by the form of agents' utility function, we have $\bar{p}_{3}>0$.
Intuitively, since $m>-2$ and no agent derives utility from coal, the price of coal must be $0$. 
We now rigorously establish that $\bar{p}_{2}=0$ by considering three different cases $m<-1$, $m=-1$ and $m>-1$: 
\begin{itemize}
    \item If $m<-1$, as $\bar{p}_{3}>0$, then $\bar{p}_{2}\geq 0$ since the first agent's equilibrium consumption of electricity can not exceed $-m$ units. The second agent's budget is $\bar{p}_{2}$ which implies that the aggregate equilibrium consumption of electricity is $\frac{2\bar{p}_{2}+m\bar{p}_{1}}{\bar{p}_{3}}=\frac{-m(\bar{p}_{2}-\bar{p}_{1})+(2+m)\bar{p}_{2}}{\bar{p}_{3}}$. As $m>-2$, the aggregate equilibrium consumption of electricity exceeds $-m$ if $\bar{p}_{2}>0$. Hence, we have $\bar{p}_{2}=0$;
    \item If $m=-1$, then the first agent's budget is $\bar{p}_{2}-\bar{p}_{1}$. Thus, the first agent consumes $1$ unit of electricity which implies that the second agent's equilibrium electricity consumption must be $0$. As the second agent's budget is given by $\bar{p}_{2}$ and $\bar{p}_{3}>0$, we conclude that $\bar{p}_{2}=0$;
    \item If $m>-1$, as $\bar{p}_{3}>0$, then $\bar{p}_{2}\leq 0$ since the first agent's equilibrium consumption of electricity can not exceed $-m$ units. Note that the second agent's budget is $\bar{p}_{2}$. As $\bar{p}_{3}>0$ and $X_{2}\subset \NNReals^{3}$, we have $\bar{p}_{2}\geq 0$. Thus, we conclude that $\bar{p}_{2}=0$.
\end{itemize}
As $\bar{p}\in \Delta$, $\bar{p}_{1}-\bar{p}_{2}+\bar{p}_{3}=0$ and $\bar{p}_{3}>0$, we conclude that $\bar{p}=(-\frac{1}{2},0,\frac{1}{2})$. Hence, the first agent's equilibrium electricity consumption is $-m$ while the second agent's equilibrium electricity consumption is $0$.
\end{proof}

\subsection{Detailed Analysis of \cref{taxequipbexample}}\label{appendixexp2} 
We provide rigorous proofs for \cref{eqestclaim}, \cref{wrongtax} and \cref{taxquatunique} to complete our analysis of \cref{taxequipbexample}. 

\begin{proof}[Proof of \cref{eqestclaim}]
Let $t_0\leq \frac{1}{4}$ be the tax rate on $\mathrm{CO}_{2}$. Let $\bar{x}=(0,0,1), \bar{y}=\big((1,-1,1),(0,0,0)\big)$ and $\bar{p}=(-t_0, \frac{1}{2}-t_0, \frac{1}{2})$. We claim that $(\bar{x},\bar{y},\bar{p})$ is a $\mathcal{V}$-compliant emission tax equilibrium with tax rate $t_0$ on $\mathrm{CO}_{2}$. At the equilibrium price $\bar{p}$, both firms are profit maximizing, which are both $0$. The budget set for the agent is:
\[
\{z\in X: \bar{p}\cdot z\leq \frac{1}{2}-t_0+t_0=\frac{1}{2}\}.
\]
Hence, we have $\bar{x}\in D^{t}(\bar{x},\bar{y},\bar{p})$. Note that $\bar{x}-e-\sum_{j\in J}\bar{y}(j)=(-1,0,0)\in \mathcal{V}$. So $(\bar{x},\bar{y},\bar{p})$ is a $\mathcal{V}$-compliant emission tax equilibrium with tax rate $t_0$. 

We now show that there is no $\mathcal{V}$-compliant emission tax equilibrium if the emission tax rate is greater than $\frac{1}{4}$. Suppose $t>0$ is an emission tax rate on $\mathrm{CO}_{2}$ under which there is a $\mathcal{V}$-compliant emission tax equilibrium $(\hat{x},\hat{y},\hat{p})$. By definition, we know that $\hat{p}_{1}=-t$. The equilibrium price $\hat{p}_{3}$ must be no less than $2t$ since otherwise the second firm's profit is unbounded. 
For the same reason, we know that $\hat{p}_{2}\geq \hat{p}_{3}-t>0$.
As the endowment $e=(0,1,0)$, the agent's budget at equilibrium is positive. The equilibrium production for the first firm must not be $(0,0,0)$ since the agent has a positive budget which she will spend entirely on electricity. 
Hence, we conclude that $\hat{p}_{3}=t+\hat{p}_{2}$. As $\hat{p}\in \Delta$, we have $2t+2\hat{p}_{2}=1$, which implies that $\hat{p}_{3}=\frac{1}{2}$. As $\hat{p}_{3}\geq 2t$, we know that $t\leq \frac{1}{4}$.
By \cref{eqestclaim}, we conclude that $\mathcal{F}$ has a $\mathcal{V}$-compliant emission tax equilibrium if and only if the emission tax rate $t\leq \frac{1}{4}$. Hence, we conclude that there exists a $\mathcal{V}$-compliant emission tax equilibrium if and only if the tax rate $t\leq \frac{1}{4}$. 
\end{proof}

\begin{proof}[Proof of \cref{wrongtax}]
Pick $t_0<\frac{1}{4}$ to be the emission tax rate of $\mathrm{CO}_{2}$. As indicated in \cref{eqestclaim}, there is a $\mathcal{V}$-compliant emission tax equilibrium $(\hat{x},\hat{y},\hat{p})$. By the same argument as in the previous paragraph, we conclude that $\hat{p}_{3}=\frac{1}{2}$ and $\hat{p}_{2}=\frac{1}{2}-t_0$.
As $\hat{p}_{3}>2t_0$, the equilibrium production for the second firm is $(0,0,0)$.
Suppose the equilibrium production for the first firm is $(r,-r,r)$ for $r<1$. The emission tax budget set is:
\[
\{z\in X: \hat{p}\cdot z\leq \frac{1}{2}-t_0+rt_0\}. 
\]
As $t_0<\frac{1}{4}$ and $r<1$, we have $\frac{1}{2}r<\frac{1}{2}-(1-r)t_0$, which implies that the consumption $(0,0,r)$ is in the agent's budget set. 
However, the agent has extra budget to consume more electricity but the total production of the electricity is $r$ units. So $(0,0,r)$ is not in the emission tax demand set $D^{t}(\hat{x},\hat{y},\hat{p})$.
Hence, the equilibrium production for the first firm is $(1,-1,1)$. So the total net emission of $\mathrm{CO}_{2}$ is $1$.
\end{proof}

\begin{proof}[Proof of \cref{taxquatunique}]
Pick some $v_0\in [0,1]$. 
To achieve this pre-specified total net emission of $\mathrm{CO}_{2}$, all possible equilibrium production plans take the following form:
\begin{itemize}
    \item The first firm produces at $(r_1,-r_1, r_1)$ where $0\leq r_1\leq 1$. The second firm produces at $(-2r_2,0,-r_2)$ where $0\leq r_2\leq \frac{r_1}{2}$. We also require $r_1-2r_2=v_0$.
\end{itemize}
Thus, the agent's utility function is $(r_1-r_2)-\frac{v_{0}^{2}}{2}$. As $r_2=\frac{r_{1}-v_{0}}{2}$, so the agent's utility maximizes by taking $r_{1}=1$. By \cref{equnderqtistax} and \cref{fstwelfarequota}\footnote{Note that we allow for free disposal of electricity and coal at equilibrium. However, as the equilibrium price for electricity and coal are both positive, any equilibrium consumption-production pair must be constrained Pareto optimal.}, the only possible equilibrium production plan for the first firm is $(1,-1,1)$, which implies that the only possible equilibrium consumption-production pair is $\hat{x}_{v_0}=(0,0,\frac{1+v_0}{2})$ and $\hat{y}_{v_0}=\big((1, -1, 1), (v_0-1, 0, \frac{v_0-1}{2})\big)$. It remains to show that $(\hat{x}_{v_0},\hat{y}_{v_0},\hat{p})$ is a $\mathcal{V}$-compliant emission tax equilibrium. 
Note that $\hat{x}_{v_0}-e-\sum_{j\in J}\hat{y}_{v_0}(j)=(-v_0,0,0)\in \mathcal{V}$. Both firms are profit maximizing. The emission tax budget set for the agent is
$\{z\in X: \hat{p}\cdot z\leq \frac{1}{4}(1+v_0)\}$. Hence, $\hat{x}_{v_0}$ is an element of the emission tax demand set $D^{t}(\hat{x}_{v_0},\hat{y}_{v_0},\hat{p})$, which implies that $(\hat{x}_{v_0}, \hat{y}_{v_0}, \hat{p})$ is a $\mathcal{V}$-compliant emission tax equilibrium for the emission tax rate $\frac{1}{4}$. 
\end{proof}

\subsection{Detailed Analysis of \cref{nonlinearexample}}\label{appendixexp3}
We provide rigorous proofs for \cref{uniqueequilclaim} to complete the analysis of \cref{nonlinearexample}.

\begin{proof}[Proof of \cref{uniqueequilclaim}]
Pick $t_0\leq \frac{1}{4}$. Let $\bar{x}_{t_0}=(0,0,1-4t_{0}^{2})$, $\bar{y}_{t_0}=\big((1,-1,1),(-4t_{0},0,-4t_{0}^{2})\big)$ and $\bar{p}_{t_0}=(-t_{0},\frac{1}{2}-t_{0},\frac{1}{2})$. We claim that $(\bar{x}_{t_0},\bar{y}_{t_0},\bar{p}_{t_0})$ is a $\mathcal{V}$-compliant emission tax equilibrium with emission tax rate $t_0$ on $\mathrm{CO}_{2}$. It is clear that the first firm is profit maximizing and $\bar{x}_{t_0}-e-\sum_{j\in J}\bar{y}_{t_0}(j)=(4t_{0}-1,0,0)\in \mathcal{V}$. The second firm's profit at $\bar{p}_{t_0}$, as a function of production, is given by $2rt_0-\frac{1}{2}r^{2}$. Thus, the second firm maximize its profit at $(-4t_{0},0,-4t_{0}^{2})$, and its profit is $2t_{0}^{2}$. The agent's emission tax budget set is:
\[
\{z\in X: \bar{p}\cdot z\leq \frac{1}{2}-t_{0}+2t_{0}^{2}+(1-4t_0)t_0=\frac{1}{2}-2t_{0}^{2})\}. \nonumber
\]
Hence, $\bar{x}_{t_0}$ is an element of the emission tax demand set $D^{t_0}(\bar{x}_{t_0},\bar{y}_{t_0},\bar{p}_{t_0})$, which implies that $(\bar{x}_{t_0},\bar{y}_{t_0},\bar{p}_{t_0})$ is a $\mathcal{V}$-compliant emission tax equilibrium with emission tax rate $t_0$. 

We now show that $(\bar{x}_{t_0},\bar{y}_{t_0},\bar{p}_{t_0})$ is the only $\mathcal{V}$-compliant emission tax equilibrium with emission tax rate $t_0$. 
Suppose $(\hat{x},\hat{y},\hat{p})$ is a $\mathcal{V}$-compliant emission tax equilibrium with emission tax rate $t_0$. By the form of the agent's utility function, we know that $\hat{p}_{3}>0$. Note that the second firm's profit at $\hat{p}$, as a function of production, is given by $2rt_{0}-r^{2}\hat{p}_{3}$. So the second firm maximize its profit by producing at $(-\frac{2t_0}{\hat{p}_{3}},0, -(\frac{t_0}{\hat{p}_{3}})^{2})$. As $(\hat{x},\hat{y},\hat{p})$ is a $\mathcal{V}$-compliant emission tax equilibrium and the total emission of $\mathrm{CO}_{2}$ can not exceed $1$ unit, so we have $\frac{2t_0}{\hat{p}_{3}}\leq 1$, which implies that $\hat{p}_{3}\geq 2t_0$. Note that we must have $t_0+\hat{p}_{2}\geq \hat{p}_{3}$ since otherwise the first firm's profit is unbounded. Thus, we conclude that $\hat{p}_{2}\geq t_0$ so the agent's endowment is positive, which further implies that the agent's budget at equilibrium is positive. By the form of the agent's utility function, the agent spends all its budget to consume electricity hence the equilibrium production for the first firm must not be $(0,0,0)$. Thus, we must have $\hat{p}_{3}=\hat{p}_{2}+t_0$. Since $\hat{p}\in \Delta$, we have $\hat{p}=(-t_0,\frac{1}{2}-t_0,\frac{1}{2})$. So the second firm's equilibrium production is $(-4t_0,0,-4t_0^{2})$. Suppose the first firm's equilibrium production is $(r,-r,r)$. Then the agent's emission tax budget set is:
\[\label{emitaxbudget}
\{z\in X: \hat{p}\cdot z\leq \frac{1}{2}-t_0+2t_{0}^{2}+(r-4t_{0})t_{0}=\frac{1}{2}-(1-r)t_0-2t_0^{2}\}. \nonumber
\]
So the emission tax demand set $D^{t_0}(\hat{x},\hat{y},\hat{p})$ is $\{(0,0,1-2(1-r)t_0-4t_{0}^{2})\}$. As $\hat{y}_{1}+\hat{y}_{2}=(r-4t_0, -r, r-4t_0^{2})$, so there are $r-4t_0^{2}$ unit of electricity in the economy available to the agent. 
If $r<1$, then we have $r-4t_0^{2}<1-2(1-r)t_0-4t_0^{2}$. So the agent has enough budget to consume more electricity than what is available to her. As a result, the first firm's equilibrium production must be $(1,-1,1)$, which implies that $\hat{x}=(0,0,1-4t_0^{2})$. Hence, $(\bar{x}_{t_0},\bar{y}_{t_0},\bar{p}_{t_0})$ is the unique $\mathcal{V}$-compliant tax equilibrium with the emission tax rate $t_0$. 

We now show that there is no $\mathcal{V}$-compliant emission tax equilibrium if the emission tax rate is greater than $\frac{1}{4}$. Suppose $t>0$ is an emission tax rate on $\mathrm{CO}_{2}$ under which there is a $\mathcal{V}$-compliant emission tax equilibrium $(\bar{x}_{t},\bar{y}_{t},\bar{p}_{t})$. By the same argument as in \cref{uniqueequilclaim}, the second firm's equilibrium production is $(-4t_0,0,-4t_0^{2})$. As the total emission of $\mathrm{CO}_{2}$ can not exceed $1$ unit, this implies that $t_0\leq \frac{1}{4}$.
So there is no $\mathcal{V}$-compliant emission tax equilibrium if the tax rate is greater than $\frac{1}{4}$. 
\end{proof}

\subsection{Detailed Analysis of \cref{cbtaxequil}}\label{appendixexp4}
We provide rigorous proofs for \cref{fueltaxext} and \cref{eqconsumpcompare} to complete the analysis of \cref{cbtaxequil}. 

\begin{proof}[Proof of \cref{fueltaxext}]
We first show that there exists a $\mathcal{V}$-compliant fuel tax equilibrium for any tax rate $t\in [0, 1)$. We break our analysis into two cases:
\begin{enumerate}
\item Suppose $t<\frac{1}{2}$. Let $(x,y,p)$ be such that $x=(0,0,1)$, $y=\big((1,-1,1),(0,0,0)\big)$ and $p=(0,\frac{1}{2}-t,\frac{1}{2})$. It is easy to see that $(x,y,p)$ is a $\mathcal{V}$-compliant fuel tax equilibrium;
\item Suppose $t\geq \frac{1}{2}$. Let $(x,y,p)$ be such that $x=(0,0,0)$, $y=\big((0,0,0),(0,0,0)\big)$ and $p=(0,0,1-t)$. It is easy to see that $(x,y,p)$ is a $\mathcal{V}$-compliant fuel tax equilibrium.
\end{enumerate}

We now show that there exists a $\mathcal{V}$-compliant fuel tax equilibrium for any given total net $\mathrm{CO}_{2}$ emission level $v\in [0, 1]$. 
Given $v\in [0, 1]$, let $(x_{v},y_{v},p_{v})$ be such that $x_{v}=(0,0,v)$, $y_{v}=\big((v,-v,v),(0,0,0)\big)$ and $p_{v}=(0,0,\frac{1}{2})$. It is straightforward to verify that $(x_{v},y_{v},p_{v})$ is a $\mathcal{V}$-compliant fuel tax equilibrium such that the total net $\mathrm{CO}_{2}$ emission is $v$.
\end{proof}

\begin{proof}[Proof of \cref{eqconsumpcompare}]
Given a total net $\mathrm{CO}_{2}$ emission level $v$, let $(\bar{x}_{v},\bar{y}_{v},\bar{p}_{v})$ be a $\mathcal{V}$-compliant fuel tax equilibrium such that the total net $\mathrm{CO}_{2}$ emission is $v$. We break our analysis into the following two cases:
\begin{enumerate}
\item We first consider the case where $\bar{p}_{v}(1)\geq 0$. By the agent's utility function, we have $\bar{p}_{v}(3)>0$. Hence, the equilibrium production for the second firm is $(0,0,0)$. As the total net $\mathrm{CO}_{2}$ emission is $v$, the first firm's equilibrium production must be $(v,-v,v)$. Hence, the agent's equilibrium consumption of electricity is bounded by $v$. Recall that the agent's equilibrium electricity consumption is $\frac{1+v}{2}$ in the corresponding emission tax equilibrium. It is clear that $v\leq \frac{1+v}{2}$ for all $v\in [0, 1]$, and the equality holds only when $v=1$;

\item We then consider the case where $\bar{p}_{v}(1)<0$. In particular, we focus on the case where the second firm's equilibrium production is not $(0,0,0)$ since otherwise it is the same as the above case. As the second firm's equilibrium production is not $(0,0,0)$, we conclude that $\bar{p}_{v}(3)=-2\bar{p}_{v}(1)$. Note that the first firm's equilibrium production must also not be $(0,0,0)$ hence we have $\bar{p}_{v}(2)+t=\bar{p}_{v}(3)+\bar{p}_{v}(1)=-\bar{p}_{v}(1)$. Suppose the first firm's equilibrium production is $(r,-r,r)$. Then the agent's budget is given by $\bar{p}_{v}(2)+r\big(\bar{p}_{v}(1)+\bar{p}_{v}(3)-\bar{p}_{v}(2)\big)$. By the form of the agent's utility function, the agent's equilibrium electricity consumption is given by $r-\big(\frac{r}{2}-\frac{(1-r)\bar{p}_{v}(2)}{\bar{p}_{v}(3)}\big)$.  The second firm's equilibrium production which maximizes the agent's utility is given by $(r-\frac{2(1-r)\bar{p}_{v}(2)}{\bar{p}_{v}(3)}, 0, \frac{r}{2}-\frac{(1-r)\bar{p}_{v}(2)}{\bar{p}_{v}(3)})$.\footnote{The agent's utility is maximized when the second firm uses all the remaining electricity to sequester $\mathrm{CO}_{2}$.} As a result, the total net $\mathrm{CO}_{2}$ emission is given by $\frac{2(1-r)\bar{p}_{v}(2)}{\bar{p}_{v}(3)}$. Hence, if we are given the total net $\mathrm{CO}_{2}$ emission $v$ (that is, $\frac{2(1-r)\bar{p}_{v}(2)}{\bar{p}_{v}(3)}=v$), the agent's equilibrium consumption is $\frac{r+v}{2}$. It is clear that $\frac{r+v}{2}\leq \frac{1+v}{2}$ for all $r\in [0, 1]$, and the equality holds only when $r=1$ which leads to $v=0$. 
\end{enumerate} 
Combining the two cases together, we obtain the desired result. 
\end{proof}

\section{Bibliography}
\printbibliography[heading=none]

\end{document}